\documentclass[10pt,a4paper]{article}

\usepackage[a4paper,margin=1.5cm]{geometry}

\usepackage{amsmath,amssymb,amsthm}
\usepackage{graphicx}
\usepackage{booktabs}
\usepackage{array}
\usepackage{longtable}
\usepackage{multirow}
\usepackage{caption}
\usepackage{setspace}
\usepackage{enumitem}
\usepackage{hyperref}
\usepackage{xcolor}
\usepackage{pgfplots}
\usepackage{tikz}
\usepackage{pgfplotstable}
\usepackage{tikz}
\usepackage{amsthm}
\usepackage{array}
\usepackage{placeins}
\usepackage{cite}
\usepackage{float}

\theoremstyle{plain}
\newtheorem{theorem}{Theorem}[section]
\newtheorem{lemma}[theorem]{Lemma}
\newtheorem{corollary}[theorem]{Corollary}
\newtheorem{proposition}[theorem]{Proposition}

\theoremstyle{definition}

\theoremstyle{remark}

\usetikzlibrary{fillbetween}
\pgfplotsset{compat=1.18}

\usetikzlibrary{
arrows.meta,
calc,
positioning,
patterns,
fillbetween
}

\pgfplotsset{compat=1.18}

\hypersetup{
    colorlinks=true,
    linkcolor=blue,
    citecolor=blue,
    urlcolor=blue
}

\title{
\bfseries
Sensitivity of evolutionary entropy and reproductive potential in Lefkovitch and open-group Leslie matrices
}

\author{Henrique M. Oliveira$^{1,2}$}
\date{\today}

\begin{document}

\maketitle
\begin{center}
\small
$^{1}$ Departamento de Matemática, Instituto Superior Técnico,
Universidade de Lisboa, 1049-001 Lisboa, Portugal\\
$^{2}$ Center for Mathematical Analysis, Geometry and Dynamical Systems
(CAMGSD), Instituto Superior Técnico, Universidade de Lisboa,
1049-001 Lisboa, Portugal\\
E-mail: henrique.m.oliveira@tecnico.ulisboa.pt
\end{center}
\begin{abstract}
\begingroup
\color{black}

We develop a sensitivity theory for the Malthusian parameter
\(r\), evolutionary entropy \(H\), and reproductive potential
\(\Phi\) in Lefkovitch matrices, with open-group Leslie matrices as a
special case. Starting from the dominant eigenvalue and the
Markov chain associated with the projection matrix, we
derive explicit formulas for the long-run stage distribution,
generation time, evolutionary entropy, and reproductive potential.
Evolutionary entropy and reproductive potential both admit exact
transition--retention decompositions, with distinct interpretations:
for \(H\) the separation concerns sources of demographic uncertainty,
whereas for \(\Phi\) it partitions the demographic links themselves.

For small changes to positive coefficients, we obtain
closed-form sensitivities for fertility, progression, and retention.
The sensitivity of \(r\) follows from the standard formula for a
simple dominant eigenvalue, while that of \(\Phi\) follows from the
exact identity
\[
r = H + \Phi.
\]
The formulas specialise to open-group Leslie
matrices, in which retention occurs only in the terminal stage, and
reduce to the classical Leslie theory when retention vanishes.

We also distinguish a change to an existing demographic link from the
creation of a fertility or retention link whose coefficient was zero.
When a new link is created, \(r\) still has a finite one-sided
derivative, whereas the leading changes in \(H\) and \(\Phi\) are
opposite terms of order \(\varepsilon\log\varepsilon\). Empirical
examples show that growth, evolutionary entropy, and reproductive
potential can respond differently to the same demographic change.
They also show that stage retention may account for most of \(H\)
without determining whether \(H\) rises or falls when retention is
increased.

\endgroup
\end{abstract}

\noindent\textbf{Keywords:} evolutionary entropy; reproductive potential;
Lefkovitch matrices; open-group Leslie matrices; sensitivity analysis;
matrix population models.

\medskip

\noindent\textbf{MSC 2020:}
92D25, 15B48, 60J10, 94A17.

\section{Introduction}\label{sec:1}

\begingroup
\color{black}
Structured population models describe populations whose demographic
rates depend on age, size, developmental stage or physiological state.
Since the pioneering work of Leslie
\cite{Leslie1945,Leslie1948}, population projection matrices have
become a fundamental tool for the analysis of population growth,
stable demographic structure and reproductive value. Lefkovitch
\cite{Lefkovitch1965} extended the classical age-structured model by
allowing individuals to remain within the same demographic stage for
more than one projection interval. This extension is essential whenever
biological state cannot be represented adequately by chronological age
alone, and Lefkovitch-type models are now widely used in animal and
plant demography, ecology and conservation biology
\cite{
Caswell,
Tuljapurkar1990,
MorrisDoak2002,
StubbenMilligan2007,
SalgueroGomez2015COMPADRE,
SalgueroGomez2016COMADRE,
Shefferson2021,
Elaydi2024}.

The matrices considered in this paper have the form
\begin{equation}
\label{eq:Lefkovitch}
L=
\begin{pmatrix}
m_1&m_2&\cdots&m_d\\
b_1&c_2&\cdots&0\\
0&b_2&\ddots&\vdots\\
\vdots&&\ddots&c_d
\end{pmatrix},
\end{equation}
where \(d\) is the number of demographic stages,
\(m_j\geq0\) is the fertility contribution of stage \(j\) to the first
stage, \(b_j>0\), \(j=1,\ldots,d-1\), is the progression coefficient
from stage \(j\) to stage \(j+1\), and \(c_j\geq0\),
\(j=2,\ldots,d\), is the retention coefficient in stage \(j\).
The classical Leslie model is recovered when
\[
c_2=\cdots=c_d=0,
\]
whereas an open-group Leslie matrix is obtained when retention occurs
only in the terminal stage,
\[
c_2=\cdots=c_{d-1}=0,
\qquad
c_d>0.
\]
For the empirical classification below, \emph{pure Lefkovitch}
denotes matrices of the form \eqref{eq:Lefkovitch} with positive
retention in at least one internal stage; the three classes are
therefore disjoint.

The corresponding life-cycle graphs are shown schematically in
Figure~\ref{fig:LefkovitchOpenGroupGraphs}.

\begin{figure}[ht]
\centering

\begin{minipage}{0.48\textwidth}
\centering

\resizebox{\textwidth}{!}{%
\begin{tikzpicture}[
    state/.style={
        circle,
        draw,
        minimum size=8mm,
        inner sep=1pt
    },
    transition/.style={
        ->,
        thick
    },
    every node/.style={font=\small}
]

\node[state] (x1) at (0,0) {$1$};
\node[state] (x2) at (1.8,0) {$2$};
\node        (dots) at (3.4,0) {$\cdots$};
\node[state] (xdm1) at (5.0,0) {$d-1$};
\node[state] (xd) at (6.8,0) {$d$};

\draw[transition]
    (x1) -- node[above] {$b_1$} (x2);

\draw[transition]
    (x2) -- node[above] {$b_2$} (dots);

\draw[transition]
    (dots) -- node[above] {$b_{d-2}$} (xdm1);

\draw[transition]
    (xdm1) -- node[above] {$b_{d-1}$} (xd);

\draw[transition]
    (x2) edge[loop above] node[above] {$c_2$} (x2);

\draw[transition]
    (xdm1) edge[loop above] node[above] {$c_{d-1}$} (xdm1);

\draw[transition]
    (xd) edge[loop right] node[right] {$c_d$} (xd);

\draw[transition]
    (x2.south)
    to[out=-120,in=-60,looseness=1.0]
    node[below] {$m_2$}
    (x1.south);

\draw[transition]
    (xdm1.south)
    to[out=-105,in=-75,looseness=1.15]
    node[pos=.55,below] {$m_{d-1}$}
    (x1.south);

\draw[transition]
    (xd.south)
    to[out=-100,in=-80,looseness=1.32]
    node[pos=.63,below] {$m_d$}
    (x1.south);

\draw[transition]
    (x1) edge[loop left] node[left] {$m_1$} (x1);

\end{tikzpicture}
}

\vspace{1mm}
\textbf{(a) Lefkovitch}

\end{minipage}
\hfill
\begin{minipage}{0.48\textwidth}
\centering

\resizebox{\textwidth}{!}{%
\begin{tikzpicture}[
    state/.style={
        circle,
        draw,
        minimum size=8mm,
        inner sep=1pt
    },
    transition/.style={
        ->,
        thick
    },
    every node/.style={font=\small}
]

\node[state] (x1) at (0,0) {$1$};
\node[state] (x2) at (1.8,0) {$2$};
\node        (dots) at (3.4,0) {$\cdots$};
\node[state] (xdm1) at (5.0,0) {$d-1$};
\node[state] (xd) at (6.8,0) {$d$};

\draw[transition]
    (x1) -- node[above] {$b_1$} (x2);

\draw[transition]
    (x2) -- node[above] {$b_2$} (dots);

\draw[transition]
    (dots) -- node[above] {$b_{d-2}$} (xdm1);

\draw[transition]
    (xdm1) -- node[above] {$b_{d-1}$} (xd);

\draw[transition]
    (xd) edge[loop right] node[right] {$\rho$} (xd);

\draw[transition]
    (x2.south)
    to[out=-120,in=-60,looseness=1.0]
    node[below] {$m_2$}
    (x1.south);

\draw[transition]
    (xdm1.south)
    to[out=-105,in=-75,looseness=1.15]
    node[pos=.55,below] {$m_{d-1}$}
    (x1.south);

\draw[transition]
    (xd.south)
    to[out=-100,in=-80,looseness=1.32]
    node[pos=.63,below] {$m_d$}
    (x1.south);

\draw[transition]
    (x1) edge[loop left] node[left] {$m_1$} (x1);

\end{tikzpicture}
}

\vspace{1mm}
\textbf{(b) Open-group Leslie}

\end{minipage}

\caption{Schematic life-cycle graphs for the demographic structures
considered in this paper. 
(a) A Lefkovitch matrix allows fertility transitions \(m_j\),
progression coefficients \(b_j\), and retention coefficients \(c_j\)
in internal and terminal stages.
(b) In an open-group Leslie matrix, retention is absent from the
internal stages and occurs only in the terminal stage, with
\(c_d=\rho>0\). Only representative fertility and retention links are
shown.}
\label{fig:LefkovitchOpenGroupGraphs}

\end{figure}

The biological relevance of these structures is apparent in large
empirical demographic databases. Using the versions of the COMADRE
Animal Matrix Database \cite{SalgueroGomez2016COMADRE} and the COMPADRE
Plant Matrix Database \cite{SalgueroGomez2015COMPADRE} accessed on
29 May 2026, we classified the available projection matrices according
to the structures defined above. Among the 3211 COMADRE matrices containing at least one positive
fertility entry, 88.01\% belong to one of the three classes considered
here, and open-group Leslie matrices alone account for 58.46\%. In COMPADRE, the corresponding proportion is
\(18.82\%\), with pure Lefkovitch matrices forming the largest of the
three classes. The complete classification is given in
Table~\ref{tab:databaseprevalence}.

\begin{table}[ht]
\centering
\footnotesize
\caption{Structural classification of projection matrices in COMADRE
and COMPADRE, accessed on 29 May 2026. Percentages refer to matrices containing at least one positive fertility entry.}
\label{tab:databaseprevalence}
\setlength{\tabcolsep}{4.5pt}
\setlength{\extrarowheight}{3pt}
\begin{tabular}{|l|r|r|r|r|r|}
\hline
\textbf{Database}
&
\textbf{Usable}
&
\textbf{Pure Leslie}
&
\textbf{Open-group Leslie}
&
\textbf{Pure Lefkovitch}
&
\textbf{Other}
\tabularnewline
\hline
COMADRE
&
3211
&
651 (20.27\%)
&
1877 (58.46\%)
&
298 (9.28\%)
&
385 (11.99\%)
\tabularnewline
\hline
COMPADRE
&
8163
&
149 (1.83\%)
&
531 (6.50\%)
&
856 (10.49\%)
&
6627 (81.18\%)
\tabularnewline
\hline
\end{tabular}
\end{table}

The contrast between the two databases is also informative. Leslie
and especially open-group Leslie structures are prominent in animal
demography, whereas plant demography is dominated by more general
stage-transition structures. The theory developed here therefore
concerns the Lefkovitch structure \eqref{eq:Lefkovitch} and its Leslie
specialisations; it is not intended as a theory for arbitrary
stage-structured projection matrices.

Let \(\lambda>0\) denote the dominant eigenvalue of \(L\). For a unit
projection interval, the long-term population growth rate is the
Malthusian parameter
\[
r=\log\lambda.
\]
Population growth, however, is only one component of the demographic
description developed by Demetrius. His statistical and bioenergetic
theory relates the Malthusian parameter \(r\) to evolutionary entropy
\(H\) and reproductive potential \(\Phi\)
\cite{Demetrius74,Demetrius75,Demetrius78}. In the classical
age-structured setting,
\[
H=\frac{S}{T},
\]
where \(S\) represents demographic diversity over the reproductive
life cycle and \(T\) is generation time. In a Lefkovitch population,
persistence within stages introduces an additional contribution to
this demographic diversity, which is absent from the classical Leslie
model.

The three quantities are linked by the fundamental relation
\[
r=H+\Phi.
\]
In the bioenergetic interpretation developed by Demetrius, \(r\)
represents the net demographic rate associated with the conversion of
environmental resources into biological work, \(H\) the component
associated with demographic diversity over the life cycle, and
\(\Phi\) the complementary contribution of the resource environment
to demographic growth
\cite{Demetrius97,Demetrius2000}.
Thus \(r\), \(H\) and \(\Phi\) should be regarded as intrinsically
related demographic quantities rather than as independent population
descriptors.

Sensitivity analysis of population growth has a long history.
Demetrius \cite{Demetrius69} derived perturbation expressions for the
dominant eigenvalue of Leslie matrices, and sensitivity and elasticity
analysis subsequently became a standard component of matrix population
theory
\cite{Caswell,Caswell2019}. Demetrius, Gundlach and Ziehe
\cite{Demetrius07} later developed explicit sensitivity formulae for
evolutionary entropy in the classical age-structured Leslie model.
Their work established the natural starting point for the present
problem, but allowing individuals to remain within a stage changes
both the associated Markov chain and the demographic coefficients that
can vary.

Related work by Steiner and Tuljapurkar \cite{SteinerTuljapurkar2020}
derived sensitivities of population entropy for stage-structured Markov
chains with respect to their transition probabilities. Here we derive
explicit sensitivities directly for the fertility, progression, and
retention coefficients of the Lefkovitch projection matrix. We also
consider changes that create new demographic links.

The present paper develops a joint sensitivity theory for
\[
r,\ H,\ \Phi
\]
in irreducible Lefkovitch matrices and in their open-group Leslie
specialisation. Here and throughout, irreducibility refers to the
reproductive component of the life cycle; the full projection matrix
may additionally contain downstream post-reproductive stages. For
\[
\theta\in
\{m_1,\ldots,m_d,\,
  b_1,\ldots,b_{d-1},\,
  c_2,\ldots,c_d\},
\]
we determine
\[
\frac{\partial r}{\partial\theta},
\qquad
\frac{\partial H}{\partial\theta},
\qquad
\frac{\partial\Phi}{\partial\theta}.
\]
The sensitivity of \(r\) follows from the standard formula for a
simple dominant eigenvalue. The main mathematical task is therefore to
derive the response of \(H\) for a Lefkovitch matrix; the response of
reproductive potential then follows from
\[
\frac{\partial\Phi}{\partial\theta}
=
\frac{\partial r}{\partial\theta}
-
\frac{\partial H}{\partial\theta}.
\]
This identity links all three sensitivity calculations below.

Some fertility or retention coefficients may be zero. We must
therefore distinguish between changing a link that is already present
and making a previously absent link positive.

A matrix rescaling result of Alves and Oliveira
\cite{AlvesOliveira2015} provides an additional reason to work with
the original Lefkovitch matrix. Their result converts any irreducible
Lefkovitch matrix into a pseudo-Leslie matrix with the same
eigenvalues, and therefore the same \(r\). However, the converted
matrix may contain negative pseudo-fertilities and need not describe a
biologically admissible population. It also need not preserve the
Markov chain used to define \(H\) and \(\Phi\). Their formulas and
sensitivities must therefore be derived from the original demographic
matrix.

The relation with the classical Leslie sensitivity theory is
summarised in Table~\ref{tab:DGZcomparison}.

The paper makes four main contributions. First, it gives explicit
formulas for \(H\) and \(\Phi\) and develops exact
transition--retention decompositions, with distinct interpretations
for evolutionary entropy and reproductive potential.
Second, it gives closed-form sensitivities of \(r\), \(H\), and
\(\Phi\) to fertility, progression, and retention. Third, it shows why
creating a new demographic link differs from changing an existing one:
the new link produces opposite logarithmic terms in \(H\) and
\(\Phi\), although \(r\) still has a finite derivative. Finally, the
general formulas recover the classical Leslie model and give explicit
expressions for open-group Leslie matrices.

The paper is organised as follows. Section~\ref{sec:2} constructs the
Markov chain associated with an irreducible Lefkovitch matrix and
derives explicit formulas for the long-run stage distribution,
generation time, evolutionary entropy, and reproductive potential. It
also gives the open-group Leslie case. Section~\ref{sec:3} separates
both \(H\) and \(\Phi\) into transition and retention contributions
and recovers the Leslie model when retention vanishes.
Section~\ref{sec:4} derives the sensitivities of \(r\), \(H\), and
\(\Phi\), treating existing and newly created demographic links
separately. Section~\ref{sec:5} applies the formulas to empirical
matrices, and Section~\ref{sec:6} discusses their biological and
mathematical implications.

To keep the biological argument accessible without sacrificing
mathematical precision, the main text presents the results as a
continuous demographic narrative centred on the numbered formulas.
Appendix~\ref{app:SensitivityProofs} collects the corresponding formal
statements---theorems, propositions, corollaries, and lemmas---together
with their proofs. References are provided in both directions: each
result in the main text points to its precise appendix subsection, and
each formal statement in the appendix identifies the equations and the
subsection in which the result is introduced and interpreted.

\begin{table}[H]
\centering
\footnotesize
\caption{Comparison between the sensitivity theory of
Demetrius--Gundlach--Ziehe \cite{Demetrius07} for classical Leslie
matrices and the Lefkovitch theory developed in the present paper.}
\label{tab:DGZcomparison}
\setlength{\tabcolsep}{5pt}
\setlength{\extrarowheight}{5pt}

\begin{tabular}{|p{3.1cm}|p{4.2cm}|p{4.8cm}|}
\hline
\textbf{Feature}
&
\centering\textbf{DGZ \cite{Demetrius07}}
&
\centering\textbf{Present paper}
\tabularnewline
\hline

Population model
&
Classical age-structured Leslie matrix; no stage retention
&
Lefkovitch matrix with explicit stage retention; open-group Leslie
matrices arise as a special case
\tabularnewline
\hline

Demographic variables
&
Fertility \(m_j\) and survival \(b_j\); survivorship
\(\ell_j=b_1\cdots b_{j-1}\) and reproductive contribution
\(V_j=\ell_jm_j\)
&
Fertility \(m_j\), progression \(b_j\), and retention \(c_j\)
\tabularnewline
\hline

Sensitivity targets
&
Malthusian parameter \(r\) and evolutionary entropy \(H\)
&
Malthusian parameter \(r\), evolutionary entropy \(H\), and
reproductive potential \(\Phi\)
\tabularnewline
\hline

Perturbation variables
&
\(m_j,\ b_j,\ \ell_j,\ V_j\)
&
\(m_j,\ b_j,\ c_j\)
\tabularnewline
\hline

Stage-retention sensitivities
&
Absent
&
\(
\partial r/\partial c_j,\quad
\partial H/\partial c_j,\quad
\partial\Phi/\partial c_j
\)
\tabularnewline
\hline

Leslie and open-group cases
&
Classical Leslie model
&
Classical Leslie:
\(c_2=\cdots=c_d=0\);
open-group Leslie:
\(c_2=\cdots=c_{d-1}=0,\ c_d>0\)
\tabularnewline
\hline

\end{tabular}
\end{table}

\endgroup

\section{Evolutionary entropy and reproductive potential for Lefkovitch matrices}\label{sec:2}

\begingroup
\color{black}

Let \(L\) be the irreducible Lefkovitch matrix%
\footnote{\textcolor{black}{The irreducibility assumption concerns the
part of the life cycle that supports reproduction. The full projection
matrix may still contain post-reproductive stages downstream of this
part, as can occur in species with menopause. If those stages have a
lower growth rate than the reproductive part, their long-run
probability is zero. The quantities \(r\), \(H\), and \(\Phi\), and
their sensitivities to changes that leave this downstream structure
unchanged, are then determined by the reproductive part
\cite{StottEtAl2010}. We do not consider life cycles containing two or
more separate groups that can each sustain reproduction.}}%
defined in \eqref{eq:Lefkovitch}, and denote by \(\lambda>0\) its
dominant eigenvalue. Let
\[
u=(u_1,\ldots,u_d)^T
\]
be a corresponding positive right eigenvector. For a unit
projection interval, the associated Malthusian parameter is
\[
r=\log\lambda.
\]

Throughout the paper, empty sums are interpreted as zero, empty
products as \(1\), and terms of the form \(0\log0\) are interpreted
as zero, by continuity.

Our objective in this section is to translate the entries of \(L\)
into quantities with a direct demographic meaning. We first calculate
the relative contribution of reproduction in each stage. We then
rescale the projection matrix to obtain an associated Markov chain.
Its long-run stage distribution supplies the weights used to calculate
\(H\) and \(\Phi\), while \(r\) comes directly from \(\lambda\).

\endgroup

\subsection{From demographic stages to the distribution of reproduction}
\label{subsec:reproductive-probabilities}

The first step is to follow an individual from the first stage to each
possible reproductive stage. Rather than carrying the arbitrary scale
of the eigenvector through the calculation, we describe the
relative contribution of each stage.
\begin{samepage}
For stage \(j\), the required coefficient is
\begin{equation}
R_1=1,
\qquad
R_j=
\prod_{h=1}^{j-1}
\frac{b_h}{\lambda-c_{h+1}},
\qquad
j=2,\ldots,d.
\label{RjDefinition}
\end{equation}
\end{samepage}
The product \(R_j\) accumulates the effects of progression and
retention on the way to stage \(j\). The right eigenvector can then be
written stage by stage as
\begin{equation}
u_j=u_1R_j,
\qquad
j=1,\ldots,d.
\label{PerronRepresentation}
\end{equation}
This makes it natural to assign the reproductive weight
\begin{equation}
q_j=
\frac{m_jR_j}{\lambda},
\qquad
j=1,\ldots,d,
\label{qjDefinition}
\end{equation}
to stage \(j\). These weights are non-negative and, crucially, they
satisfy
\begin{equation}
\sum_{j=1}^{d}q_j=1.
\label{qjNormalization}
\end{equation}
Thus they form a probability distribution. Written directly in terms
of the demographic coefficients, their normalisation is
\begin{equation}
\sum_{j=1}^{d}
\frac{m_jR_j}{\lambda}
=
1.
\label{GeneralizedEulerLotka}
\end{equation}
Equation~\eqref{GeneralizedEulerLotka} is the generalised
Euler--Lotka equation for an irreducible Lefkovitch matrix. It plays
the same role as the familiar age-structured equation: it balances all
stages that contribute to reproduction after accounting for population
growth. The corresponding result and derivation of
equations~\eqref{PerronRepresentation}--
\eqref{GeneralizedEulerLotka} are given in
Appendix~\ref{Proof:EulerLotka}, the subsection devoted specifically
to this result.

\begingroup
\color{black}

The probabilities \(q_j\) describe the distribution of reproductive
transitions generated by \(L\). In the Leslie case they reduce to the
classical reproductive distribution used by Demetrius, Gundlach and
Ziehe.

\endgroup

\subsection{How often each stage is occupied}
\label{subsec:stationary-pathway}

The distribution \(q=(q_1,\ldots,q_d)\) tells us where reproduction
occurs along the stage sequence, but not how frequently the growing
population occupies each stage. To obtain those occupancy weights, let
\begin{equation}
U=
\operatorname{diag}(u_1,\ldots,u_d),
\label{DiagonalMatrix}
\end{equation}
and define
\begin{equation}
P=
\frac1\lambda
U^{-1}LU.
\label{MarkovMatrix}
\end{equation}
Each row of \(P\) sums to one, so \(P\) is the transition matrix of a
Markov chain naturally associated with \(L\). This rescaling leaves
the life-cycle links unchanged and converts their long-term weights
into transition probabilities.

For each stage after the first, define the probabilities of
progression and retention by
\begin{equation}
\eta_i=
\frac{\lambda-c_i}{\lambda},
\qquad
\rho_i=
\frac{c_i}{\lambda},
\qquad
i=2,\ldots,d,
\label{EtaRhoDefinition}
\end{equation}
and let
\begin{equation}
Q_i=
\sum_{j=i}^{d}q_j,
\qquad
i=1,\ldots,d.
\label{TailDistribution}
\end{equation}
be the probability that reproduction occurs in stage \(i\) or a later
stage. The stationary distribution
\[
\pi=(\pi_1,\ldots,\pi_d)
\]
of \(P\) is then
\begin{equation}
\pi_1=\frac1T,
\qquad
\pi_i=
\frac{Q_i}{\eta_iT},
\qquad
i=2,\ldots,d,
\label{StationaryDistribution}
\end{equation}
\begin{samepage}
\noindent where the normalising quantity
\begin{equation}
T=
1+
\sum_{i=2}^{d}
\frac{Q_i}{\eta_i},
\label{TDefinition}
\end{equation}
has the interpretation of generation time for this Markov-chain
description. Equation~\eqref{StationaryDistribution} has a simple
reading: occupancy of stage \(i\) increases with the probability
\(Q_i\) of reaching that stage and with the expected time spent there,
represented by \(1/\eta_i\). The balance calculation establishing
equations~\eqref{StationaryDistribution}--\eqref{TDefinition} is given
in Appendix~\ref{proof:StationaryDistribution}, whose theorem
refers back to the definitions in
equations~\eqref{EtaRhoDefinition}--\eqref{TDefinition}.
\end{samepage}

\begingroup
\color{black}

The pair \((P,\pi)\) provides the probabilities and long-run stage
weights needed to calculate both evolutionary entropy and reproductive
potential. The probability of remaining in stage \(i\) is \(\rho_i\),
whereas \(\eta_i=1-\rho_i\) is the complementary probability
associated with progression into that stage.

\endgroup

\subsection{Entropy and reproductive potential}
\label{subsec:entropy-potential}

We can now measure two sources of demographic uncertainty: the stage
at which reproduction occurs and, within each later stage, whether an
individual progresses or remains there. The evolutionary entropy of
\(L\) is
\begin{equation}
H=
\frac{S}{T},
\label{EntropyRepresentation}
\end{equation}
where its numerator is
\begin{equation}
S=
-\sum_{j=1}^{d}
q_j\log q_j
-
\sum_{i=2}^{d}
\frac{Q_i}{\eta_i}
\left(
\eta_i\log\eta_i
+
\rho_i\log\rho_i
\right),
\label{EntropyS}
\end{equation}
and \(T\) is given by \eqref{TDefinition}. The first sum in
\eqref{EntropyS} measures uncertainty in the stage at which
reproduction occurs. The second measures, stage by stage, the
additional uncertainty between progressing and remaining in the same
stage. Division by \(T\) gives the uncertainty per unit of generation
time. A derivation from the entropy rate of the Markov chain is
given in Appendix~\ref{Proof:EntropyFormula}; the theorem there refers
explicitly to equations~\eqref{EntropyRepresentation}--
\eqref{EntropyS} and to the present subsection.

\begingroup
\color{black}

Using the same Markov chain, reproductive potential is defined by
\begin{equation}
\Phi=
\sum_{\substack{1\leq i,j\leq d\\L_{ij}>0}}
\pi_iP_{ij}\log L_{ij}.
\label{PotentialDefinition}
\end{equation}
For the Lefkovitch matrix \eqref{eq:Lefkovitch}, the non-zero
transition probabilities of \(P\) are
\[
P_{1j}=q_j
\quad\text{when }m_j>0,
\qquad
P_{i,i-1}=\eta_i,
\qquad
P_{ii}=\rho_i
\quad\text{when }c_i>0,
\]
for \(i=2,\ldots,d\). Each positive demographic link is therefore
weighted by the stationary frequency with which it is used and by the
logarithm of its coefficient. This gives \(\Phi\) directly in terms of
the demographic coefficients.

For the irreducible Lefkovitch matrix \(L\), this weighted average is
\begin{equation}
\Phi
=
\pi_1
\sum_{\substack{1\leq j\leq d\\m_j>0}}
q_j\log m_j
+
\sum_{i=2}^{d}
\pi_i\eta_i\log b_{i-1}
+
\sum_{\substack{2\leq i\leq d\\c_i>0}}
\pi_i\rho_i\log c_i .
\label{PotentialRepresentation}
\end{equation}
Using \eqref{StationaryDistribution}, this becomes
\begin{equation}
\Phi=
\frac1T
\left[
\sum_{\substack{1\leq j\leq d\\m_j>0}}
q_j\log m_j
+
\sum_{i=2}^{d}
Q_i\log b_{i-1}
+
\sum_{\substack{2\leq i\leq d\\c_i>0}}
\frac{Q_i\rho_i}{\eta_i}\log c_i
\right].
\label{PotentialExplicit}
\end{equation}
Terms corresponding to \(m_j=0\) or \(c_i=0\) may equivalently be
included by continuous extension.

\eqref{PotentialRepresentation} follows immediately from
\eqref{PotentialDefinition} and the explicit entries of \(P\), while
\eqref{PotentialExplicit} follows by substitution of
\eqref{StationaryDistribution}. The corresponding result, including
its relation to the entropy formula, is stated in
Appendix~\ref{app:PotentialFormula}.

The two demographic quantities obtained from the associated Markov
chain satisfy the fundamental Demetrius identity
\begin{equation}
r=\log\lambda=H+\Phi.
\label{GrowthEntropyPotentialIdentity}
\end{equation}
Indeed, for every positive entry \(L_{ij}\),
\[
P_{ij}
=
\frac{L_{ij}u_j}{\lambda u_i},
\]
and therefore
\[
\log L_{ij}-\log P_{ij}
=
\log\lambda+\log u_i-\log u_j.
\]
The weights \(\pi_iP_{ij}\) give the long-run frequency of each
transition. Averaging with these weights makes the two eigenvector
terms cancel, giving \eqref{GrowthEntropyPotentialIdentity}.

Thus \(r\), \(H\), and \(\Phi\) are calculated from the same
demographic matrix. The quantity \(r=\log\lambda\) gives long-term
growth, while \(H\) and \(\Phi\) describe complementary aspects of
the life cycle. Equations
\eqref{EntropyS} and \eqref{PotentialExplicit} also show explicitly
that stage retention contributes to both quantities. Their
transition--retention decompositions are developed in the next
section.

\endgroup

\subsection{The open-group Leslie case}
\label{subsec:open-group-representation}

\begingroup
\color{black}

The general expressions become especially transparent for an
open-group Leslie matrix. Consider the specialisation of
\eqref{eq:Lefkovitch}, obtained by setting
\[
c_2=\cdots=c_{d-1}=0,
\qquad
c_d=\rho>0.
\]
Only the terminal class can now retain individuals. For every internal
stage,

\endgroup

\[
\eta_i=1,
\qquad
\rho_i=0,
\qquad
i=2,\ldots,d-1,
\]
and
\[
\eta_d=\frac{\lambda-\rho}{\lambda},
\qquad
\rho_d=\frac{\rho}{\lambda}.
\]

\begingroup
\color{black}

Consequently, all internal retention terms vanish and the numerator
\(S\) becomes

\endgroup

\begin{equation}
S=
-\sum_{j=1}^{d}q_j\log q_j
-q_d\log\!\left(\frac{\lambda-\rho}{\lambda}\right)
-\frac{q_d\rho}{\lambda-\rho}
\log\!\left(\frac{\rho}{\lambda}\right),
\label{OpenGroupS}
\end{equation}

\begingroup
\color{black}

The generation time is

\endgroup

\begin{equation}
T=
1+\sum_{i=2}^{d-1}Q_i
+\frac{\lambda q_d}{\lambda-\rho}.
\label{OpenGroupT}
\end{equation}

\begingroup
\color{black}

Dividing \(S\) by this generation time gives

\endgroup

\begin{equation}
H=
\frac{
-\displaystyle\sum_{j=1}^{d}q_j\log q_j
-q_d\log\!\left(\frac{\lambda-\rho}{\lambda}\right)
-\dfrac{q_d\rho}{\lambda-\rho}
\log\!\left(\frac{\rho}{\lambda}\right)
}
{T}.
\label{OpenGroupH}
\end{equation}

\begingroup
\color{black}

The same substitution in the potential gives
\begin{equation}
\Phi=
\frac1T
\left[
\sum_{\substack{1\leq j\leq d\\m_j>0}}
q_j\log m_j
+
\sum_{i=2}^{d}
Q_i\log b_{i-1}
+
\frac{q_d\rho}{\lambda-\rho}\log\rho
\right].
\label{OpenGroupPhi}
\end{equation}
Thus the open-group Leslie formulas follow directly from the general
Lefkovitch expressions. The terminal term records the
repeated residence of individuals in the open class, and
\[
\log\lambda=H+\Phi
\]
continues to hold without a separate derivation.

The complete open-group case
\eqref{OpenGroupS}--\eqref{OpenGroupPhi} is stated and linked to the
general Lefkovitch formulas in
Appendix~\ref{app:OpenGroupRepresentation}.

\endgroup

\begingroup
\color{black}

\section{Decomposition of evolutionary entropy and reproductive potential}
\label{sec:3}
\endgroup

\begingroup
\color{black}

The formulas in the previous section can be organised according to
two features of the life cycle. The first is the distribution of
reproduction across stages. The second is the uncertainty created by
stage persistence: whenever retention is possible, an individual may
either progress or remain in the same stage. We first separate these
two contributions in evolutionary entropy and then introduce the
corresponding transition--retention decomposition of reproductive
potential.

\endgroup


\subsection{Two sources of demographic uncertainty}
\label{subsec:decomposition}

Equation~\eqref{EntropyS} contains two biologically distinct sources
of uncertainty. The first concerns the stage at which reproduction
occurs. The second arises when stage persistence makes both progression
and retention possible within the same stage. Keeping these two sources
visible, evolutionary entropy may be written as
\begin{equation}
H=
H_{\mathrm{tr}}
+
H_{\mathrm{ret}},
\label{EntropyDecomposition}
\end{equation}
where
\begin{equation}
H_{\mathrm{tr}}
=
-\pi_1
\sum_{j=1}^{d}
q_j\log q_j,
\label{ReproductiveEntropy}
\end{equation}
and
\begin{equation}
H_{\mathrm{ret}}
=
-\sum_{i=2}^{d}
\pi_i
\left(
\eta_i\log\eta_i
+
\rho_i\log\rho_i
\right).
\label{RetentionEntropy}
\end{equation}

\begingroup
\color{black}

The decomposition follows directly from the Markov-chain expression
for entropy. The first row of the associated Markov matrix contains
the reproductive transitions, with probabilities \(q_j\), and
contributes \(H_{\mathrm{tr}}\). For each subsequent stage, the
associated row contains the two alternatives represented by
\(\eta_i\) and \(\rho_i\), and these rows together contribute
\(H_{\mathrm{ret}}\). The same decomposition can be read directly
from the derivation of
equations~\eqref{EntropyRepresentation}--\eqref{EntropyS} in
Appendix~\ref{Proof:EntropyFormula}. The corresponding theorem and the
short grouping argument are given in
Appendix~\ref{app:EntropyDecomposition}.

The term \(H_{\mathrm{tr}}\) is therefore the contribution associated
with uncertainty in the distribution \(q_j\) of reproductive
transitions across stages, weighted by the long-run probability
\(\pi_1\) of the first stage. In the Leslie case it reduces to the
classical evolutionary-entropy contribution considered by Demetrius,
Gundlach and Ziehe.

The interpretation of \(H_{\mathrm{ret}}\) requires some care.
It is not the entropy contribution of the retention self-loops alone.
Rather, it measures the additional uncertainty generated when stage
persistence creates a choice between progression and retention. For
each \(i\geq2\), the pair \((\eta_i,\rho_i)\) gives the two
corresponding probabilities in the associated Markov chain. If
\(c_i=0\), then \(\rho_i=0\) and \(\eta_i=1\); progression is then
deterministic at that stage and the entire contribution
\[
-\pi_i
\left(
\eta_i\log\eta_i+\rho_i\log\rho_i
\right)
\]
vanishes. Thus \(H_{\mathrm{ret}}\) is precisely the entropy generated
by the possibility of remaining within a stage, even though its local
binary entropy contains both the progression and retention
probabilities.

This decomposition groups terms according to their demographic
origin; it does not make the two components independent. Both
\(H_{\mathrm{tr}}\) and \(H_{\mathrm{ret}}\) depend, through
\(\lambda\), \(q_j\), \(\eta_i\), \(\rho_i\), and \(\pi\), on the
demographic matrix as a whole.

Reproductive potential admits a related, but not identical,
decomposition. In this case the separation is directly between
demographic coefficients associated with transitions through the life
cycle---fertility and progression---and coefficients associated with
retention self-loops.

Accordingly,
\begin{equation}
\Phi=
\Phi_{\mathrm{tr}}
+
\Phi_{\mathrm{ret}},
\label{PotentialDecomposition}
\end{equation}
where
\begin{equation}
\Phi_{\mathrm{tr}}
=
\pi_1
\sum_{\substack{1\leq j\leq d\\m_j>0}}
q_j\log m_j
+
\sum_{i=2}^{d}
\pi_i\eta_i\log b_{i-1},
\label{TransitionPotential}
\end{equation}
and
\begin{equation}
\Phi_{\mathrm{ret}}
=
\sum_{\substack{2\leq i\leq d\\c_i>0}}
\pi_i\rho_i\log c_i.
\label{RetentionPotential}
\end{equation}

Equations~\eqref{PotentialDecomposition}--\eqref{RetentionPotential}
follow immediately from \eqref{PotentialRepresentation} by separating
the fertility and progression entries from the retention entries.
Thus \(\Phi_{\mathrm{tr}}\) contains the contributions of demographic
links that generate reproduction or progression, whereas
\(\Phi_{\mathrm{ret}}\) contains precisely the contributions of
retention self-loops. The corresponding proposition and its short
proof are given in Appendix~\ref{app:PotentialDecomposition}.

The notation ``transition'' and ``retention'' therefore has slightly
different roles in the two decompositions. For \(H\),
\(H_{\mathrm{tr}}\) is the entropy of the reproductive-transition
distribution, while \(H_{\mathrm{ret}}\) is the binary uncertainty
between progression and retention created by stage persistence. For
\(\Phi\), by contrast, \(\Phi_{\mathrm{tr}}\) and
\(\Phi_{\mathrm{ret}}\) are obtained by an explicit partition of the
demographic links themselves. In neither case are the resulting
components independent functions of separate sets of demographic
parameters.

Combining equations~\eqref{EntropyDecomposition} and
\eqref{PotentialDecomposition} with
\eqref{GrowthEntropyPotentialIdentity} gives the exact identity
\begin{equation}
r
=
\bigl(
H_{\mathrm{tr}}+\Phi_{\mathrm{tr}}
\bigr)
+
\bigl(
H_{\mathrm{ret}}+\Phi_{\mathrm{ret}}
\bigr).
\label{GrowthDecomposition}
\end{equation}
Equation~\eqref{GrowthDecomposition} is an exact regrouping of the
fundamental identity \(r=H+\Phi\). It does not define separate
transition and retention growth rates, nor does it imply that the two
parenthesised terms are dynamically independent. Its purpose is only
to keep visible the two structural features introduced above.

\endgroup


\begingroup
\color{black}

\subsection{Recovering the age-structured model}
\label{subsec:leslie-limit}

The connection with the classical Leslie model is obtained by removing
stage persistence. Consider the Lefkovitch matrix while keeping its
fertility coefficients \(m_j\) and progression coefficients \(b_j\)
fixed and letting
\[
c_i\to0,
\qquad
i=2,\ldots,d.
\]
Let \(\lambda_L\) denote the dominant eigenvalue of the resulting
Leslie matrix. Throughout this subsection, the superscript \((0)\)
denotes the corresponding limiting value at zero retention. Thus,
using equations~\eqref{RjDefinition}, \eqref{qjDefinition}, and
\eqref{TailDistribution},
\[
q_j^{(0)}
=
\frac{m_j}{\lambda_L}
\prod_{h=1}^{j-1}
\frac{b_h}{\lambda_L}
=
\frac{m_j b_1\cdots b_{j-1}}{\lambda_L^{\,j}},
\qquad
Q_i^{(0)}
=
\sum_{j=i}^{d}q_j^{(0)},
\]
with the empty product interpreted as \(1\). From
\eqref{StationaryDistribution}, the corresponding generation time is
\[
T_L
=
1+\sum_{i=2}^{d}Q_i^{(0)}.
\]

As \(c_i\to0\), one has
\[
\rho_i\to0,
\qquad
\eta_i\to1,
\]
so the uncertainty associated with progression versus retention
disappears. Consequently,
\[
H_{\mathrm{ret}}\longrightarrow0.
\]
At the same time, every retention self-loop disappears from
reproductive potential, and
\[
\Phi_{\mathrm{ret}}\longrightarrow0.
\]

The Lefkovitch entropy therefore converges to
\begin{equation}
H_L=
-\frac1{T_L}
\sum_{j=1}^{d}
q_j^{(0)}
\log q_j^{(0)},
\label{LeslieEntropyFormula}
\end{equation}
which is precisely the evolutionary entropy of the corresponding
Leslie model.

The reproductive potential converges simultaneously to
\begin{equation}
\Phi_L
=
\frac1{T_L}
\left[
\sum_{\substack{1\leq j\leq d\\m_j>0}}
q_j^{(0)}\log m_j
+
\sum_{i=2}^{d}
Q_i^{(0)}\log b_{i-1}
\right],
\label{LesliePotentialFormula}
\end{equation}
and therefore
\[
r_L=H_L+\Phi_L,
\qquad
r_L=\log\lambda_L.
\]

\endgroup

\begingroup
\color{black}

The entropy limit is proved in Appendix~\ref{Proof:LeslieLimit}. The
corresponding result for reproductive potential follows directly from
\eqref{PotentialExplicit}. Indeed, as \(c_i\to0\),
\(\rho_i=c_i/\lambda\to0\), while \(\lambda\) converges to the
positive Leslie eigenvalue \(\lambda_L\). Hence
\[
\rho_i\log c_i
=
\frac{c_i}{\lambda}\log c_i
\longrightarrow0,
\]
because \(x\log x\to0\) as \(x\downarrow0\). At the same time,
\(q_j\), \(Q_i\), and \(T\) converge to the limiting quantities
\(q_j^{(0)}\), \(Q_i^{(0)}\), and \(T_L\) defined above. No additional
limiting argument for \(\Phi\) is therefore required.

The Leslie limit removes both effects introduced by stage persistence:
the binary progression--retention uncertainty in
\(H_{\mathrm{ret}}\) and the explicit retention-link contribution
\(\Phi_{\mathrm{ret}}\). What remains is exactly the classical
age-structured description of evolutionary entropy and reproductive
potential.

\endgroup

\section{Joint sensitivity of growth, evolutionary entropy and reproductive potential}
\label{sec:4}

\subsection{A common starting point}
\label{subsec:sensitivity-start}

\begingroup
\color{black}

The identity established in
\eqref{GrowthEntropyPotentialIdentity},
\[
r=H+\Phi,
\qquad
r=\log\lambda,
\]
provides the starting point for the sensitivity calculations.
Here a sensitivity describes the response of one of the three
quantities to a small change in a demographic coefficient, with the
remaining coefficients held fixed. For any such change
for which the three quantities are differentiable,
\begin{equation}
dr=dH+d\Phi.
\label{SensitivityIdentity}
\end{equation}
Thus the sensitivities of \(r\), \(H\), and \(\Phi\) are not
independent. The sensitivity of the Malthusian parameter is classical,
whereas the principal calculation required here is the response of
evolutionary entropy to changes in fertility, progression, and
retention. Once \(dr\) and \(dH\) are known, \(d\Phi\) follows from
\eqref{SensitivityIdentity}.

Let \(v=(v_1,\ldots,v_d)\) be a positive left eigenvector associated
with \(\lambda\), normalised together with the right eigenvector \(u\) by
\[
vu=1.
\]

Consider first the familiar response of population growth. For an
infinitesimal change \(dL\) in an irreducible Lefkovitch matrix,
\begin{equation}
d\lambda=v(dL)u,
\qquad
dr=\frac{1}{\lambda}v(dL)u.
\label{GrowthDifferential}
\end{equation}
Consequently,
\begin{equation}
\frac{\partial r}{\partial m_k}
=
\frac{v_1u_k}{\lambda},
\qquad
\frac{\partial r}{\partial b_k}
=
\frac{v_{k+1}u_k}{\lambda},
\qquad
\frac{\partial r}{\partial c_k}
=
\frac{v_ku_k}{\lambda},
\label{GrowthSensitivityFormulae}
\end{equation}
with the natural index ranges
\[
k=1,\ldots,d
\quad\text{for }m_k,
\qquad
k=1,\ldots,d-1
\quad\text{for }b_k,
\qquad
k=2,\ldots,d
\quad\text{for }c_k.
\]
This is the standard perturbation formula for a simple dominant
eigenvalue. It also has a useful demographic interpretation. In a link
from stage \(j\) to stage \(i\), \(u_j\) measures the long-term
abundance of the source stage and \(v_i\) measures the
reproductive value of the destination. Their product, divided by
\(\lambda\), is the growth response to a small change in that link.
No new perturbation theory is therefore required for \(r\). The
precise theorem corresponding to
equations~\eqref{GrowthDifferential}--\eqref{GrowthSensitivityFormulae}
and their short dominant-eigenvalue proof are included in
Appendix~\ref{app:GrowthSensitivity}.

A distinction that will be important below is already visible here.
Because the dominant eigenvalue is simple, the matrix-entry derivative
in \eqref{GrowthSensitivityFormulae} remains finite when a fertility or
retention coefficient is zero. At such a boundary point the admissible
demographic perturbation is one-sided, but \(r\) still has an ordinary
first-order response. Evolutionary
entropy and reproductive potential behave differently at such boundary
points because their definitions contain logarithmic terms. We therefore
separate changes to links that are already present from perturbations
that create a previously absent link.

\subsection{Changes to existing demographic links}
\label{subsec:support-preserving}

We first consider changes to links that are already present. Define
\[
\mathcal F
=
\{j\in\{1,\ldots,d\}:m_j>0\},
\qquad
\mathcal R
=
\{i\in\{2,\ldots,d\}:c_i>0\}.
\]
An \emph{existing-link perturbation} keeps the set of demographic
links fixed: positive fertility and retention coefficients may vary, while
\(m_j=0\) for \(j\notin\mathcal F\) and
\(c_i=0\) for \(i\notin\mathcal R\) remain zero. The progression
coefficients \(b_i\) are positive by assumption and may therefore be
varied without changing the support. Changes that turn a zero fertility
or retention coefficient into a positive coefficient leave this
setting, create a new demographic link, and are treated
separately in Subsection~\ref{subsec:new-links}.

\subsubsection{First step: changes in reproductive and retention probabilities}
\label{subsubsec:normalized-pathways}

The calculation is easiest to follow in two layers. We first ask how
\(H\) responds when the reproductive and retention
probabilities change. We then translate those changes back into the
original fertility, progression, and retention coefficients.

Define the local retention term
\begin{equation}
h_i=
-\eta_i\log\eta_i
-
\rho_i\log\rho_i,
\qquad
i=2,\ldots,d.
\label{EntropyKernel}
\end{equation}
With the convention \(0\log0=0\), equation
\eqref{EntropyS} may be written as
\begin{equation}
S=
-\sum_{j=1}^{d}
q_j\log q_j
+
\sum_{i=2}^{d}
\frac{Q_i}{\eta_i}h_i.
\label{EntropyFunctional}
\end{equation}

The first step is to express the differential of \(H\) in terms of
the variables \(q_j\) and \(\eta_i\).


For an existing-link perturbation, the first step takes the compact
form
\begin{equation}
dH
=
\sum_{j\in\mathcal F}
C_j\,dq_j
+
\sum_{i\in\mathcal R}
D_i\,d\eta_i,
\label{EntropyDifferential}
\end{equation}
where
\begin{equation}
C_j=
\frac1T
\left[
-(1+\log q_j)
+
\sum_{i=2}^{j}
\frac{h_i-H}{\eta_i}
\right],
\qquad
j\in\mathcal F,
\label{CjDefinition}
\end{equation}
and
\begin{equation}
D_i=
\frac{Q_i}{T\eta_i^2}
\left(
H+\log\rho_i
\right),
\qquad
i\in\mathcal R.
\label{DiDefinition}
\end{equation}

The coefficients \(C_j\) and \(D_i\) are the coefficients of the
probability changes in the total differential \eqref{EntropyDifferential}.
The term \(C_j\,dq_j\) records the contribution produced by changing
the reproductive weight of stage \(j\), whereas \(D_i\,d\eta_i\)
records the contribution produced by changing the balance between
progression and retention at stage \(i\). These are not independent
probability perturbations: the \(q_j\) remain normalised and all the
quantities involved are determined by the same demographic matrix. Their
derivation is given in Appendix~\ref{Lemma1}.

The restriction to \(\mathcal F\) and \(\mathcal R\) is essential.
If \(m_j=0\), then \(q_j=0\) and an existing-link perturbation has
\(dq_j=0\). Likewise, if \(c_i=0\), then
\(\rho_i=0\), \(\eta_i=1\), and \(d\eta_i=0\) throughout an
existing-link perturbation. The zero entries do not change, so the
logarithmic terms generated by the creation of new links do not enter
this differential calculation.

\subsubsection{Second step: returning to demographic coefficients}
\label{subsubsec:parameter-differentials}

To make the preceding expression usable with empirical matrices, we
must express \(dq_j\) and \(d\eta_i\) in terms of the demographic
parameters.

The repeated appearance of \(\lambda\) in the formula for
reproduction at stage \(j\) can be collected in the auxiliary quantity
\begin{equation}
A_j=
\frac1\lambda
+
\sum_{h=1}^{j-1}
\frac1{\lambda-c_{h+1}},
\qquad
j=1,\ldots,d.
\label{AjDefinition}
\end{equation}
For \(j\in\mathcal F\), direct differentiation of the reproductive
weight gives
\begin{equation}
dq_j
=
q_j
\left[
\frac{dm_j}{m_j}
+
\sum_{h=1}^{j-1}
\frac{db_h}{b_h}
+
\sum_{h=1}^{j-1}
\frac{dc_{h+1}}{\lambda-c_{h+1}}
-
A_j\,d\lambda
\right],
\label{dqjFormula}
\end{equation}
where terms \(dc_{h+1}\) corresponding to
\(h+1\notin\mathcal R\) vanish under an existing-link
perturbation. The corresponding change in the progression probability
is
\begin{equation}
d\eta_i
=
\frac{c_i}{\lambda^2}d\lambda
-
\frac1\lambda dc_i,
\qquad
i\in\mathcal R.
\label{detaFormula}
\end{equation}

Equation~\eqref{dqjFormula} shows how each demographic coefficient
affects reproduction. A fertility change acts directly through
\(m_j\); a change in progression or retention affects every later
reproductive stage reached through that link; and \(d\lambda\)
accounts for the change in long-term growth. The details are given in
Appendix~\ref{Lemma2}.

\subsubsection{Assembling the entropy response}
\label{subsubsec:entropy-response}

The two layers can now be assembled into a single expression in the
demographic parameters. The quantity
\begin{equation}
B=
-\sum_{j\in\mathcal F}
C_jq_jA_j
+
\sum_{i\in\mathcal R}
D_i\frac{c_i}{\lambda^2},
\label{BDefinition}
\end{equation}
collects the part of the entropy response caused by the change in the
dominant eigenvalue. With this abbreviation,
\begin{align}
dH
&=
\sum_{j\in\mathcal F}
C_jq_j
\frac{dm_j}{m_j}
+
\sum_{k=1}^{d-1}
\left(
\sum_{\substack{j\in\mathcal F\\j\geq k+1}}
C_jq_j
\right)
\frac{db_k}{b_k}
\nonumber\\
&\quad
+
\sum_{k\in\mathcal R}
\left[
\frac1{\lambda-c_k}
\sum_{\substack{j\in\mathcal F\\j\geq k}}
C_jq_j
-
\frac{D_k}{\lambda}
\right]
dc_k
+
B\,d\lambda .
\label{IntermediateDifferential}
\end{align}

Equation~\eqref{IntermediateDifferential} is the bridge between the
probability-based description and the empirical matrix entries. Its
derivation is given in Appendix~\ref{Theorem1}.

Substituting the dominant-eigenvalue formula
\eqref{GrowthDifferential} eliminates \(d\lambda\) and gives the
coefficientwise sensitivities.

\subsubsection{Coefficientwise sensitivities}
\label{subsubsec:coefficientwise-sensitivities}

For an existing-link perturbation, the total entropy response is
\begin{align}
dH
&=
\sum_{k\in\mathcal F}
\frac{\partial H}{\partial m_k}\,dm_k
+
\sum_{k=1}^{d-1}
\frac{\partial H}{\partial b_k}\,db_k
\nonumber\\
&\quad
+
\sum_{k\in\mathcal R}
\frac{\partial H}{\partial c_k}\,dc_k,
\label{TotalSensitivityFormula}
\end{align}
where
\begin{equation}
\frac{\partial H}{\partial m_k}
=
\frac{C_kq_k}{m_k}
+
Bv_1u_k,
\qquad
k\in\mathcal F,
\label{SensitivityM}
\end{equation}
\begin{equation}
\frac{\partial H}{\partial b_k}
=
\frac1{b_k}
\sum_{\substack{j\in\mathcal F\\j\geq k+1}}
C_jq_j
+
Bv_{k+1}u_k,
\qquad
k=1,\ldots,d-1,
\label{SensitivityB}
\end{equation}
and
\begin{equation}
\frac{\partial H}{\partial c_k}
=
\frac1{\lambda-c_k}
\sum_{\substack{j\in\mathcal F\\j\geq k}}
C_jq_j
-
\frac{D_k}{\lambda}
+
Bv_ku_k,
\qquad
k\in\mathcal R.
\label{SensitivityC}
\end{equation}

In each of equations~\eqref{SensitivityM}--\eqref{SensitivityC}, the
terms containing \(C_j\) or \(D_i\) describe the direct change in
reproductive and retention probabilities. The final term, containing
\(Bv_i u_j\), describes the part caused by the change in long-term
population growth. The algebra leading to these expressions is given
in Appendix~\ref{Theorem2}.

Once the growth and entropy responses are known, reproductive
potential requires no additional differential calculation: combine
equations~\eqref{GrowthSensitivityFormulae} and
\eqref{SensitivityM}--\eqref{SensitivityC} with
\eqref{SensitivityIdentity}.

For an existing-link perturbation, this gives
\begin{equation}
\frac{\partial\Phi}{\partial m_k}
=
-\frac{C_kq_k}{m_k}
+
\left(
\frac1\lambda-B
\right)v_1u_k,
\qquad
k\in\mathcal F,
\label{PotentialSensitivityM}
\end{equation}
\begin{equation}
\frac{\partial\Phi}{\partial b_k}
=
-\frac1{b_k}
\sum_{\substack{j\in\mathcal F\\j\geq k+1}}
C_jq_j
+
\left(
\frac1\lambda-B
\right)v_{k+1}u_k,
\qquad
k=1,\ldots,d-1,
\label{PotentialSensitivityB}
\end{equation}
and
\begin{equation}
\frac{\partial\Phi}{\partial c_k}
=
-\frac1{\lambda-c_k}
\sum_{\substack{j\in\mathcal F\\j\geq k}}
C_jq_j
+
\frac{D_k}{\lambda}
+
\left(
\frac1\lambda-B
\right)v_ku_k,
\qquad
k\in\mathcal R.
\label{PotentialSensitivityC}
\end{equation}

These identities are immediate from
\[
\frac{\partial\Phi}{\partial\theta}
=
\frac{\partial r}{\partial\theta}
-
\frac{\partial H}{\partial\theta},
\]
so they require no further differential calculation. The corresponding
corollary is given in Appendix~\ref{app:PotentialSensitivity}, where it
is linked explicitly to equations~\eqref{SensitivityIdentity},
\eqref{GrowthSensitivityFormulae}, and
\eqref{SensitivityM}--\eqref{SensitivityC}.

The structure of these formulas is worth keeping in view. The entropy
sensitivity coefficients contain two conceptually
different contributions. The terms involving \(C_j\) and \(D_i\)
describe the direct response of reproductive, progression, and
retention probabilities, whereas the terms
\[
Bv_i u_j
\]
arise through the change in the dominant eigenvalue. Reproductive
potential contains the complementary contribution
\[
\left(\frac1\lambda-B\right)v_i u_j.
\]
Thus the exact identity \(dr=dH+d\Phi\) holds parameter by parameter.

\subsection{Creation of new demographic links}
\label{subsec:new-links}

\begin{figure}[!htbp]
\centering

\begin{minipage}{0.48\textwidth}
\centering
\begin{tikzpicture}[
    >=Latex,
    x=1cm,
    y=1cm,
    every node/.style={font=\small},
    stage/.style={
        circle,
        draw=black,
        thick,
        fill=white,
        minimum size=8mm,
        inner sep=0pt
    },
    prog/.style={->,thick,black},
    existing/.style={->,thick,black},
    absent/.style={->,thick,dashed,draw=gray!160},
    newlink/.style={->,very thick,draw=blue},
    lab/.style={fill=white,inner sep=1.5pt}
]

\node[stage] (a1) at (0,0) {$1$};
\node[stage] (a2) at (1.65,0) {$2$};
\node[stage] (a3) at (3.30,0) {$3$};
\node[stage] (a4) at (4.95,0) {$4$};

\draw[prog]
    (a1) -- node[below=3pt,lab] {$b_1$} (a2);

\draw[prog]
    (a2) -- node[below=3pt,lab] {$b_2$} (a3);

\draw[prog]
    (a3) -- node[below=3pt,lab] {$b_3$} (a4);

\draw[existing]
    (a2) to[out=120,in=60,looseness=1.10]
    node[pos=0.50,below=2pt,lab] {$m_2$}
    (a1);

\draw[absent]
    (a3) to[out=125,in=55,looseness=1.25]
    node[pos=0.50,above=2pt,lab,text=gray!160] {$m_3=0$}
    (a1);

\draw[newlink]
    (a3) to[out=235,in=305,looseness=1.20]
    node[pos=0.50,below=2pt,lab,text=blue] {$m_3=\varepsilon$}
    (a1);

\node[font=\small] at (2.48,-1.90)
    {(a) Creation of a new fertility link};

\end{tikzpicture}
\end{minipage}
\hfill
\begin{minipage}{0.48\textwidth}
\centering
\begin{tikzpicture}[
    >=Latex,
    x=1cm,
    y=1cm,
    every node/.style={font=\small},
    stage/.style={
        circle,
        draw=black,
        thick,
        fill=white,
        minimum size=8mm,
        inner sep=0pt
    },
    prog/.style={->,thick,black},
    existing/.style={->,thick,black},
    absent/.style={->,thick,dashed,draw=gray!160},
    newlink/.style={->,very thick,draw=blue},
    lab/.style={fill=white,inner sep=1.5pt}
]

\node[stage] (b1) at (0,0) {$1$};
\node[stage] (b2) at (1.65,0) {$2$};
\node[stage] (b3) at (3.30,0) {$3$};
\node[stage] (b4) at (4.95,0) {$4$};

\draw[prog]
    (b1) -- node[below=3pt,lab] {$b_1$} (b2);

\draw[prog]
    (b2) -- node[below=3pt,lab] {$b_2$} (b3);

\draw[prog]
    (b3) -- node[below=3pt,lab] {$b_3$} (b4);

\draw[existing]
    (b2) to[out=120,in=60,looseness=1.10]
    node[pos=0.50,above=2pt,lab] {$m_2$}
    (b1);

\draw[existing]
    (b4) edge[loop above,min distance=10mm,looseness=7]
    node[above=2pt,lab] {$c_4$}
    (b4);

\draw[absent]
    (b3) edge[loop above,min distance=11mm,looseness=7]
    node[above=2pt,lab,text=gray!160] {$c_3=0$}
    (b3);

\draw[newlink]
    (b3) edge[loop below,min distance=11mm,looseness=7]
    node[below=2pt,lab,text=blue] {$c_3=\varepsilon$}
    (b3);

\node[font=\small] at (2.48,-1.90)
    {(b) Creation of a new retention link};

\end{tikzpicture}
\end{minipage}

\vspace{0.5em}

\caption{
Boundary perturbations created by turning a zero coefficient into a
positive one. Black arrows denote demographic links already present in
the projection matrix. A grey dashed arrow indicates a link absent at
baseline ($\varepsilon=0$), whereas the blue arrow represents the same
link after creation ($\varepsilon>0$). Panel~(a) shows the creation of
a new fertility link, $m_3=\varepsilon$, and panel~(b) the creation of
a new retention link, $c_3=\varepsilon$. These perturbations differ
qualitatively from changes to existing links:
$r(\varepsilon)=r(0)+O(\varepsilon)$, whereas evolutionary entropy and
reproductive potential have opposite leading terms of order
$\varepsilon\log\varepsilon$, namely
$H(\varepsilon)=H(0)-\kappa\varepsilon\log\varepsilon+O(\varepsilon)$
and
$\Phi(\varepsilon)=\Phi(0)+\kappa\varepsilon\log\varepsilon+O(\varepsilon)$,
where \(\kappa>0\) denotes the corresponding coefficient
\(\kappa_k^{(m)}\) or \(\kappa_k^{(c)}\) defined below.
}
\label{fig:new-links}
\end{figure}

The preceding differential formulas describe changes that keep the set
of demographic links fixed. A different situation occurs at the boundary
of this parameter region, when
a coefficient that is initially zero is made positive. Within the
Lefkovitch structure \eqref{eq:Lefkovitch}, this happens when a
previously absent fertility link \(m_k=0\) or retention link
\(c_k=0\) is created. Since the progression coefficients satisfy
\(b_k>0\) by assumption, their perturbations remain within the existing
support.

Figure~\ref{fig:new-links} illustrates the two boundary perturbations
considered in this subsection. Because the dominant eigenvalue is simple,
both the perturbed eigenvalue \(\lambda(\varepsilon)\) and
\(r(\varepsilon)=\log\lambda(\varepsilon)\) vary smoothly as the new
coefficient leaves zero. Evolutionary entropy and reproductive
potential, however, contain logarithmic terms associated with the newly
created probability and generally do not have finite one-sided
derivatives.

Let \(L_0\) denote the baseline irreducible Lefkovitch matrix. Write
\(\lambda_0\), \(T_0\), and \(\pi^{(0)}\) for its dominant
eigenvalue, generation time, and stationary distribution. Throughout
this subsection, a superscript \((0)\) denotes evaluation at
\(L_0\). Thus \(R_j^{(0)}\), \(q_j^{(0)}\), and
\(Q_j^{(0)}\) are the quantities defined in
\eqref{RjDefinition}, \eqref{qjDefinition}, and
\eqref{TailDistribution}, evaluated at the baseline matrix. We write
\(u\) and \(v\) for the corresponding right and left eigenvectors of
\(L_0\), normalised by \(vu=1\).

\paragraph{Creating a fertility link.}
Suppose \(m_k=0\) in \(L_0\) and define \(L_\varepsilon\) by
replacing \(m_k\) with \(\varepsilon>0\), all other entries being
fixed. From \eqref{qjDefinition},
\[
q_k(\varepsilon)
=
\frac{\varepsilon R_k(\varepsilon)}{\lambda(\varepsilon)}
=
\frac{R_k^{(0)}}{\lambda_0}\,\varepsilon
+
O(\varepsilon^2).
\]
Thus the probability carried by the newly created fertility link is
of first order in \(\varepsilon\). As \(\varepsilon\downarrow0\),
\begin{align}
r(\varepsilon)
&=
r(0)
+
\frac{v_1u_k}{\lambda_0}\varepsilon
+
O(\varepsilon^2),
\label{BoundaryGrowthFertility}\\
H(\varepsilon)
&=
H(0)
-
\kappa_k^{(m)}
\varepsilon\log\varepsilon
+
O(\varepsilon),
\label{BoundaryEntropyFertility}\\
\Phi(\varepsilon)
&=
\Phi(0)
+
\kappa_k^{(m)}
\varepsilon\log\varepsilon
+
O(\varepsilon),
\label{BoundaryPotentialFertility}
\end{align}
where
\begin{equation}
\kappa_k^{(m)}
=
\frac{R_k^{(0)}}{\lambda_0T_0}
>0.
\label{BoundaryFertilityCoefficient}
\end{equation}
The coefficient in \eqref{BoundaryFertilityCoefficient} follows directly
from the new entropy term. By
\eqref{StationaryDistribution}, \(\pi_1^{(0)}=1/T_0\), so
\[
-\pi_1(\varepsilon)q_k(\varepsilon)\log q_k(\varepsilon)
=
-\frac{R_k^{(0)}}{\lambda_0T_0}
\,\varepsilon\log\varepsilon
+
O(\varepsilon).
\]
This is the origin of \(\kappa_k^{(m)}\). Since \(r=H+\Phi\) and
\(r\) has an ordinary first-order expansion, the singular contribution
to \(\Phi\) has the opposite sign.

\paragraph{Creating a retention link.}
Suppose \(c_k=0\) in \(L_0\) and define \(L_\varepsilon\) by
replacing \(c_k\) with \(\varepsilon>0\), all other entries being
fixed. From \eqref{EtaRhoDefinition},
\[
\rho_k(\varepsilon)
=
\frac{\varepsilon}{\lambda(\varepsilon)}
=
\frac{\varepsilon}{\lambda_0}
+
O(\varepsilon^2).
\]
At the baseline matrix, \(c_k=0\) implies
\(\rho_k^{(0)}=0\) and \(\eta_k^{(0)}=1\). Therefore
\eqref{StationaryDistribution} gives
\[
\pi_k^{(0)}
=
\frac{Q_k^{(0)}}{T_0}.
\]
These two relations determine the coefficient of the logarithmic term.
Then
\begin{align}
r(\varepsilon)
&=
r(0)
+
\frac{v_ku_k}{\lambda_0}\varepsilon
+
O(\varepsilon^2),
\label{BoundaryGrowthRetention}\\
H(\varepsilon)
&=
H(0)
-
\kappa_k^{(c)}
\varepsilon\log\varepsilon
+
O(\varepsilon),
\label{BoundaryEntropyRetention}\\
\Phi(\varepsilon)
&=
\Phi(0)
+
\kappa_k^{(c)}
\varepsilon\log\varepsilon
+
O(\varepsilon),
\label{BoundaryPotentialRetention}
\end{align}
where
\begin{equation}
\kappa_k^{(c)}
=
\frac{\pi_k^{(0)}}{\lambda_0}
=
\frac{Q_k^{(0)}}{\lambda_0T_0}
>0.
\label{BoundaryRetentionCoefficient}
\end{equation}
Indeed, the new non-regular contribution to the retention part of the
entropy is
\[
-\pi_k(\varepsilon)
\left[
\eta_k(\varepsilon)\log\eta_k(\varepsilon)
+
\rho_k(\varepsilon)\log\rho_k(\varepsilon)
\right]
=
-\frac{\pi_k^{(0)}}{\lambda_0}
\,\varepsilon\log\varepsilon
+
O(\varepsilon),
\]
and the equality \(\pi_k^{(0)}=Q_k^{(0)}/T_0\) gives the second form
in \eqref{BoundaryRetentionCoefficient}. Again, the corresponding
logarithmic term in \(\Phi\) has the opposite sign because
\(r=H+\Phi\).

Consequently, in either case,
\begin{equation}
\frac{dH}{d\varepsilon}\bigg|_{0^+}
=
+\infty,
\qquad
\frac{d\Phi}{d\varepsilon}\bigg|_{0^+}
=
-\infty,
\label{BoundarySingularSensitivity}
\end{equation}
whereas the one-sided derivative of \(r\) is finite.

The paired signs have a clear consequence. A newly created link has a
finite first-order effect on growth, whereas the leading changes in
\(H\) and \(\Phi\) have magnitude \(\varepsilon|\log\varepsilon|\).
Since \(\varepsilon|\log\varepsilon|/\varepsilon\to\infty\) as
\(\varepsilon\downarrow0\), the initial redistribution between entropy
and potential is asymptotically larger than the corresponding
first-order change in growth. The derivation of these leading terms is
given in Appendix~\ref{Proof:BoundaryLinks}.

Equations~\eqref{BoundaryEntropyFertility}--
\eqref{BoundarySingularSensitivity} distinguish continuity from
differentiability as a demographic coefficient leaves zero. The
conventions used in Section~2 make \(H\) and \(\Phi\) continuous when
a fertility or retention coefficient tends to zero, but the creation
of the corresponding link is generally not described by a finite
entropy or potential sensitivity coefficient.

Changes that create matrix entries outside the Lefkovitch
pattern \eqref{eq:Lefkovitch} are not considered here, since they
leave the class of population matrices studied in this paper.

\subsection{Leslie and open-group reductions}
\label{subsec:sensitivity-reductions}

The sensitivity formulas recover the
Demetrius--Gundlach--Ziehe age-structured theory by approaching the
Leslie model from within the Lefkovitch class. More precisely, consider
a family in which the same fertility links remain present and the
retention coefficients are positive before tending to zero:
\[
c_i\to0,
\qquad
i=2,\ldots,d.
\]
Along these existing-link directions, the sensitivities with respect to
fertility and progression converge to the corresponding Leslie
sensitivities. In particular, the evolutionary-entropy sensitivities
reduce to the Demetrius--Gundlach--Ziehe formulas, the growth
sensitivities reduce to the classical Leslie perturbation formulas, and
\[
\frac{\partial\Phi}{\partial\theta}
=
\frac{\partial r}{\partial\theta}
-
\frac{\partial H}{\partial\theta}
\]
gives the corresponding Leslie sensitivity of reproductive potential
for
\[
\theta\in
\{m_1,\ldots,m_d,b_1,\ldots,b_{d-1}\}
\]
whenever the relevant fertility coefficient is positive.

This limiting statement must not be confused with a derivative in a
retention direction at the Leslie matrix itself. At the Leslie boundary
there is no positive retention coefficient to vary: making \(c_i=0\)
positive creates a new retention link. Its one-sided behaviour is
therefore described by equations~\eqref{BoundaryGrowthRetention}--
\eqref{BoundarySingularSensitivity}, not by the interior formula
\eqref{SensitivityC}.

The convergence of the entropy sensitivities in the Leslie parameter
directions is proved in Appendix~\ref{Corollary1}. The statements for \(r\)
follow from the standard dominant-eigenvalue formula, and those for
\(\Phi\) follow from \eqref{SensitivityIdentity}.

For an open-group Leslie matrix, the same reduction is made without
removing terminal retention. Consider the specialisation
\[
c_2=\cdots=c_{d-1}=0,
\qquad
c_d=\rho>0.
\]
For changes to its existing links,
\[
\mathcal R=\{d\},
\]
equations~\eqref{GrowthSensitivityFormulae} and
\eqref{SensitivityM}--\eqref{PotentialSensitivityC} apply with this
specialisation.

In particular, for the terminal open-group retention coefficient,
\begin{equation}
\frac{\partial r}{\partial\rho}
=
\frac{v_du_d}{\lambda},
\label{OpenGroupGrowthSensitivity}
\end{equation}
\begin{equation}
\frac{\partial H}{\partial\rho}
=
\frac{C_dq_d}{\lambda-\rho}
-
\frac{D_d}{\lambda}
+
Bv_du_d,
\label{OpenGroupEntropySensitivity}
\end{equation}
and
\begin{equation}
\frac{\partial\Phi}{\partial\rho}
=
-\frac{C_dq_d}{\lambda-\rho}
+
\frac{D_d}{\lambda}
+
\left(
\frac1\lambda-B
\right)v_du_d.
\label{OpenGroupPotentialSensitivity}
\end{equation}
The sensitivities with respect to positive fertility coefficients and
to progression coefficients are obtained from the general formulas by
the same specialisation.

By contrast, changing one of the internal coefficients
\[
c_i=0,
\qquad
2\leq i<d,
\]
to a positive value creates a new retention link; it is not a change
to an existing open-group link. Its one-sided behaviour is
described by equations~\eqref{BoundaryGrowthRetention}--
\eqref{BoundarySingularSensitivity}.

The open-group case and its precise relation to the general
existing-link formulas are stated in
Appendix~\ref{app:OpenGroupSensitivity}.

The formulas above provide, for every admissible demographic
parameter, the corresponding response of population growth,
evolutionary entropy and reproductive potential. They therefore give
the quantities required for the empirical comparisons developed in
the next section.

\subsection{A global scaling check}
\label{subsec:uniform-scaling}

The coefficientwise formulas also satisfy a useful whole-matrix check.
Multiply every positive demographic coefficient by the same factor,
while leaving every zero coefficient equal to zero. This rescaling
preserves the demographic support. The relative transition
probabilities, and hence \(H\), do not change, whereas growth and
reproductive potential each acquire the same change in logarithmic
scale. To express this property locally, let \(\mathcal P_+\) denote
the indexed collection of positive demographic parameters of \(L\),
\[
\mathcal P_{+}
=
\{m_j:m_j>0\}
\cup
\{b_j:1\leq j\leq d-1\}
\cup
\{c_i:c_i>0\}.
\]
Then
\begin{equation}
\sum_{\theta\in\mathcal P_{+}}
\theta\frac{\partial r}{\partial\theta}
=
1,
\qquad
\sum_{\theta\in\mathcal P_{+}}
\theta\frac{\partial H}{\partial\theta}
=
0,
\qquad
\sum_{\theta\in\mathcal P_{+}}
\theta\frac{\partial\Phi}{\partial\theta}
=
1.
\label{UniformScalingIdentities}
\end{equation}
These three identities describe a biological scaling property and
also provide exact checks on any numerical implementation. Their derivation is
given in Appendix~\ref{Proof:UniformScaling}.
\endgroup

\section{Examples}\label{sec:5}

The results developed in the preceding sections allow the Malthusian
parameter \(r\), evolutionary entropy \(H\), reproductive potential
\(\Phi\), their decompositions by biological process, and their
demographic sensitivities to be calculated from the same matrix. The
examples below illustrate three complementary
features of the theory: the effect of a downstream post-reproductive
stage, the distinct coefficientwise responses of \(r\), \(H\), and
\(\Phi\) in a general Lefkovitch matrix, and the open-group Leslie
specialisation.

All three empirical matrices considered below have a one-year
projection interval. The corresponding values of \(r\), \(H\), and
\(\Phi\) are therefore expressed on the same temporal scale. The
matrices were obtained from the COMADRE Animal Matrix Database
\cite{SalgueroGomez2016COMADRE}.

Since
\[
r=\log\lambda,
\]
one has, for every demographic parameter \(\theta\),
\[
\frac{\partial\lambda}{\partial\theta}
=
\lambda\frac{\partial r}{\partial\theta}.
\]
Thus the sensitivity networks of \(\lambda\) and \(r\) differ only by
the common positive factor \(\lambda\). Moreover,
\[
\frac{\theta}{\lambda}
\frac{\partial\lambda}{\partial\theta}
=
\theta\frac{\partial r}{\partial\theta}.
\]
The usual elasticity of \(\lambda\) is therefore exactly the
parameter-scaled sensitivity of \(r\). Separate networks for
\(\lambda\) would consequently contain no additional biological
information and are not displayed.

\subsection{Killer whale (\emph{Orcinus orca})}

As a first illustration, we consider the Lefkovitch matrix
associated with the killer whale \emph{Orcinus orca}
(COMADRE matrix ID 249595), drawn from the pod-specific demographic
analysis of Brault and Caswell \cite{BraultCaswell1993}. The principal
demographic quantities are summarised in Table~\ref{tab:OrcaIdentity}.

\begin{table}[ht]
\centering
\caption{Growth, reproductive potential and evolutionary entropy for
\emph{Orcinus orca}.}
\label{tab:OrcaIdentity}
\begin{tabular}{lc}
\hline
Quantity & Value\\
\hline
Dominant eigenvalue $\lambda$ & 1.036410\\
Malthusian parameter $r$ & 0.035767\\
Reproductive potential $\Phi$ & $-0.264518$\\
Evolutionary entropy $H$ & 0.300285\\
\hline
\end{tabular}
\end{table}

Separating transition and retention as in
equations~\eqref{EntropyDecomposition}--\eqref{RetentionEntropy},
introduced in Subsection~\ref{subsec:decomposition}, gives the values
shown in Table~\ref{tab:OrcaEntropy}.

\begin{table}[ht]
\centering
\caption{Evolutionary-entropy decomposition for \emph{Orcinus orca}.}
\label{tab:OrcaEntropy}
\begin{tabular}{lccc}
\hline
Component & Value & Fraction & Percent (\%)\\
\hline
Transition entropy $H_{\rm tr}$ & 0.007821 & 0.026045 & 2.604\\
Retention entropy $H_{\rm ret}$ & 0.292464 & 0.973956 & 97.396\\
\hline
Total entropy $H$ & 0.300285 & 1.000000 & 100.000\\
\hline
\end{tabular}
\end{table}

Stage retention accounts for approximately \(97\%\) of the total
evolutionary entropy, whereas less than \(3\%\) is associated with the
transition component. This calculation identifies which biological
process contributes to \(H\), but does not by itself determine the local
sensitivity of \(H\) to individual demographic coefficients.

The terminal stage is post-reproductive and forms a downstream
component of the demographic dynamics. For changes confined to this
post-reproductive component that keep it downstream of the
reproductive stages, the corresponding sensitivities of \(r\) and \(H\)
vanish. This property is visible in
Figures~\ref{fig:OrcaGrowthNetwork} and
\ref{fig:OrcaEntropyNetwork}. Since
\[
d\Phi=dr-dH,
\]
the reproductive-potential sensitivity also vanishes for the same
changes.

This statement concerns changes to links already present in the
post-reproductive part of the life cycle. Creating a fertility link
from a post-reproductive stage is a different change, described by the
small-\(\varepsilon\) formulas~\eqref{BoundaryGrowthFertility}--
\eqref{BoundarySingularSensitivity} introduced in
Subsection~\ref{subsec:new-links}.

\begin{figure}[ht]
\centering
\includegraphics[width=0.80\textwidth]{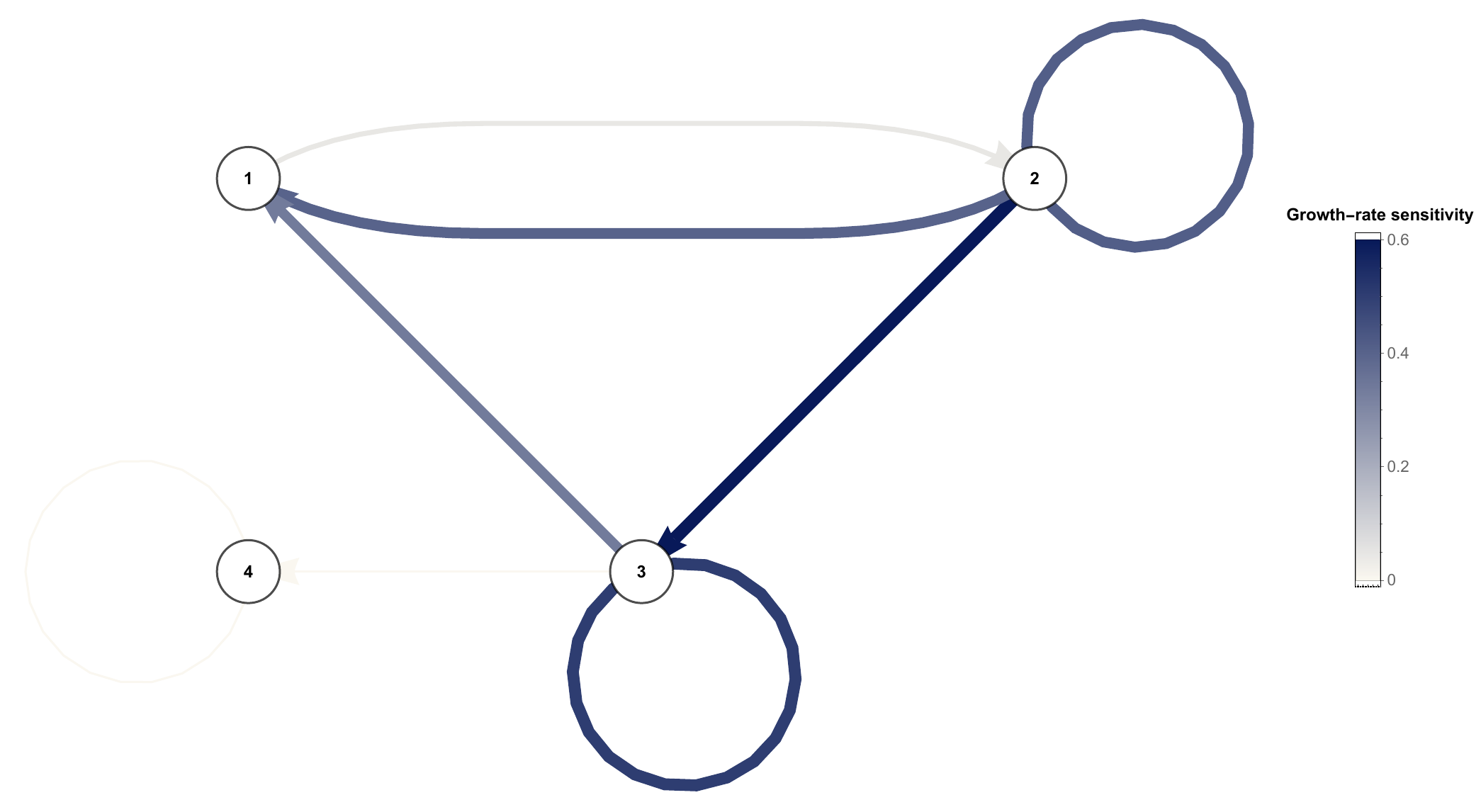}
\caption{Sensitivity network for population growth in
\emph{Orcinus orca}. Edge thickness is proportional to the magnitude
of the sensitivity coefficient. Changes confined to the
terminal post-reproductive component have zero growth sensitivity
when the downstream block structure is preserved.}
\label{fig:OrcaGrowthNetwork}
\end{figure}

\begin{figure}[ht]
\centering
\includegraphics[width=0.80\textwidth]{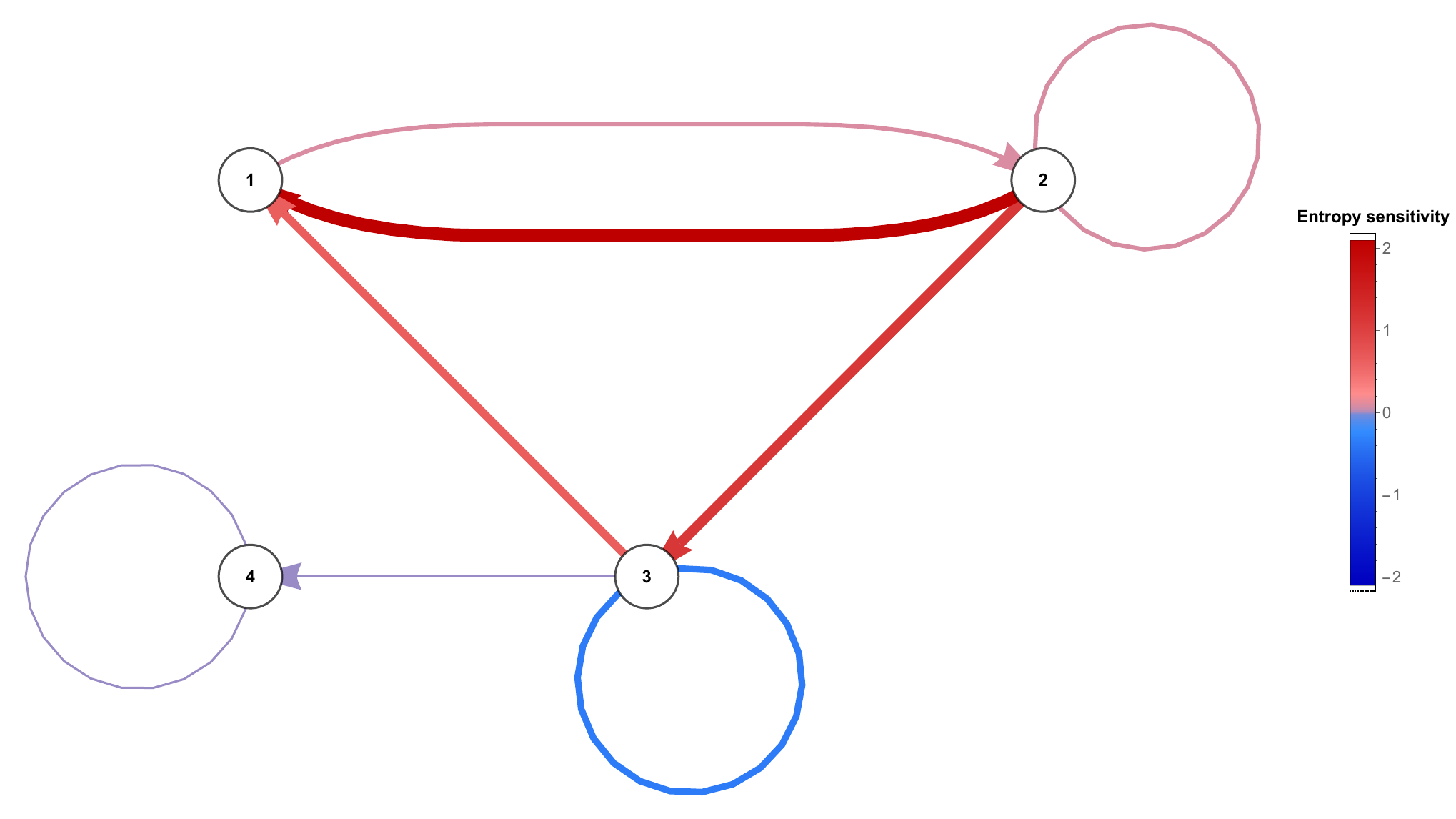}
\caption{Sensitivity network for evolutionary entropy in
\emph{Orcinus orca}. Edge thickness is proportional to the magnitude
of the sensitivity coefficient. The terminal post-reproductive
component has zero entropy sensitivity when its downstream position is
preserved, despite the strong overall contribution of retention
to \(H\).}
\label{fig:OrcaEntropyNetwork}
\end{figure}

\FloatBarrier

\subsection{Flathead catfish (\emph{Pylodictis olivaris})}

We next consider the Lefkovitch matrix associated with the flathead
catfish \emph{Pylodictis olivaris} (COMADRE matrix ID 248200),
corresponding to the Ocmulgee River population model of Sakaris and
Irwin \cite{SakarisIrwin2010}. The principal demographic quantities
are summarised in Table~\ref{tab:PylodictisIdentity}.

\begin{table}[ht]
\centering
\caption{Growth, reproductive potential and evolutionary entropy for
\emph{Pylodictis olivaris}.}
\label{tab:PylodictisIdentity}
\begin{tabular}{lc}
\hline
Quantity & Value\\
\hline
Dominant eigenvalue $\lambda$ & 0.999353\\
Malthusian parameter $r$ & $-0.000648$\\
Reproductive potential $\Phi$ & $-0.616745$\\
Evolutionary entropy $H$ & 0.616098\\
\hline
\end{tabular}
\end{table}

The entropy decomposition is shown in
Table~\ref{tab:PylodictisEntropy}.

\begin{table}[ht]
\centering
\caption{Evolutionary-entropy decomposition for
\emph{Pylodictis olivaris}.}
\label{tab:PylodictisEntropy}
\begin{tabular}{lccc}
\hline
Component & Value & Fraction & Percent (\%)\\
\hline
Transition entropy $H_{\rm tr}$ & 0.151030 & 0.245140 & 24.514\\
Retention entropy $H_{\rm ret}$ & 0.465067 & 0.754860 & 75.486\\
\hline
Total entropy $H$ & 0.616098 & 1.000000 & 100.000\\
\hline
\end{tabular}
\end{table}

Approximately three quarters of the evolutionary entropy are
associated with stage retention, while approximately one quarter is
associated with the transition component. The catfish therefore
provides a less extreme example than the killer whale, while still
showing a strong contribution from stage persistence.

To compare demographic coefficients whose numerical magnitudes differ
substantially, we use the parameter-scaled sensitivities
\[
e_\theta^{(X)}
=
\theta\frac{\partial X}{\partial\theta},
\qquad
X\in\{r,H,\Phi\}.
\]
For \(X=r\), this quantity is exactly the conventional elasticity of
the dominant eigenvalue,
\[
e_\theta^{(r)}
=
\frac{\theta}{\lambda}
\frac{\partial\lambda}{\partial\theta}.
\]
For \(H\) and \(\Phi\), the quantities
\(e_\theta^{(H)}\) and \(e_\theta^{(\Phi)}\) are parameter-scaled
sensitivities; no division by \(H\) or \(\Phi\) is made.

Their algebraic sums are fixed by the uniform-scaling identities
\eqref{UniformScalingIdentities}, explained in
Subsection~\ref{subsec:uniform-scaling}:
\[
\sum_{\theta\in\mathcal P_{+}}e_\theta^{(r)}=1,
\qquad
\sum_{\theta\in\mathcal P_{+}}e_\theta^{(H)}=0,
\qquad
\sum_{\theta\in\mathcal P_{+}}e_\theta^{(\Phi)}=1.
\]

To summarise the overall magnitude of the coefficientwise response,
define
\[
E_X^{\rm abs}
=
\sum_{\theta\in\mathcal P_+}
\left|
e_\theta^{(X)}
\right|,
\]
where \(\mathcal P_+\) denotes the set of positive demographic
coefficients. These are sums of absolute responses and must therefore
not be confused with algebraic sums.

\begin{table}[ht]
\centering
\caption{Aggregate absolute parameter-scaled sensitivities for
\emph{Pylodictis olivaris}.}
\label{tab:PylodictisElasticities}
\begin{tabular}{lc}
\hline
Quantity & Value\\
\hline
Growth, $E_r^{\rm abs}$ & 1.000000\\
Evolutionary entropy, $E_H^{\rm abs}$ & 0.729447\\
Reproductive potential, $E_\Phi^{\rm abs}$ & 1.069890\\
\hline
\end{tabular}
\end{table}

Figures~\ref{fig:PylodictisGrowthElasticity},
\ref{fig:PylodictisEntropyElasticity}, and
\ref{fig:PylodictisPotentialElasticity}
display the corresponding coefficientwise response networks for
\(r\), \(H\), and \(\Phi\). The three quantities are evaluated on the
same demographic matrix but need not assign the same relative
importance, or even the same sign, to a given demographic
coefficient.

\begin{figure}[ht]
\centering
\includegraphics[width=0.85\textwidth]{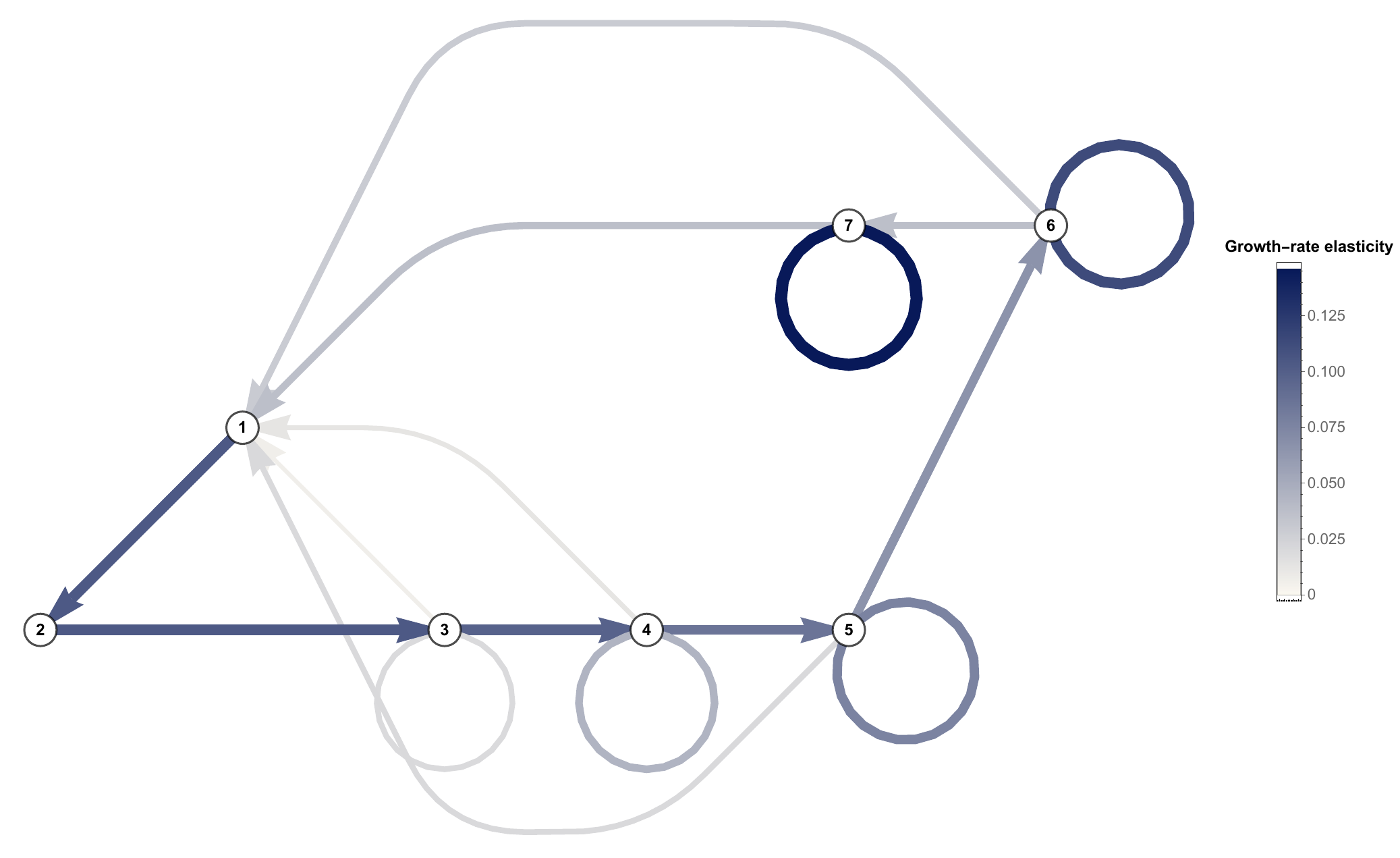}
\caption{Parameter-scaled sensitivity network for population growth in
\emph{Pylodictis olivaris}. Edge thickness is proportional to
\(\left|e_\theta^{(r)}\right|\). Since
\(e_\theta^{(r)}=(\theta/\lambda)
(\partial\lambda/\partial\theta)\), these coefficients are also the
usual elasticities of the dominant eigenvalue.}
\label{fig:PylodictisGrowthElasticity}
\end{figure}

\begin{figure}[ht]
\centering
\includegraphics[width=0.85\textwidth]{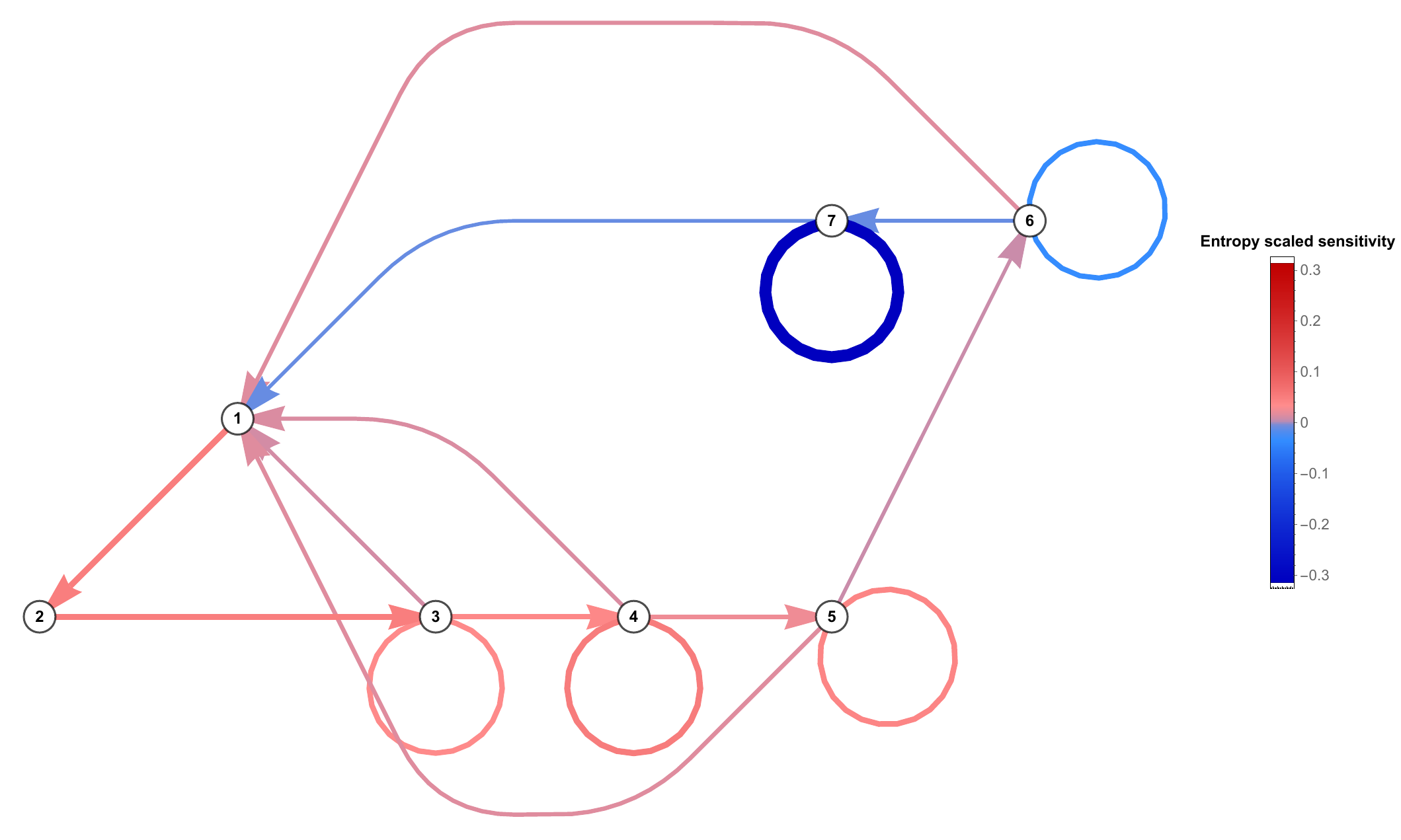}
\caption{Parameter-scaled sensitivity network for evolutionary
entropy in \emph{Pylodictis olivaris}. Edge thickness is proportional
to \(\left|e_\theta^{(H)}\right|\). Positive coefficients correspond
to changes that increase entropy, whereas negative coefficients
correspond to changes that decrease entropy.}
\label{fig:PylodictisEntropyElasticity}
\end{figure}

The growth and entropy networks display markedly different
coefficientwise patterns. The demographic coefficients having the
largest effect on \(r\) need not be those having the largest effect
on \(H\), and the two responses need not have the same sign.

The corresponding response of reproductive potential is fixed
parameter by parameter by
\[
e_\theta^{(\Phi)}
=
e_\theta^{(r)}
-
e_\theta^{(H)}.
\]
The potential network therefore displays the complementary demographic
response required by the identity
\[
r=H+\Phi.
\]
In particular, a demographic coefficient can have weak net influence
on \(r\) while inducing larger and opposing responses in \(H\) and
\(\Phi\).

\begin{figure}[ht]
\centering
\includegraphics[width=0.85\textwidth]{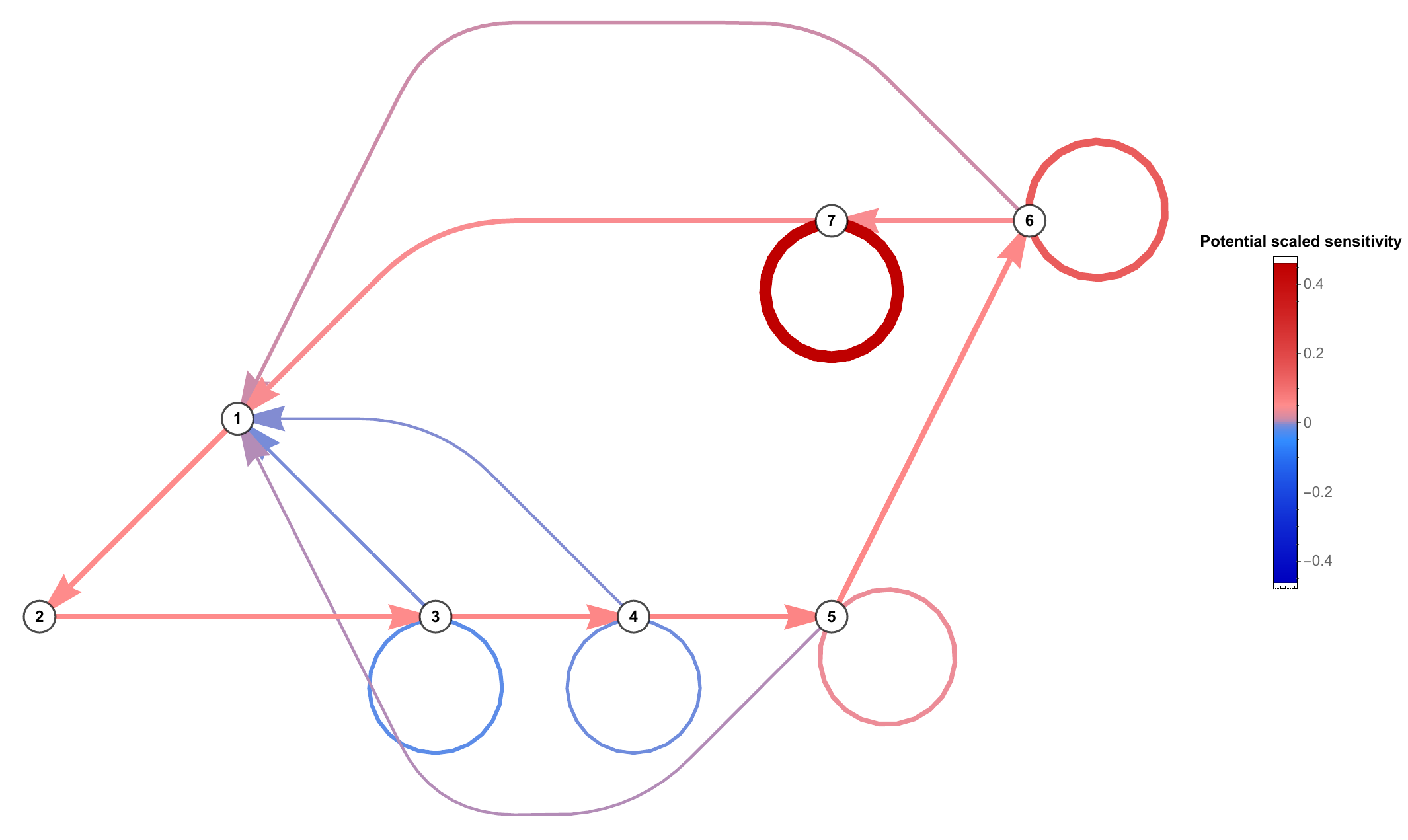}
\caption{Parameter-scaled sensitivity network for reproductive
potential in \emph{Pylodictis olivaris}. Edge thickness is
proportional to \(\left|e_\theta^{(\Phi)}\right|\). Coefficientwise,
the response satisfies
\(e_\theta^{(\Phi)}=e_\theta^{(r)}-e_\theta^{(H)}\).}
\label{fig:PylodictisPotentialElasticity}
\end{figure}

The aggregate absolute responses reinforce this distinction:
\[
E_r^{\rm abs}=1.000000,
\qquad
E_H^{\rm abs}=0.729447,
\qquad
E_\Phi^{\rm abs}=1.069890.
\]
These quantities measure the overall magnitude of the local
coefficientwise response and not its algebraic direction.

\FloatBarrier

\subsection{Giraffe (\emph{Giraffa camelopardalis})}

As a final illustration, we consider the open-group Leslie matrix
of the giraffe \emph{Giraffa camelopardalis} (COMADRE matrix
ID 240422), associated in COMADRE with the demographic analysis of
Strauss et al.~\cite{StraussEtAl2015}. For this 1975--1977 Seronera
matrix, the COMADRE metadata indicate that the parameter values were
based on Pellew~\cite{Pellew1983c}. The matrix has a one-year
projection interval and is
\[
L=
\begin{pmatrix}
0 & 0 & 0.30\\
0.42 & 0 & 0\\
0 & 0.92 & 0.95
\end{pmatrix}.
\]
It has the open-group Leslie structure
\[
m_1=m_2=0,
\qquad
m_3=0.30,
\]
\[
b_1=0.42,
\qquad
b_2=0.92,
\]
and
\[
c_2=0,
\qquad
c_3=\rho=0.95.
\]
Thus reproduction occurs exclusively from the terminal open class,
and retention occurs only in that same class. This matrix therefore
provides a particularly transparent application of the open-group
formulas~\eqref{OpenGroupS}--\eqref{OpenGroupPhi} from
Subsection~\ref{subsec:open-group-representation} and of the terminal
retention sensitivities~\eqref{OpenGroupGrowthSensitivity}--
\eqref{OpenGroupPotentialSensitivity} from
Subsection~\ref{subsec:sensitivity-reductions}.

The principal demographic quantities are summarised in
Table~\ref{tab:GiraffeIdentity}.

\begin{table}[ht]
\centering
\caption{Growth, reproductive potential and evolutionary entropy for
\emph{Giraffa camelopardalis}.}
\label{tab:GiraffeIdentity}
\begin{tabular}{lc}
\hline
Quantity & Value\\
\hline
Dominant eigenvalue $\lambda$ & 1.054289\\
Malthusian parameter $r$ & 0.052867\\
Reproductive potential $\Phi$ & $-0.216536$\\
Evolutionary entropy $H$ & 0.269403\\
\hline
\end{tabular}
\end{table}

Because fertility occurs only in the terminal class, the distribution
of reproduction is concentrated entirely in stage 3:
\[
q_1=q_2=0,
\qquad
q_3=1.
\]
Consequently,
\[
H_{\rm tr}=0,
\qquad
H=H_{\rm ret}.
\]
The numerical decomposition is shown in
Table~\ref{tab:GiraffeEntropy}.

\begin{table}[ht]
\centering
\caption{Evolutionary-entropy decomposition for
\emph{Giraffa camelopardalis}.}
\label{tab:GiraffeEntropy}
\begin{tabular}{lccc}
\hline
Component & Value & Fraction & Percent (\%)\\
\hline
Transition entropy $H_{\rm tr}$ & 0.000000 & 0.000000 & 0.000\\
Retention entropy $H_{\rm ret}$ & 0.269403 & 1.000000 & 100.000\\
\hline
Total entropy $H$ & 0.269403 & 1.000000 & 100.000\\
\hline
\end{tabular}
\end{table}

The reason for the zero transition component is simple. The
distribution of reproduction contains
only one positive entry and therefore contributes no demographic
diversity to \(H_{\rm tr}\). All evolutionary entropy in this example
is consequently contained in the retention component.

This statement concerns the sources that make up the value of \(H\);
it does not determine the sign of the derivative of \(H\) with respect
to retention. Decomposition and sensitivity answer different
biological questions.

\begin{figure}[ht]
\centering
\includegraphics[width=0.80\textwidth]{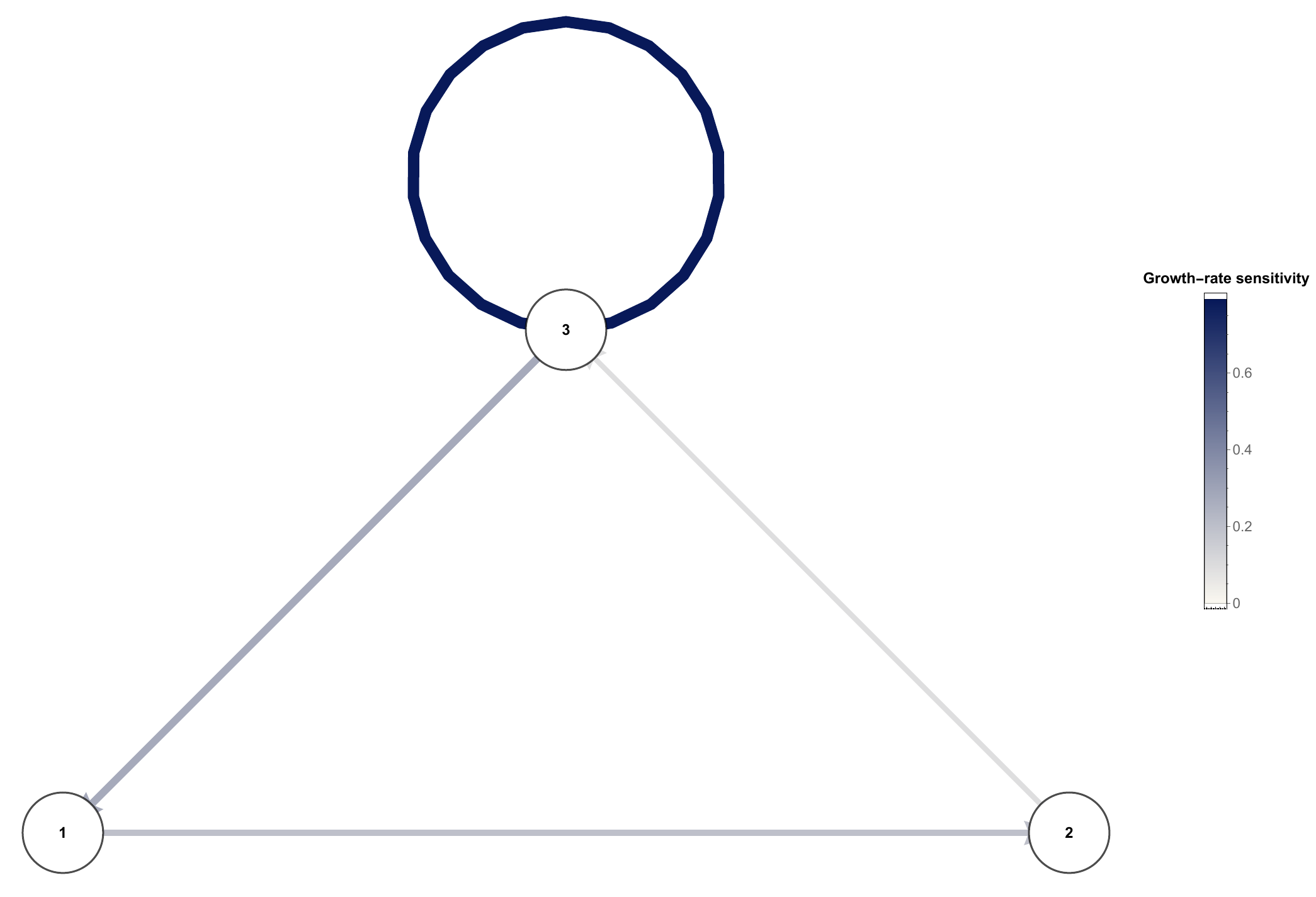}
\caption{Sensitivity network for population growth in
\emph{Giraffa camelopardalis}. Edge thickness is proportional to the
magnitude of the sensitivity coefficient. The terminal retention
coefficient has positive growth sensitivity.}
\label{fig:GiraffeGrowthNetwork}
\end{figure}

For the terminal retention coefficient
\[
\rho=c_3=0.95,
\]
the explicit open-group formulas give
\[
\frac{\partial r}{\partial\rho}
=
0.791849>0,
\]
whereas
\[
\frac{\partial H}{\partial\rho}
=
-0.327709<0.
\]
The corresponding reproductive-potential sensitivity is
\[
\frac{\partial\Phi}{\partial\rho}
=
1.11956>0.
\]
Numerically,
\[
0.791849
\simeq
-0.327709+1.11956,
\]
in agreement with
\[
\frac{\partial r}{\partial\rho}
=
\frac{\partial H}{\partial\rho}
+
\frac{\partial\Phi}{\partial\rho}.
\]

Thus increasing terminal retention increases the Malthusian growth
rate while decreasing evolutionary entropy. Reproductive potential,
by contrast, increases sufficiently to more than compensate for the
decrease in \(H\), resulting in the positive net response of \(r\).

This is especially informative because the decomposition gives
\[
H=H_{\rm ret}.
\]
Retention therefore accounts for the entire value of evolutionary
entropy in this matrix, while increasing the retention coefficient
produces a negative local response of \(H\). The example demonstrates
particularly clearly that decomposition and sensitivity answer
different demographic questions.

\begin{figure}[ht]
\centering
\includegraphics[width=0.80\textwidth]{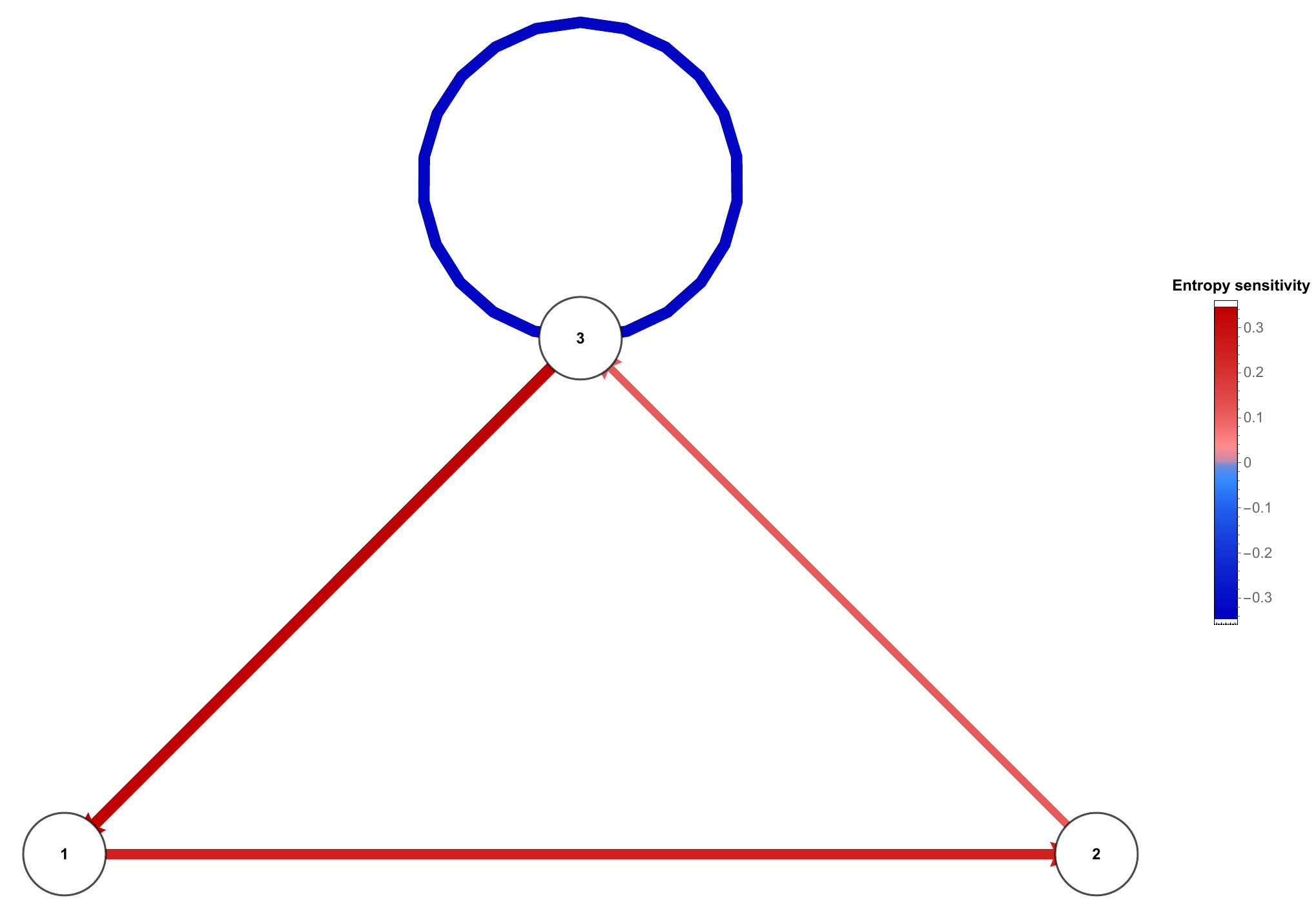}
\caption{Sensitivity network for evolutionary entropy in
\emph{Giraffa camelopardalis}. Edge thickness is proportional to the
magnitude of the sensitivity coefficient. The negative sensitivity of
the terminal retention coefficient contrasts with its positive effect
on population growth.}
\label{fig:GiraffeEntropyNetwork}
\end{figure}

\FloatBarrier

\begingroup
\color{black}

\section{Conclusions}\label{sec:6}

This paper develops a joint sensitivity theory for the Malthusian
parameter \(r\), evolutionary entropy \(H\), and reproductive
potential \(\Phi\) in Lefkovitch matrices. Starting from the dominant
eigenvalue and the associated Markov chain, we obtained explicit
formulas for the long-run stage distribution, generation time,
evolutionary entropy, and reproductive potential. The fundamental
relation
\[
r=H+\Phi
\]
therefore connects three quantities that can be represented and
analysed within the same demographic dynamics.

Stage retention contributes terms that are absent
from the classical Leslie model. Both evolutionary entropy and
reproductive potential admit exact decompositions,
\[
H=H_{\mathrm{tr}}+H_{\mathrm{ret}},
\qquad
\Phi=\Phi_{\mathrm{tr}}+\Phi_{\mathrm{ret}},
\]
with distinct interpretations: for \(H\), the decomposition separates
sources of demographic uncertainty, whereas for \(\Phi\) it partitions
the demographic links into fertility--progression and retention
contributions. This grouping does not make the components independent:
they remain coupled through the dominant eigenvalue, the distribution
of reproduction across stages, and the long-run stage probabilities. As
retention tends to zero, the additional retention contributions
disappear and the classical Leslie expressions are recovered.

For changes to existing demographic links, we derived explicit
sensitivity formulas for fertility, progression, and retention. The
sensitivity of \(r\) is given by the standard formula for a simple
dominant eigenvalue. The response of \(H\) contains a direct part from
changes in reproductive, progression, and retention probabilities,
and an indirect part caused by the change in the dominant eigenvalue.
The corresponding sensitivity of
reproductive potential follows exactly from
\[
d\Phi=dr-dH.
\]
This provides a common sensitivity description of population growth,
evolutionary entropy, and reproductive potential without requiring a
separate differential theory for \(\Phi\).

A different behaviour occurs when a fertility or retention coefficient
leaves zero and creates a new link. The growth rate still has a finite
one-sided derivative, but the new link
produces logarithmically singular changes in \(H\) and \(\Phi\):
\[
H(\varepsilon)-H(0)
=
-\kappa\varepsilon\log\varepsilon+O(\varepsilon),
\qquad
\Phi(\varepsilon)-\Phi(0)
=
\kappa\varepsilon\log\varepsilon+O(\varepsilon),
\]
where \(\kappa>0\) denotes the appropriate coefficient
\(\kappa_k^{(m)}\) or \(\kappa_k^{(c)}\).
Thus continuity at a zero coefficient does not imply a finite
sensitivity there. Any sensitivity analysis of evolutionary entropy
must therefore distinguish between changing an existing demographic
link and creating a new one.

The biological examples illustrate that decomposition and sensitivity
describe different properties of a demographic system. Retention may
account for most, or even all, of the observed evolutionary entropy
without implying that increasing retention increases \(H\); similarly,
the demographic coefficients having the largest effects on population
growth need not have the largest effects on evolutionary entropy or
reproductive potential. The post-reproductive structure of the killer
whale and the open terminal class of the giraffe further illustrate
how the general theory accommodates biologically important departures
from the classical Leslie setting.

The resulting theory places Lefkovitch, Leslie, and open-group Leslie
models within a common treatment of \(r\), \(H\), and \(\Phi\).
Because Lefkovitch and open-group Leslie structures occur frequently
in empirical demographic databases, the formulas obtained here provide
a direct basis for comparative analyses of how fertility, progression,
and stage persistence contribute to population growth and to its
entropy--potential decomposition.

\endgroup

\section*{Data availability}

The demographic matrices used in the illustrative examples were
obtained from the public COMADRE Animal Matrix Database. The
structural classification reported in Table~\ref{tab:databaseprevalence}
was obtained from the public COMADRE Animal Matrix Database and
COMPADRE Plant Matrix Database, accessed on 29 May 2026. All
theoretical results are derived analytically in the article, and the
data required to reproduce the empirical classification and numerical
examples are available from these public databases.

\section*{Code availability}

The Wolfram Mathematica notebook implementing the analytical formulas
developed in this article and reproducing the numerical examples,
sensitivity calculations, finite-difference checks, boundary-link tests,
and figures is publicly available on Zenodo at
\url{https://doi.org/10.5281/zenodo.21902324}.

\section*{Funding}

This work was supported by Fundação para a Ciência e a Tecnologia (FCT), Portugal, through the Centro de Análise Matemática, Geometria e Sistemas Dinâmicos (CAMGSD), under the projects:

\begin{itemize}
\item UID/04459/2025 (DOI: 10.54499/UID/04459/2025)
\item UID/PRR/04459/2025 (DOI: 10.54499/UID/PRR/04459/2025),
\end{itemize}
financed by national funds and by the European Union through the Recovery and Resilience Plan (PRR) – NextGenerationEU.

\section*{Competing Interests}

The author declares that he has no competing interests.

\bigskip

\section*{Author Contributions}
Single authorship.

\bigskip

\appendix

\begingroup
\color{black}

\section{Formal results and mathematical details}
\label{app:SensitivityProofs}

This appendix provides the formal mathematical layer corresponding to
the narrative development in Sections~\ref{sec:2}--\ref{sec:4}.
Each theorem, proposition, corollary, or lemma below identifies the
precise equations and the main-text subsection in which the result is
introduced and interpreted. Conversely, every substantive derivation
in the main text points to the relevant appendix subsection.

\subsection{Table of symbols and notation}
\label{app:notation}

\begin{table}[ht]
\centering
\small
\begin{tabular}{ll}
\toprule
Symbol & Description\\
\midrule

$d$ & Number of demographic stages \\

$L$ & Lefkovitch projection matrix \\

$m_j$ & Fertility coefficient of stage $j$ \\

$b_j$ & Progression coefficient from stage $j$ to stage $j+1$ \\

$c_j$ & Retention coefficient of stage $j$ \\

$\lambda$ & Dominant eigenvalue \\

$r$ & Malthusian parameter, $r=\log\lambda$ \\

$u$ & Right eigenvector associated with $\lambda$ \\

$v$ & Left eigenvector associated with $\lambda$, normalised by $vu=1$ \\

$U$ & $\operatorname{diag}(u_1,\ldots,u_d)$ \\

$P$ & Transition matrix of the associated Markov chain \\

$R_j$ & Relative weight of stage $j$ before fertility is included \\

$q_j$ & Probability that reproduction occurs in stage $j$ \\

$Q_j$ & Tail probability, $\sum_{k=j}^{d}q_k$ \\

$\eta_j$ & Progression probability in the associated Markov chain \\

$\rho_j$ & Retention probability in the associated Markov chain \\

$\rho$ & Terminal retention coefficient, $\rho=c_d$, in an open-group Leslie matrix \\

$\pi_j$ & Stationary probability of stage $j$ \\

$T$ & Generation time \\

$S$ & Numerator of evolutionary entropy, $H=S/T$ \\

$H$ & Evolutionary entropy \\

$\Phi$ & Reproductive potential \\

$H_{\mathrm{tr}}$ & Reproductive-transition component of evolutionary entropy \\

$H_{\mathrm{ret}}$ & Stage-persistence component of evolutionary entropy \\

$\Phi_{\mathrm{tr}}$ & Transition component of reproductive potential \\

$\Phi_{\mathrm{ret}}$ & Retention component of reproductive potential \\

$h_j$ & Local retention entropy term \\

$\mathcal F$ & Set of stages with positive fertility, $\{j:m_j>0\}$ \\

$\mathcal R$ & Set of stages with positive retention, $\{j:c_j>0\}$ \\

$\theta$ & Generic demographic parameter \\

$\mathcal P_+$ & Indexed collection of positive demographic parameters \\

$A_j,C_j,D_j,B$ & Auxiliary quantities in the sensitivity formulas \\

$e_{\theta}^{(X)}$ & Parameter-scaled sensitivity, $\theta\,\partial X/\partial\theta$ \\

$E_X^{\mathrm{abs}}$ & Aggregate absolute parameter-scaled sensitivity \\

\bottomrule
\end{tabular}
\caption{Main notation used throughout the paper.}
\label{tab:notation1}
\end{table}
\FloatBarrier



\subsection{The Markov chain associated with the projection matrix}
\label{app:PerronMarkovResults}

\subsubsection{Generalised Euler--Lotka formula}
\label{Proof:EulerLotka}

This result justifies the construction in
Subsection~\ref{subsec:reproductive-probabilities}, specifically
equations~\eqref{RjDefinition}--\eqref{GeneralizedEulerLotka}.

\begin{theorem}[Generalised Euler--Lotka formula]
\label{thm:EulerLotka}
Let \(L\) be an irreducible Lefkovitch matrix of the form
\eqref{eq:Lefkovitch}, with dominant eigenvalue \(\lambda\) and
positive right eigenvector \(u\). Define \(R_j\) and \(q_j\) by
\eqref{RjDefinition} and \eqref{qjDefinition}. Then the components of
\(u\) satisfy \eqref{PerronRepresentation}, the weights \(q_j\) form
the probability distribution in \eqref{qjNormalization}, and their
normalisation is equivalently the generalised Euler--Lotka equation
\eqref{GeneralizedEulerLotka}.
\end{theorem}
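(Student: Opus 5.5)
The plan is to write out the eigenvalue equation \(Lu=\lambda u\) row by row and exploit the bidiagonal structure below the first row. For \(i=2,\ldots,d\), row \(i\) of \(L\) reads
\[
b_{i-1}u_{i-1}+c_iu_i=\lambda u_i,
\qquad\text{so}\qquad
u_i=\frac{b_{i-1}}{\lambda-c_i}\,u_{i-1},
\]
provided \(\lambda\neq c_i\). Iterating this recursion from \(i=2\) up to \(i=j\) gives \(u_j=u_1\prod_{h=1}^{j-1}b_h/(\lambda-c_{h+1})=u_1R_j\), which is \eqref{PerronRepresentation}. Substituting into the first row, \(\sum_j m_ju_j=\lambda u_1\), gives \(u_1\sum_j m_jR_j=\lambda u_1\). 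After dividing by \(\lambda u_1\), this is \eqref{GeneralizedEulerLotka}, and hence \eqref{qjNormalization}.

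The step I expect to be the main obstacle is justifying that \(\lambda>c_i\) for every \(i\ge2\). This is needed both to perform the division and to conclude that \(R_j>0\), so that \(q_j\ge0\) and the \(q_j\) genuinely form a probability distribution.

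\emph{First argument.} Since \(u\) is a positive Perron vector and \(b_{i-1}>0\), the row equation \((\lambda-c_i)u_i=b_{i-1}u_{i-1}>0\) forces \(\lambda-c_i>0\) directly. This argument uses positivity of \(u\), which holds by Perron--Frobenius for irreducible \(L\).

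\emph{Downstream stages.} The footnote allows post-reproductive stages downstream of the irreducible reproductive part, and for these \(u_i\) may vanish. I would handle them by noting that those stages have \(m_i=0\), so they contribute nothing to the sums, or by restricting the statement to the irreducible block.

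\emph{Alternative argument.} Independently, \(c_i\) is a diagonal entry of a nonnegative matrix. The principal \(1\times1\) submatrix of an irreducible nonnegative matrix of size \(d\ge2\) has spectral radius strictly less than \(\lambda\). This gives \(c_i<\lambda\) without reference to the eigenvector.

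Finally, nonnegativity of the weights follows from \(m_j\ge0\), \(b_h>0\) and \(\lambda-c_{h+1}>0\), so each \(q_j=m_jR_j/\lambda\ge0\). The normalisation was obtained above, so the \(q_j\) form a probability distribution. The equivalence of \eqref{qjNormalization} and \eqref{GeneralizedEulerLotka} is immediate from the definition \eqref{qjDefinition}.

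As a consistency remark, the map \(\mu\mapsto\sum_j m_jR_j(\mu)/\mu\) is strictly decreasing on \((\max_i c_i,\infty)\). Hence \(\lambda\) is its unique root there, which identifies the dominant eigenvalue through the Euler--Lotka equation; this remark is not needed for the theorem itself.
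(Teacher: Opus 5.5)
Your proof is correct and is essentially the paper's proof: rows \(i\geq2\) of \(Lu=\lambda u\) give \(u_i=u_{i-1}b_{i-1}/(\lambda-c_i)\), iterating gives \(u_j=u_1R_j\), and the first row then yields \(\sum_j m_jR_j/\lambda=1\). Your explicit justification of \(\lambda>c_i\), and hence of \(q_j\geq0\), is a useful addition that the paper leaves implicit; one small correction to your side remark is that on downstream post-reproductive stages it is the left eigenvector \(v\) that vanishes, whereas \(u_i\) stays positive because \((\lambda-c_i)u_i=b_{i-1}u_{i-1}>0\) whenever \(c_i<\lambda\).
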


\begin{proof}

Starting from the dominant-eigenvalue equation for the matrix introduced
in Section~\ref{sec:2},
\[
Lu=\lambda u
\]
gives, for \(j=2,\ldots,d\),
\begin{equation}
b_{j-1}u_{j-1}+c_ju_j=\lambda u_j.
\label{EigenRecurrence}
\end{equation}
Hence
\begin{equation}
u_j
=
u_{j-1}\frac{b_{j-1}}{\lambda-c_j}.
\label{EigenStep}
\end{equation}
Iteration yields
\[
u_j=u_1R_j,
\qquad
j=1,\ldots,d.
\]
Substitution into the first row of the eigenvalue equation gives
\[
\sum_{j=1}^{d}\frac{m_jR_j}{\lambda}=1.
\]
Since \(q_j=m_jR_j/\lambda\), this is precisely
\[
\sum_{j=1}^{d}q_j=1.
\]
Thus equations~\eqref{PerronRepresentation}--
\eqref{GeneralizedEulerLotka} in
Subsection~\ref{subsec:reproductive-probabilities} follow.

\end{proof}


\subsubsection{Stationary distribution and generation time}
\label{proof:StationaryDistribution}

The following statement establishes the long-run stage probabilities
described in Subsection~\ref{subsec:stationary-pathway}, using the
progression and retention probabilities in
\eqref{EtaRhoDefinition}, the tail probabilities in
\eqref{TailDistribution}, and the claimed stationary distribution and
generation time in \eqref{StationaryDistribution}--\eqref{TDefinition}.

\begin{theorem}[Stationary distribution]
\label{thm:StationaryDistribution}
Let \(P\) be the Markov transition matrix defined in
\eqref{MarkovMatrix}, and define \(\eta_i\), \(\rho_i\), and \(Q_i\)
by \eqref{EtaRhoDefinition} and \eqref{TailDistribution}. Then \(P\)
has the stationary distribution given in
\eqref{StationaryDistribution}, where the normalising constant \(T\)
is given by \eqref{TDefinition}.
\end{theorem}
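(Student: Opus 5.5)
We verify the balance equations \(\pi P=\pi\) directly, using the explicit entries of \(P\) listed in Subsection~\ref{subsec:entropy-potential}. By Theorem~\ref{thm:EulerLotka}, \(u_j=u_1R_j\), so
\[
P_{1j}=\frac{m_ju_j}{\lambda u_1}=\frac{m_jR_j}{\lambda}=q_j .
\]
For \(i\ge2\), the recurrence \eqref{EigenStep} gives
\[
P_{i,i-1}=\frac{b_{i-1}u_{i-1}}{\lambda u_i}=\frac{\lambda-c_i}{\lambda}=\eta_i,
\qquad
P_{ii}=\frac{c_i}{\lambda}=\rho_i .
\]
All other entries vanish. Row sums equal one, by \eqref{qjNormalization} for row \(1\) and by \(\eta_i+\rho_i=1\) for the others. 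Every \(\eta_i>0\), because \(\lambda>c_i\): this holds since \(u_i>0\) and \(b_{i-1}u_{i-1}=(\lambda-c_i)u_i>0\).

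Column \(j\) of \(\pi P=\pi\) reads
\[
\pi_j=\pi_1q_j+\pi_{j+1}\eta_{j+1}+\pi_j\rho_j
\]
for \(2\le j\le d\), with the convention \(\pi_{d+1}\eta_{d+1}:=0\). Column \(1\) reads \(\pi_1=\pi_1q_1+\pi_2\eta_2\). The plan is to rewrite each equation in terms of the flux \(F_i:=\pi_i\eta_i\). Since \(1-\rho_j=\eta_j\), column \(j\ge2\) becomes \(F_j=\pi_1q_j+F_{j+1}\). Back-substituting from \(j=d\) downward, with \(F_{d+1}=0\), gives
\[
F_i=\pi_1\sum_{j=i}^dq_j=\pi_1Q_i .
\]
Hence \(\pi_i=\pi_1Q_i/\eta_i\), which is the claimed form in \eqref{StationaryDistribution}. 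The first-column equation is then \(\pi_1=\pi_1q_1+\pi_1Q_2\). This holds automatically because \(q_1+Q_2=Q_1=1\) by \eqref{qjNormalization}, which is the expected redundancy of one balance equation.

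Normalisation \(\sum_i\pi_i=1\) gives
\[
\pi_1\Bigl(1+\sum_{i=2}^dQ_i/\eta_i\Bigr)=1,
\]
so \(\pi_1=1/T\) with \(T\) as in \eqref{TDefinition}. All \(\pi_i\) are then nonnegative. For uniqueness under irreducibility of the reproductive component, we appeal to Perron--Frobenius: \(P\) is similar to \(L/\lambda\), so \(P\) is irreducible whenever \(L\) is, and its stationary distribution is unique.

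The main obstacle is minor bookkeeping. First, we must check that \(\eta_i>0\), done above, so that the division is legitimate. Second, we must handle stages with \(c_i=0\), where \(P_{ii}=0\); the formula remains valid there with \(\rho_i=0\) and \(\eta_i=1\). Third, when post-reproductive stages exist with \(Q_i=0\), the same formula simply gives \(\pi_i=0\), which is consistent with the footnote's remark on downstream components.
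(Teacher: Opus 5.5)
Your proof is correct and is essentially the paper's argument: the paper gets \(\pi_i\eta_i=\pi_1Q_i\) in one step by balancing the stationary flux across the cut \(\{1,\ldots,i-1\}\mid\{i,\ldots,d\}\), which is exactly the telescoped sum of the column equations \(j=i,\ldots,d\) that you back-substitute, and it then normalises in the same way. Your local version adds two small things the paper leaves implicit: it checks the redundant first-column equation, so the candidate \(\pi\) is verified to be stationary rather than only shown to be the only possible one, and it shows that every invariant vector is fixed by \(\pi_1\), which gives uniqueness without Perron--Frobenius and so also covers the downstream post-reproductive case.
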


\begin{proof}

Using the reproductive probabilities defined in
Subsection~\ref{subsec:reproductive-probabilities}, for
\(i=2,\ldots,d\) consider the partition
\[
\{1,\ldots,i-1\}
\qquad\text{and}\qquad
\{i,\ldots,d\}.
\]
The stationary flux from the first set to the second is
\(\pi_1Q_i\), whereas the reverse flux is \(\pi_i\eta_i\).
Hence
\begin{equation}
\pi_i\eta_i=\pi_1Q_i.
\label{BalanceRelation}
\end{equation}
Therefore
\begin{equation}
\pi_i=\frac{\pi_1Q_i}{\eta_i}.
\label{PiFormula}
\end{equation}
Normalization of the stationary distribution gives
\[
1
=
\pi_1
\left(
1+\sum_{i=2}^{d}\frac{Q_i}{\eta_i}
\right)
=
\pi_1T.
\]
Thus
\[
\pi_1=\frac1T,
\qquad
\pi_i=\frac{Q_i}{\eta_iT},
\qquad
i=2,\ldots,d.
\]
This is precisely equations~\eqref{StationaryDistribution}--
\eqref{TDefinition} in Subsection~\ref{subsec:stationary-pathway}.

\end{proof}


\subsubsection{Explicit formula for evolutionary entropy}
\label{Proof:EntropyFormula}

This theorem proves
equations~\eqref{EntropyRepresentation}--\eqref{EntropyS} in
Subsection~\ref{subsec:entropy-potential}. It uses the stationary
distribution and generation time established in
Theorem~\ref{thm:StationaryDistribution}.

\begin{theorem}[Explicit formula for evolutionary entropy]
\label{thm:EntropyFormula}
For the irreducible Lefkovitch matrix \(L\), evolutionary entropy is
given by \(H=S/T\) in \eqref{EntropyRepresentation}, where \(S\) is
the expression in \eqref{EntropyS} and \(T\)
is the generation time in \eqref{TDefinition}.
\end{theorem}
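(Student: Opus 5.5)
The evolutionary entropy is the entropy rate of the Markov chain $P$ of \eqref{MarkovMatrix}:
\[
H=-\sum_{i,j}\pi_iP_{ij}\log P_{ij}.
\]
The plan is to evaluate this row by row using the explicit structure of $P$. By the relation $P_{ij}=L_{ij}u_j/(\lambda u_i)$ together with Theorem~\ref{thm:EulerLotka} ($u_j=u_1R_j$), the first row of $P$ has entries $P_{1j}=m_ju_j/(\lambda u_1)=m_jR_j/\lambda=q_j$. For $i\ge2$, row $i$ has at most two nonzero entries. The progression entry is
\[
P_{i,i-1}=\frac{b_{i-1}u_{i-1}}{\lambda u_i}=\frac{\lambda-c_i}{\lambda}=\eta_i,
\]
which follows from the eigenvalue recurrence \eqref{EigenStep}. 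The retention entry is $P_{ii}=c_i/\lambda=\rho_i$. I will record these three identities first; they are exactly the transition probabilities listed before \eqref{PotentialDefinition}.

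Next I will split the double sum into the contribution of row $1$ and the rows $i\ge2$:
\[
H=-\pi_1\sum_{j}q_j\log q_j-\sum_{i=2}^d\pi_i\bigl(\eta_i\log\eta_i+\rho_i\log\rho_i\bigr),
\]
with $0\log0=0$ handling absent fertility and retention links. I will then substitute the stationary distribution from Theorem~\ref{thm:StationaryDistribution}, namely $\pi_1=1/T$ and $\pi_i=Q_i/(\eta_iT)$. Factoring out $1/T$ gives $H=S/T$ with $S$ exactly as in \eqref{EntropyS}, and $T$ is the normalising constant \eqref{TDefinition}.

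The one point needing care is the convention. I must check that $\eta_i>0$, so that $\pi_i$ is well defined. This holds because $\lambda>c_i$: by irreducibility and the recurrence $b_{i-1}u_{i-1}=(\lambda-c_i)u_i$ with positive $u$, the factor $\lambda-c_i$ must be positive. I must also check that terms with $q_j=0$ or $\rho_i=0$ contribute zero consistently with the continuity convention. With those checks the computation is routine.

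If the setting includes downstream post-reproductive stages, I would first restrict to the irreducible reproductive block as in the footnote to Section~\ref{sec:2}, since those stages carry zero stationary mass. The main step is therefore just the identification of the rows of $P$; the rest is substitution.
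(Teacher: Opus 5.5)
Your proposal is correct and follows the paper's proof. Both take the entropy rate $H=-\sum_{i,j}\pi_iP_{ij}\log P_{ij}$, split it into the first row (entries $q_j$) and the rows $i\ge2$ (entries $\eta_i,\rho_i$), and substitute $\pi_1=1/T$ and $\pi_i=Q_i/(\eta_iT)$. Your derivation of the row entries from $P_{ij}=L_{ij}u_j/(\lambda u_i)$, and of $\eta_i>0$ from $(\lambda-c_i)u_i=b_{i-1}u_{i-1}>0$, fills in details the paper states without comment.
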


\begin{proof}

For the associated Markov chain defined in
Subsection~\ref{subsec:stationary-pathway}, evolutionary entropy is,
by definition,
\begin{equation}
H
=
-\sum_{i,j}\pi_iP_{ij}\log P_{ij}.
\label{EntropyDefinition}
\end{equation}
The first row of \(P\) is \((q_1,\ldots,q_d)\), whereas for
\(i\geq2\) its only possible non-zero entries are
\(\eta_i\) and \(\rho_i\). Hence
\[
H
=
-\pi_1\sum_{j=1}^{d}q_j\log q_j
-
\sum_{i=2}^{d}
\pi_i
\left(
\eta_i\log\eta_i+\rho_i\log\rho_i
\right),
\]
with zero terms interpreted by continuity. Substituting
\[
\pi_1=\frac1T,
\qquad
\pi_i=\frac{Q_i}{\eta_iT},
\]
gives
\[
H
=
\frac1T
\left[
-\sum_{j=1}^{d}q_j\log q_j
-
\sum_{i=2}^{d}
\frac{Q_i}{\eta_i}
\left(
\eta_i\log\eta_i+\rho_i\log\rho_i
\right)
\right],
\]
which is \eqref{EntropyRepresentation}--\eqref{EntropyS}.

\end{proof}


\subsubsection{Reproductive potential and the fundamental identity}
\label{app:PotentialFormula}

This result establishes the definition
\eqref{PotentialDefinition}, the explicit formulas
\eqref{PotentialRepresentation}--\eqref{PotentialExplicit}, and the
identity \eqref{GrowthEntropyPotentialIdentity}, all introduced in
Subsection~\ref{subsec:entropy-potential}.

\begin{corollary}[Explicit formula for reproductive potential]
\label{cor:PotentialFormula}
For the irreducible Lefkovitch matrix \(L\), the reproductive
potential defined by \eqref{PotentialDefinition} is given by
\eqref{PotentialRepresentation} and, directly in terms of the
demographic coefficients, by \eqref{PotentialExplicit}. Together with
the entropy in Theorem~\ref{thm:EntropyFormula}, it satisfies
\(r=H+\Phi\), as stated in
\eqref{GrowthEntropyPotentialIdentity}.
\end{corollary}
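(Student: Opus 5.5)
The plan has three parts: identify the nonzero entries of \(P\), substitute them into the definition \eqref{PotentialDefinition} of \(\Phi\), and then prove the identity \(r=H+\Phi\) by a telescoping argument.

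\emph{Entries of \(P\).} By \eqref{MarkovMatrix}, \(P_{ij}=L_{ij}u_j/(\lambda u_i)\), and by Theorem~\ref{thm:EulerLotka} we have \(u_j=u_1R_j\). The first row is then \(P_{1j}=m_jR_j/\lambda=q_j\).

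For \(i\ge2\), the subdiagonal entry is \(P_{i,i-1}=b_{i-1}u_{i-1}/(\lambda u_i)\). The eigen-recurrence \eqref{EigenStep} gives \(u_{i-1}/u_i=(\lambda-c_i)/b_{i-1}\), so this entry equals \((\lambda-c_i)/\lambda=\eta_i\). The diagonal entry is \(P_{ii}=c_i/\lambda=\rho_i\). These are the only positive entries of \(P\).

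\emph{Formulas for \(\Phi\).} Inserting these entries into \eqref{PotentialDefinition} and grouping by row gives \eqref{PotentialRepresentation}:
\begin{itemize}
\item row \(1\) contributes \(\pi_1\sum_{m_j>0}q_j\log m_j\);
\item each row \(i\ge2\) contributes \(\pi_i\eta_i\log b_{i-1}\), plus \(\pi_i\rho_i\log c_i\) when \(c_i>0\).
\end{itemize}
Substituting \(\pi_1=1/T\) and \(\pi_i=Q_i/(\eta_iT)\) from Theorem~\ref{thm:StationaryDistribution} gives \(\pi_i\eta_i=Q_i/T\) and \(\pi_i\rho_i=Q_i\rho_i/(\eta_iT)\). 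This yields \eqref{PotentialExplicit}.

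\emph{The identity \(r=H+\Phi\).} For every positive \(L_{ij}\) we have
\[
\log L_{ij}-\log P_{ij}=\log\lambda+\log u_i-\log u_j .
\]
Multiply by \(\pi_iP_{ij}\) and sum over the positive entries. The left side is \(\Phi+H\), using the definition \eqref{EntropyDefinition} of \(H\) from Theorem~\ref{thm:EntropyFormula}.

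On the right side:
\begin{itemize}
\item the term \(\log\lambda\) sums to \(\log\lambda\), because \(\sum_{i,j}\pi_iP_{ij}=1\);
\item the term \(\sum_{i,j}\pi_iP_{ij}\log u_i\) equals \(\sum_i\pi_i\log u_i\), because the rows of \(P\) sum to one;
\item the term \(\sum_{i,j}\pi_iP_{ij}\log u_j\) equals \(\sum_j\pi_j\log u_j\), because \(\pi\) is stationary.
\end{itemize}
The last two terms are therefore equal and cancel, leaving \(r=\log\lambda=H+\Phi\).

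The only delicate point is which index set the sums run over. Entries with \(m_j=0\) or \(c_i=0\) carry zero weight \(\pi_iP_{ij}\), so restricting all sums to positive entries is consistent in both \(H\) and \(\Phi\). The row-sum and stationarity identities used above hold over the support of \(P\), so the cancellation is legitimate. No entry of the form \(\log 0\) ever appears, which is why the identity holds exactly with the stated convention \(0\log0=0\).
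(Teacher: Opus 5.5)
Your proposal is correct and follows the paper's proof essentially step for step. You read off the positive entries $P_{1j}=q_j$, $P_{i,i-1}=\eta_i$, $P_{ii}=\rho_i$ and substitute them, and then the stationary distribution, into the definition of $\Phi$. You obtain $r=H+\Phi$ by averaging $\log L_{ij}-\log P_{ij}=\log\lambda+\log u_i-\log u_j$ against $\pi_iP_{ij}$.

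Your extra details only expand what the paper states in one line: checking $P_{i,i-1}=\eta_i$ via the eigen-recurrence, and showing that the two eigenvector terms cancel by row-stochasticity and stationarity.
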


\begin{proof}
The positive entries of the associated Markov matrix are
\(P_{1j}=q_j\), \(P_{i,i-1}=\eta_i\), and
\(P_{ii}=\rho_i\), with the index restrictions stated immediately
after \eqref{PotentialDefinition}. Substitution in that definition
gives \eqref{PotentialRepresentation}; substitution of
\eqref{StationaryDistribution} gives \eqref{PotentialExplicit}.
Finally,
\[
\log L_{ij}-\log P_{ij}
=
\log\lambda+\log u_i-\log u_j.
\]
Averaging with respect to \(\pi_iP_{ij}\) cancels the two eigenvector
terms by stationarity and yields
\eqref{GrowthEntropyPotentialIdentity}.
\end{proof}


\subsubsection{Open-group Leslie formulas}
\label{app:OpenGroupRepresentation}

The following corollary establishes the special case developed in
Subsection~\ref{subsec:open-group-representation}.

\begin{corollary}[Open-group Leslie matrices]
\label{cor:opengroup}
If \(c_2=\cdots=c_{d-1}=0\) and \(c_d=\rho>0\), the entropy
functional, generation time, evolutionary entropy, and reproductive
potential are respectively given by
\eqref{OpenGroupS}--\eqref{OpenGroupPhi}. The identity
\(\log\lambda=H+\Phi\) remains valid.
\end{corollary}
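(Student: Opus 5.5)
The plan is to obtain the corollary by specialising Theorems~\ref{thm:StationaryDistribution} and \ref{thm:EntropyFormula} and Corollary~\ref{cor:PotentialFormula} to the parameter values \(c_2=\cdots=c_{d-1}=0\), \(c_d=\rho>0\). No new spectral or Markov-chain argument is needed. The matrix is still an irreducible Lefkovitch matrix, so the general results apply verbatim. By \eqref{EtaRhoDefinition}, the specialisation gives \(\eta_i=1\) and \(\rho_i=0\) for \(2\le i\le d-1\), while \(\eta_d=(\lambda-\rho)/\lambda\) and \(\rho_d=\rho/\lambda\). Since \(Q_d=q_d\) by \eqref{TailDistribution}, one has \(Q_d/\eta_d=\lambda q_d/(\lambda-\rho)\).

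First I will substitute into \eqref{TDefinition}. The internal terms contribute \(Q_i/\eta_i=Q_i\) and the terminal term contributes \(\lambda q_d/(\lambda-\rho)\), which gives \eqref{OpenGroupT}. Next I will substitute into \eqref{EntropyS}. For internal stages, \(\eta_i\log\eta_i+\rho_i\log\rho_i=1\cdot\log1+0\log0=0\) under the stated convention, so only \(i=d\) survives. That term is
\[
\frac{Q_d}{\eta_d}\left(\eta_d\log\eta_d+\rho_d\log\rho_d\right)
=q_d\log\eta_d+\frac{q_d\rho_d}{\eta_d}\log\rho_d .
\]
Using \(\rho_d/\eta_d=\rho/(\lambda-\rho)\), this is exactly the last two terms of \eqref{OpenGroupS}. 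Dividing by \(T\) via \eqref{EntropyRepresentation} gives \eqref{OpenGroupH}.

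For \(\Phi\), I will use \eqref{PotentialExplicit}. The retention sum runs only over \(i\) with \(c_i>0\), and here the only such index is \(i=d\). Its term is \((Q_d\rho_d/\eta_d)\log c_d=(q_d\rho/(\lambda-\rho))\log\rho\). The fertility and progression sums are unchanged, and together these give \eqref{OpenGroupPhi}. Finally, the identity \(\log\lambda=H+\Phi\) is inherited directly from \eqref{GrowthEntropyPotentialIdentity}, because the specialised matrix is a particular case of the general one.

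The only point needing care is the handling of the vanishing internal retention terms. I must check that the \(0\log0\) convention is applied consistently in \eqref{EntropyS}, and that zero-coefficient links are excluded rather than contributing \(\log0\) in \eqref{PotentialExplicit}. Both are settled by the conventions fixed in Section~\ref{sec:2}, so I do not expect a genuine obstacle; the proof is a routine substitution.
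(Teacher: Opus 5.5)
Your proposal is correct and matches the paper's proof: set \(\eta_i=1,\ \rho_i=0\) for \(2\le i<d\) and \(\eta_d=(\lambda-\rho)/\lambda,\ \rho_d=\rho/\lambda\), substitute into \eqref{EntropyS}, \eqref{TDefinition} and \eqref{PotentialExplicit}, and inherit \(\log\lambda=H+\Phi\) from \eqref{GrowthEntropyPotentialIdentity}. Your use of \(Q_d=q_d\), the \(0\log0\) convention, and the exclusion of zero retention links in \(\Phi\) supply the details that the paper's one-line substitution leaves implicit.
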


\begin{proof}
Under this specialisation, \(\eta_i=1\) and \(\rho_i=0\) for
\(2\leq i<d\), while
\(\eta_d=(\lambda-\rho)/\lambda\) and
\(\rho_d=\rho/\lambda\). Substitution in
\eqref{EntropyS}, \eqref{TDefinition}, and
\eqref{PotentialExplicit} gives
\eqref{OpenGroupS}--\eqref{OpenGroupPhi}. The final identity follows
from \eqref{GrowthEntropyPotentialIdentity}.
\end{proof}


\subsection{Separating transitions from stage retention and recovering the Leslie model}
\label{app:DecompositionResults}

\subsubsection{Evolutionary-entropy decomposition}
\label{app:EntropyDecomposition}

This theorem establishes the separation introduced in
Subsection~\ref{subsec:decomposition}, specifically
equations~\eqref{EntropyDecomposition}--\eqref{RetentionEntropy}.

\begin{theorem}[Entropy decomposition]
\label{thm:EntropyDecomposition}
The evolutionary entropy of an irreducible Lefkovitch matrix admits
the exact decomposition \eqref{EntropyDecomposition}, with transition
and retention components given by \eqref{ReproductiveEntropy} and
\eqref{RetentionEntropy}.
\end{theorem}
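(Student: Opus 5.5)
The decomposition is a regrouping of the Markov-chain entropy rate according to the row of $P$ from which each transition originates. I will start from the definition \eqref{EntropyDefinition},
\[
H=-\sum_{i,j}\pi_iP_{ij}\log P_{ij},
\]
used in the proof of Theorem~\ref{thm:EntropyFormula}, with the convention $0\log0=0$. I will then split the outer sum over $i$ into the term $i=1$ and the terms $i=2,\ldots,d$.

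For $i=1$, the first row of $P=\lambda^{-1}U^{-1}LU$ has entries
\[
P_{1j}=\frac{m_ju_j}{\lambda u_1}=\frac{m_jR_j}{\lambda}=q_j,
\]
by Theorem~\ref{thm:EulerLotka}. The $i=1$ contribution is therefore exactly $-\pi_1\sum_j q_j\log q_j=H_{\mathrm{tr}}$, as in \eqref{ReproductiveEntropy}. Zero fertilities give $q_j=0$ and contribute nothing under the stated convention.

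For $i\ge2$, row $i$ of $L$ has nonzero entries only at column $i-1$ (value $b_{i-1}$) and column $i$ (value $c_i$). The recursion \eqref{EigenStep} gives $u_{i-1}/u_i=(\lambda-c_i)/b_{i-1}$. Hence
\[
P_{i,i-1}=\frac{b_{i-1}u_{i-1}}{\lambda u_i}=\frac{\lambda-c_i}{\lambda}=\eta_i,
\qquad
P_{ii}=\frac{c_i}{\lambda}=\rho_i .
\]
Row $i$ therefore contributes $-\pi_i(\eta_i\log\eta_i+\rho_i\log\rho_i)$. Summing over $i\ge2$ gives $H_{\mathrm{ret}}$ in \eqref{RetentionEntropy}. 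When $c_i=0$, the term vanishes because $\eta_i=1$ and $\rho_i\log\rho_i=0$.

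Adding the two parts yields \eqref{EntropyDecomposition}. As a consistency check, substituting $\pi_1=1/T$ and $\pi_i=Q_i/(\eta_iT)$ from Theorem~\ref{thm:StationaryDistribution} recovers $S/T$ as in \eqref{EntropyS}. The only point needing care is the identification of the entries of $P$ row by row, together with the convention $0\log0=0$ for absent links. Both follow directly from the eigenvector recursion, so I expect no real obstacle.
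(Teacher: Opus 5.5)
Your proposal is correct and follows the paper's proof: both split the Markov-chain entropy rate \eqref{EntropyDefinition} by originating row. Row $1$ (entries $q_j$) gives $H_{\mathrm{tr}}$, rows $i\ge 2$ (entries $\eta_i,\rho_i$) give $H_{\mathrm{ret}}$, and your explicit derivation of $P_{i,i-1}=\eta_i$ and $P_{ii}=\rho_i$ from \eqref{EigenStep} spells out what the paper takes from the proof of Theorem~\ref{thm:EntropyFormula}.
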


\begin{proof}
In the Markov-chain entropy formula \eqref{EntropyDefinition}, the
first row of \(P\) contributes \(H_{\mathrm{tr}}\), while rows
\(i=2,\ldots,d\) contribute \(H_{\mathrm{ret}}\). Their sum is the
explicit entropy formula proved in
Theorem~\ref{thm:EntropyFormula}, which gives
\eqref{EntropyDecomposition}--\eqref{RetentionEntropy}.
\end{proof}


\subsubsection{Reproductive-potential decomposition}
\label{app:PotentialDecomposition}

This proposition establishes
equations~\eqref{PotentialDecomposition}--\eqref{RetentionPotential}
in Subsection~\ref{subsec:decomposition}.

\begin{proposition}[Reproductive-potential decomposition]
\label{prop:PotentialDecomposition}
The reproductive potential admits the exact decomposition
\eqref{PotentialDecomposition}, with non-retention and retention
components given by \eqref{TransitionPotential} and
\eqref{RetentionPotential}.
\end{proposition}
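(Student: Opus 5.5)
The plan is to start from the representation \eqref{PotentialRepresentation} established in Corollary~\ref{cor:PotentialFormula}. That corollary already expresses \(\Phi\) as a finite sum over the positive entries of \(L\), each entry weighted by its stationary transition frequency \(\pi_iP_{ij}\). So I will simply partition the index set of positive entries \(\{(i,j):L_{ij}>0\}\) of the Lefkovitch matrix \eqref{eq:Lefkovitch} into two disjoint classes. The first class consists of the fertility entries \((1,j)\) with \(m_j>0\) together with the subdiagonal progression entries \((i,i-1)\), \(i=2,\ldots,d\). The second class consists of the diagonal retention entries \((i,i)\), \(2\le i\le d\), with \(c_i>0\).

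The one point needing care is that these classes really are disjoint and exhaust the support. The entry \((1,1)\) carries \(m_1\) and not a retention coefficient, since the index range of the \(c_i\) begins at \(i=2\); it therefore belongs to the fertility class. Each subdiagonal entry \(b_{i-1}\) is positive by assumption. No other positions of \(L\) are nonzero.

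Given the partition, I substitute the explicit entries of \(P\) listed after \eqref{PotentialDefinition}, namely \(P_{1j}=q_j\), \(P_{i,i-1}=\eta_i\) and \(P_{ii}=\rho_i\). Summing over the first class gives exactly \(\Phi_{\mathrm{tr}}\) in \eqref{TransitionPotential}, and summing over the second class gives \(\Phi_{\mathrm{ret}}\) in \eqref{RetentionPotential}. Because a finite sum splits additively over a partition of its index set, \eqref{PotentialDecomposition} follows.

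Stages with \(c_i=0\) contribute nothing to \(\Phi_{\mathrm{ret}}\), either because they are excluded by the index restriction or through the convention \(0\log0=0\). There is no genuine obstacle here; the only step requiring attention is the bookkeeping of the \((1,1)\) entry and of the zero-coefficient conventions described above.
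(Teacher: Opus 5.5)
Your proposal is correct and follows the paper's proof: the paper likewise splits the sum in \eqref{PotentialRepresentation} into the fertility and progression terms, which give \(\Phi_{\mathrm{tr}}\), and the retention self-loop terms, which give \(\Phi_{\mathrm{ret}}\). Your extra checks are harmless elaborations of what the paper leaves implicit: that the \((1,1)\) entry carries \(m_1\) rather than a retention coefficient, that the two index classes are disjoint and together cover all nonzero entries, and how zero coefficients are handled.
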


\begin{proof}
Separate in \eqref{PotentialRepresentation} the fertility and
progression entries from the entries for remaining in the same stage.
The first group is
exactly \eqref{TransitionPotential}, and the second is
\eqref{RetentionPotential}; their sum gives
\eqref{PotentialDecomposition}.
\end{proof}


\subsubsection{Leslie limit}
\label{Proof:LeslieLimit}

The following corollary establishes the limiting argument in
Subsection~\ref{subsec:leslie-limit}, in particular
equations~\eqref{LeslieEntropyFormula}--
\eqref{LesliePotentialFormula}.

\begin{corollary}[Leslie limit]
\label{cor:LeslieLimit}
As \(c_i\to0\) for \(i=2,\ldots,d\), the retention components
\(H_{\mathrm{ret}}\) and \(\Phi_{\mathrm{ret}}\) vanish, while \(H\)
and \(\Phi\) converge to the Leslie expressions
\eqref{LeslieEntropyFormula} and \eqref{LesliePotentialFormula}.
Consequently, \(r_L=H_L+\Phi_L=\log\lambda_L\).
\end{corollary}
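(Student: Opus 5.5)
The plan is a continuity argument at the Leslie boundary. Every quantity in \eqref{EntropyS}, \eqref{TDefinition} and \eqref{PotentialExplicit} is a continuous function of \(\lambda\) and the \(c_i\), so it suffices to control \(\lambda\) and the \(0\log0\) terms. The fertility and progression coefficients are held fixed throughout.

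\textbf{Step 1: convergence of \(\lambda\).} The dominant eigenvalue of a nonnegative matrix depends continuously on its entries. Hence \(\lambda\to\lambda_L>0\) as \(c_i\to0\); positivity of \(\lambda_L\) follows from irreducibility of the reproductive part of the Leslie matrix. Irreducibility is unaffected, because the retention self-loops play no role in the connectivity of the life-cycle graph.

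\textbf{Step 2: convergence of the derived quantities.} Since \(\lambda_L>0\), the denominators \(\lambda-c_{h+1}\) in \eqref{RjDefinition} stay bounded away from zero for small \(c\). It follows that
\[
R_j\to\prod_{h=1}^{j-1}b_h/\lambda_L, \qquad q_j\to q_j^{(0)}, \qquad Q_i\to Q_i^{(0)}.
\]
Also \(\eta_i=(\lambda-c_i)/\lambda\to1\) and \(\rho_i=c_i/\lambda\to0\). Consequently \(T\to T_L\) by \eqref{TDefinition}, and \(\pi_i\to\pi_i^{(0)}\) by \eqref{StationaryDistribution}.

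\textbf{Step 3: the retention entropy.} Continuity of \(x\mapsto -x\log x\) on \([0,1]\), with value \(0\) at \(x=0\) and \(x=1\), gives \(h_i=-\eta_i\log\eta_i-\rho_i\log\rho_i\to0\). Since the \(\pi_i\) are bounded, \(H_{\mathrm{ret}}\to0\) in \eqref{RetentionEntropy}.

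\textbf{Step 4: the transition entropy.} \(H_{\mathrm{tr}}=-\pi_1\sum q_j\log q_j\) converges to \(-T_L^{-1}\sum q_j^{(0)}\log q_j^{(0)}\), by the same continuity of \(x\log x\). This covers the case \(m_j=0\) as well, since then \(q_j\equiv0\). Combining Steps 3 and 4 with Theorem~\ref{thm:EntropyDecomposition} yields \eqref{LeslieEntropyFormula}.

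\textbf{Step 5: reproductive potential.} In \eqref{PotentialExplicit}, the terms \(\log m_j\) (for \(m_j>0\)) and \(\log b_{i-1}\) are constants, so \(\Phi_{\mathrm{tr}}\) converges to \eqref{LesliePotentialFormula}. For the retention part, each term of \(\Phi_{\mathrm{ret}}\) equals \(\pi_i\lambda^{-1}c_i\log c_i\), which tends to \(0\) because \(c\log c\to0\) and \(\pi_i/\lambda\) is bounded. Terms with \(c_i=0\) are either absent or zero by convention.

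\textbf{Step 6: the identity.} Finally, \(r=\log\lambda\to\log\lambda_L=r_L\). Passing to the limit in \eqref{GrowthEntropyPotentialIdentity} gives \(r_L=H_L+\Phi_L\). This limit identity can also be checked directly, by applying Corollary~\ref{cor:PotentialFormula} to the Leslie matrix itself.

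\textbf{Main obstacle.} The only delicate point is the justification in Step 1: continuity of \(\lambda\), together with positivity of the limit. Without a positive limit, the quantities in \eqref{RjDefinition} would not stay bounded. I would invoke simplicity of the Perron root for the irreducible limit matrix, which gives analytic dependence of \(\lambda\) near \(c=0\). Alternatively, the implicit function theorem can be applied to the Euler--Lotka equation \eqref{GeneralizedEulerLotka}. Its left side is strictly decreasing in \(\lambda\) for \(\lambda>\max c_i\), and this equation is exactly the Leslie characteristic equation at \(c=0\).
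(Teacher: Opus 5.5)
Your proposal is correct and follows essentially the same continuity argument as the paper. The dominant eigenvalue converges to \(\lambda_L\), so \(R_j\), \(q_j\), \(Q_i\) and \(T\) converge to their Leslie values; the \(x\log x\) retention terms vanish, and one passes to the limit in \(r=H+\Phi\). The differences are cosmetic: you bound the retention terms through \(\pi_i\) and \(\pi_i c_i\log c_i/\lambda\) rather than through \(Q_i/\eta_i\) and \(\rho_i\log c_i\), and you add a justification of \(\lambda_L>0\) that the paper leaves implicit.
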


\begin{proof}

Let \(c=(c_2,\ldots,c_d)\), and let \(L(c)\) converge to the
corresponding Leslie matrix as \(c_i\to0\), \(i=2,\ldots,d\).
Let \(\lambda_L\) denote the dominant eigenvalue of the limiting
Leslie matrix. By continuity of a simple dominant eigenvalue,
\[
\lambda(c)\longrightarrow\lambda_L.
\]
Consequently, for each \(j\),
\[
R_j(c)
\longrightarrow
R_j^{(0)}
=
\frac{b_1\cdots b_{j-1}}{\lambda_L^{\,j-1}},
\qquad
q_j(c)\longrightarrow
q_j^{(0)}
=
\frac{m_jb_1\cdots b_{j-1}}{\lambda_L^{\,j}},
\]
and therefore
\[
Q_i(c)\longrightarrow Q_i^{(0)},
\qquad
T(c)\longrightarrow
T_L
=
1+\sum_{i=2}^{d}Q_i^{(0)}.
\]

\begingroup
\color{black}
Moreover,
\[
\rho_i(c)\longrightarrow0,
\qquad
\eta_i(c)\longrightarrow1.
\]
Hence
\[
\rho_i(c)\log\rho_i(c)\longrightarrow0,
\qquad
\eta_i(c)\log\eta_i(c)\longrightarrow0,
\]
by \(x\log x\to0\) as \(x\downarrow0\) and continuity at \(x=1\),
respectively. Since \(Q_i(c)/\eta_i(c)\) remains bounded, every
retention term in \eqref{EntropyS} vanishes in the limit.
\endgroup

Thus
\[
H(c)
\longrightarrow
-\frac1{T_L}
\sum_{j=1}^{d}
q_j^{(0)}\log q_j^{(0)}
=
H_L.
\]

In \eqref{PotentialExplicit}, the terms involving retention vanish
because \(\rho_i\log c_i\to0\), whereas \(q_j\), \(Q_i\), and \(T\)
converge to their Leslie counterparts. This gives
\eqref{LesliePotentialFormula}. The limiting form of
\eqref{GrowthEntropyPotentialIdentity} then yields
\(r_L=H_L+\Phi_L=\log\lambda_L\), completing the proof of the result
stated in Subsection~\ref{subsec:leslie-limit}.

\end{proof}


\subsection{Sensitivities for existing demographic links}
\label{app:SupportPreservingResults}

Throughout this subsection, an \emph{existing-link perturbation} has
the meaning given in Subsection~\ref{subsec:support-preserving}: the
positive fertility and retention coefficients may vary, but zero
coefficients remain zero. Thus the index sets \(\mathcal F\) and
\(\mathcal R\) do not change. Changes that create a new link are
treated separately in Appendix~\ref{app:BoundaryResults}.

\subsubsection{Sensitivity of the Malthusian parameter}
\label{app:GrowthSensitivity}

The following theorem establishes
equations~\eqref{GrowthDifferential}--
\eqref{GrowthSensitivityFormulae}, introduced and interpreted in
Subsection~\ref{subsec:sensitivity-start}.

\begin{theorem}[Sensitivity of the Malthusian parameter]
\label{thm:GrowthSensitivity}
Let \(u\) and \(v\) be right and left eigenvectors associated with the
dominant eigenvalue of the
irreducible Lefkovitch matrix \(L\), normalised by \(vu=1\). For an
infinitesimal perturbation \(dL\), the dominant eigenvalue and
Malthusian parameter satisfy \eqref{GrowthDifferential}. In
particular, the fertility, progression, and retention sensitivities
are those in \eqref{GrowthSensitivityFormulae}, with the index ranges
specified immediately after that equation.
\end{theorem}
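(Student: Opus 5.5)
The plan is the standard first-order perturbation argument for a simple eigenvalue, followed by reading off the three structured entry positions of \(L\).

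\textbf{Step 1: simplicity and smooth dependence.} For an irreducible non-negative matrix, the Perron--Frobenius theorem says that \(\lambda\) is an algebraically simple eigenvalue. It has positive right and left eigenvectors \(u\) and \(v\), so \(vu>0\), and we may normalise to \(vu=1\).

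If the reproductive part carries a downstream post-reproductive block, as in the footnote of Section~\ref{sec:2}, then \(\lambda\) is still simple because the downstream block has a strictly smaller spectral radius. In that case one works with the eigenvectors supported on the reproductive part.

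Simplicity, together with the implicit function theorem applied to the characteristic polynomial, gives the following. In a neighbourhood of \(L\), the eigenvalue \(\lambda(L)\) and suitably normalised eigenvectors \(u(L)\), \(v(L)\) are smooth (indeed analytic) functions of the entries. This also holds for one-sided perturbations at entries equal to zero, since the characteristic polynomial is polynomial in all entries.

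\textbf{Step 2: differentiate the eigen-equation.} Differentiating \(Lu=\lambda u\) gives
\[
(dL)u+L\,du=(d\lambda)u+\lambda\,du .
\]
Multiplying on the left by \(v\) and using \(vL=\lambda v\) cancels the two terms containing \(du\). What remains is \(v(dL)u=d\lambda\,(vu)=d\lambda\).

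Since \(r=\log\lambda\) and \(\lambda>0\), the chain rule gives \(dr=d\lambda/\lambda\). Together these yield \eqref{GrowthDifferential}.

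\textbf{Step 3: specialise to the entries of \(L\).} Take \(dL=E_{ij}\,d\theta\), where \(E_{ij}\) is the matrix unit at the position of the parameter \(\theta\). Then \(v(dL)u=v_iu_j\,d\theta\). The positions are:
\begin{itemize}
\item \(m_k\) occupies \((1,k)\), giving \(v_1u_k\);
\item \(b_k\) occupies \((k+1,k)\), giving \(v_{k+1}u_k\);
\item \(c_k\) occupies \((k,k)\), giving \(v_ku_k\).
\end{itemize}
Dividing each by \(\lambda\) gives \eqref{GrowthSensitivityFormulae} with the stated index ranges.

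\textbf{Main obstacle.} The only delicate point is Step 1: justifying differentiability and the cancellation of the \(du\) terms. Both rest entirely on the simplicity of \(\lambda\) and on \(vu\neq0\), and both are guaranteed by Perron--Frobenius. Once these are in place, the rest of the proof is direct bookkeeping.
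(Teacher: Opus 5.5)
Your proposal is correct and follows the paper's proof: differentiate $Lu=\lambda u$, multiply on the left by $v$ so the $du$ terms cancel via $vL=\lambda v$, divide by $\lambda$, and read off the entries at positions $(1,k)$, $(k+1,k)$ and $(k,k)$. Your Perron--Frobenius and implicit-function justification of differentiability just makes explicit what the paper takes from the simplicity of $\lambda$.

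One minor correction concerns your aside on a downstream post-reproductive block, which this irreducible statement does not need. There only the left eigenvector $v$ vanishes on the post-reproductive stages, while $u$ remains strictly positive on them. You should therefore keep the full right eigenvector of $L$, not one supported on the reproductive part. This matters, for example, for the finite coefficient $v_1u_k/\lambda_0$ when a fertility link is created from a post-reproductive stage.
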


\begin{proof}
Since the dominant eigenvalue is simple, differentiating
\(Lu=\lambda u\) and multiplying on the left by \(v\) gives
\(d\lambda=v(dL)u\). Division by \(\lambda\), using
\(r=\log\lambda\), yields \eqref{GrowthDifferential}. Applying this
identity separately to entries \(L_{1k}=m_k\),
\(L_{k+1,k}=b_k\), and \(L_{kk}=c_k\) gives
\eqref{GrowthSensitivityFormulae}.
\end{proof}


\subsubsection{Entropy response in reproductive and retention probabilities}
\label{Lemma1}

This lemma establishes the first step of the calculation in
Subsubsection~\ref{subsubsec:normalized-pathways}, namely
equations~\eqref{EntropyDifferential}--\eqref{DiDefinition}.

\begin{lemma}[Entropy response in probability variables]
\label{lem:DifferentialRepresentation}
Along an existing-link perturbation, the entropy differential is
given by \eqref{EntropyDifferential}, with the coefficients \(C_j\)
and \(D_i\) defined by \eqref{CjDefinition} and
\eqref{DiDefinition}.
\end{lemma}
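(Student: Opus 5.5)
The plan is to differentiate the quotient $H=S/T$ directly, using the representation \eqref{EntropyFunctional} of $S$ together with \eqref{TDefinition} for $T$. Both are treated as functions of the probability variables $q_1,\ldots,q_d$ and $\eta_2,\ldots,\eta_d$, with $\rho_i=1-\eta_i$ and $Q_i=\sum_{j\ge i}q_j$.

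Smoothness comes first. Along an existing-link perturbation, $\lambda$ is a simple eigenvalue and so depends smoothly on the entries. Hence $q_j$ for $j\in\mathcal F$ and $\eta_i,\rho_i$ for $i\in\mathcal R$ are smooth and strictly positive, so every logarithm that appears is differentiable. For $j\notin\mathcal F$ we have $q_j\equiv0$, and for $i\notin\mathcal R$ we have $\eta_i\equiv1$, $\rho_i\equiv0$ and $h_i\equiv0$. These variables therefore contribute neither terms nor differentials. The quotient rule then gives
\[
dH=\frac1T\bigl(dS-H\,dT\bigr),
\qquad
dS=-\sum_{j\in\mathcal F}(1+\log q_j)\,dq_j+\sum_{i=2}^d d\!\left(\frac{Q_i h_i}{\eta_i}\right),
\qquad
dT=\sum_{i=2}^d d\!\left(\frac{Q_i}{\eta_i}\right).
\]

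Next I split each product term by the product rule into a part carrying $dQ_i$ and a part carrying $d\eta_i$.

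\emph{The $dQ_i$ part.} It collects to
\[
\sum_{i=2}^d\frac{h_i-H}{\eta_i}\,dQ_i .
\]
Substituting $dQ_i=\sum_{j\ge i}dq_j$ and interchanging the order of summation gives
\[
\sum_j dq_j\sum_{i=2}^{j}\frac{h_i-H}{\eta_i}.
\]
Only $j\in\mathcal F$ survive, since $dq_j=0$ otherwise. Combined with the entropy-of-$q$ term and the factor $1/T$, this yields exactly $C_j$ in \eqref{CjDefinition}. The constant $1$ in $-(1+\log q_j)$ may be kept as written, since $\sum_j dq_j=0$ by \eqref{qjNormalization}.

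\emph{The $d\eta_i$ part, for $i\in\mathcal R$.} It equals
\[
Q_i\left[d\!\left(\frac{h_i}{\eta_i}\right)-H\,d\!\left(\frac1{\eta_i}\right)\right].
\]
This is the only real computation, and I expect it to be the main, though modest, obstacle: one has to see the cancellation. Writing $h(\eta)=-\eta\log\eta-(1-\eta)\log(1-\eta)$ gives $h'(\eta)=\log(\rho/\eta)$. Hence
\[
\frac{d}{d\eta}\frac{h}{\eta}=\frac{\eta h'-h}{\eta^2}=\frac{\eta\log\rho+\rho\log\rho}{\eta^2}=\frac{\log\rho}{\eta^2},
\]
using $\eta+\rho=1$. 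Adding $-H\,d(1/\eta_i)=H\,d\eta_i/\eta_i^2$ then produces
\[
\frac{Q_i}{\eta_i^2}\bigl(H+\log\rho_i\bigr)\,d\eta_i .
\]
After division by $T$ this is $D_i$ in \eqref{DiDefinition}.

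Summing the two parts gives \eqref{EntropyDifferential}. The only care needed is to confirm that no boundary indices leak in. This is guaranteed by the invariance of $\mathcal F$ and $\mathcal R$ along existing-link perturbations.
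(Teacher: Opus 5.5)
Your proposal is correct and follows the paper's proof essentially step for step. Both use the quotient rule $dH=(dS-H\,dT)/T$, collect the $dQ_i$ terms and swap the order of summation to obtain $C_j$, and reduce the $d\eta_i$ coefficient to $H+\log\rho_i$ using $\eta_i+\rho_i=1$. The only cosmetic difference is that you differentiate $h_i/\eta_i$ in one step, whereas the paper first computes $dh_i=\log(\rho_i/\eta_i)\,d\eta_i$ and then simplifies the bracket $H-h_i+\eta_i\log(\rho_i/\eta_i)$.
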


\begin{proof}

Using the entropy functional \eqref{EntropyFunctional} from
Subsubsection~\ref{subsubsec:normalized-pathways}, along a
existing-link perturbation,
\(q_j=0\) for \(j\notin\mathcal F\) and
\(h_i=0\) for \(i\notin\mathcal R\), so
\[
S
=
-\sum_{j\in\mathcal F}q_j\log q_j
+
\sum_{i\in\mathcal R}\frac{Q_i}{\eta_i}h_i.
\]
Since \(H=S/T\),
\begin{equation}
dH=\frac1T(dS-H\,dT).
\label{QuotientRule}
\end{equation}
Furthermore,
\begin{equation}
dT
=
\sum_{i=2}^{d}\frac{dQ_i}{\eta_i}
-
\sum_{i\in\mathcal R}
\frac{Q_i}{\eta_i^2}\,d\eta_i,
\label{dTFormula}
\end{equation}
with
\begin{equation}
dQ_i
=
\sum_{\substack{j\in\mathcal F\\j\geq i}}dq_j.
\label{dQiFormula}
\end{equation}
Differentiating \(S\) gives
\begin{align}
dS
&=
-\sum_{j\in\mathcal F}(1+\log q_j)\,dq_j
\nonumber\\
&\quad+
\sum_{i\in\mathcal R}
\left(
\frac{dQ_i}{\eta_i}
-
\frac{Q_i}{\eta_i^2}\,d\eta_i
\right)h_i
+
\sum_{i\in\mathcal R}
\frac{Q_i}{\eta_i}\,dh_i .
\label{dSFormula}
\end{align}
For \(i\in\mathcal R\), since
\(\rho_i=1-\eta_i\),
\begin{equation}
dh_i
=
\log\!\left(\frac{\rho_i}{\eta_i}\right)d\eta_i.
\label{dhiFormula}
\end{equation}

Substitution of
\eqref{dTFormula}--\eqref{dhiFormula}
into \eqref{QuotientRule} shows that the coefficient of
\(dq_j\), \(j\in\mathcal F\), is
\[
\frac1T
\left[
-(1+\log q_j)
+
\sum_{i=2}^{j}
\frac{h_i-H}{\eta_i}
\right]
=
C_j.
\]
The coefficient of \(d\eta_i\), \(i\in\mathcal R\), is
\[
\frac{Q_i}{T\eta_i^2}
\left[
H-h_i+
\eta_i\log\!\left(\frac{\rho_i}{\eta_i}\right)
\right].
\]
Using
\[
h_i=-\eta_i\log\eta_i-\rho_i\log\rho_i,
\qquad
\eta_i+\rho_i=1,
\]
the bracket reduces to \(H+\log\rho_i\). Hence the coefficient is
\(D_i\), proving \eqref{EntropyDifferential}.

\end{proof}


\subsubsection{Parameter differentials}
\label{Lemma2}

This lemma establishes the second step of the calculation in
Subsubsection~\ref{subsubsec:parameter-differentials}, specifically
equations~\eqref{AjDefinition}--\eqref{detaFormula}.

\begin{lemma}[Parameter differentials]
\label{lem:ParameterDifferentials}
For an existing-link perturbation, let \(A_j\) be defined by
\eqref{AjDefinition}. Then the reproductive weights and
progression probabilities satisfy \eqref{dqjFormula} and
\eqref{detaFormula}.
\end{lemma}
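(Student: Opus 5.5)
The plan is to differentiate the explicit expressions in \eqref{qjDefinition} and \eqref{EtaRhoDefinition} logarithmically. Here \(\lambda\) is treated as a dependent variable; it is differentiable along any existing-link perturbation because the dominant eigenvalue is simple (Theorem~\ref{thm:GrowthSensitivity}). The differential \(d\lambda\) is kept as an independent symbol, to be eliminated later through \eqref{GrowthDifferential}.

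For \(j\in\mathcal F\) we have \(q_j>0\). Combining \eqref{RjDefinition} and \eqref{qjDefinition} gives
\[
\log q_j=\log m_j+\sum_{h=1}^{j-1}\log b_h-\sum_{h=1}^{j-1}\log(\lambda-c_{h+1})-\log\lambda ,
\]
which is well defined because \(m_j>0\), \(b_h>0\), and \(\lambda-c_{h+1}>0\). The last inequality holds since \(u_{h+1}>0\) and the eigen-recurrence \eqref{EigenStep} force \(\lambda>c_{h+1}\) for every stage \(h+1\); more generally \(\lambda\) exceeds every diagonal entry \(c_i\) of an irreducible nonnegative matrix with a positive subdiagonal feeding into stage \(i\). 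Differentiating term by term gives \(dm_j/m_j\) and \(db_h/b_h\). Each denominator contributes
\[
-\frac{d\lambda-dc_{h+1}}{\lambda-c_{h+1}},
\]
and the last term contributes \(-d\lambda/\lambda\). Collecting the \(d\lambda\) terms yields exactly \(-A_j\,d\lambda\) with \(A_j\) as in \eqref{AjDefinition}. Multiplying through by \(q_j\) gives \eqref{dqjFormula}. For \(h+1\notin\mathcal R\), the coefficient \(c_{h+1}\) stays zero under an existing-link perturbation, so \(dc_{h+1}=0\) and those terms drop out, as stated.

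For \(i\in\mathcal R\), we write \(\eta_i=1-c_i/\lambda\) and differentiate directly:
\[
d\eta_i=-\frac{dc_i}{\lambda}+\frac{c_i}{\lambda^2}\,d\lambda ,
\]
which is \eqref{detaFormula}.

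There is no real obstacle in this lemma. The only points needing care are the positivity of \(\lambda-c_{h+1}\), which justifies the logarithms, and the bookkeeping that groups the \(d\lambda\) contributions into \(A_j\).
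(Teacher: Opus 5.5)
Your proposal is correct and follows essentially the same route as the paper. The paper likewise takes logarithmic differentials of $q_j=\frac{m_j}{\lambda}\prod_{h=1}^{j-1}\frac{b_h}{\lambda-c_{h+1}}$, collects the $d\lambda$ terms into $A_j$, and differentiates $\eta_i=1-c_i/\lambda$ directly. Your additional justification that $\lambda>c_{h+1}$, obtained from $(\lambda-c_{h+1})u_{h+1}=b_h u_h>0$, is a small point the paper leaves implicit.
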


\begin{proof}

Starting from the reproductive weight \eqref{qjDefinition} and the
stage weight \eqref{RjDefinition}, for \(j\in\mathcal F\),
\[
q_j
=
\frac{m_j}{\lambda}
\prod_{h=1}^{j-1}
\frac{b_h}{\lambda-c_{h+1}}.
\]
Taking logarithmic differentials gives
\[
\frac{dq_j}{q_j}
=
\frac{dm_j}{m_j}
+
\sum_{h=1}^{j-1}\frac{db_h}{b_h}
+
\sum_{h=1}^{j-1}
\frac{dc_{h+1}}{\lambda-c_{h+1}}
-
\left(
\frac1\lambda+
\sum_{h=1}^{j-1}
\frac1{\lambda-c_{h+1}}
\right)d\lambda.
\]
By the definition of \(A_j\), this is exactly
\eqref{dqjFormula}. Under an existing-link perturbation,
\(dc_i=0\) whenever \(i\notin\mathcal R\).

Finally,
\[
\eta_i=1-\frac{c_i}{\lambda},
\]
so, for \(i\in\mathcal R\),
\[
d\eta_i
=
\frac{c_i}{\lambda^2}\,d\lambda
-
\frac1\lambda\,dc_i,
\]
which is \eqref{detaFormula}.

\end{proof}


\subsubsection{Entropy differential in demographic parameters}
\label{Theorem1}

This theorem corresponds to the assembled demographic differential
\eqref{IntermediateDifferential} in
Subsubsection~\ref{subsubsec:entropy-response}, with \(B\) defined by
\eqref{BDefinition}.

\begin{theorem}[Entropy differential]
\label{thm:EntropyDifferential}
For every existing-link perturbation, define \(B\) by
\eqref{BDefinition}. Then the entropy response in the original
fertility, progression, and retention parameters is exactly
\eqref{IntermediateDifferential}.
\end{theorem}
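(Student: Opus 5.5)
The plan is a direct substitution. Lemma~\ref{lem:ParameterDifferentials} is inserted into Lemma~\ref{lem:DifferentialRepresentation}, and the terms are then regrouped by demographic coefficient. Concretely, we start from \eqref{EntropyDifferential},
\[
dH=\sum_{j\in\mathcal F}C_j\,dq_j+\sum_{i\in\mathcal R}D_i\,d\eta_i ,
\]
and replace each \(dq_j\) by \eqref{dqjFormula} and each \(d\eta_i\) by \eqref{detaFormula}. Both lemmas hold along any existing-link perturbation, so the support sets \(\mathcal F\) and \(\mathcal R\) stay fixed. In \eqref{dqjFormula}, the terms with \(dc_{h+1}\) for \(h+1\notin\mathcal R\) vanish. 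Every \(dq_j\) with \(j\in\mathcal F\) is therefore a linear combination of \(dm_j\), of \(db_h\) and \(dc_{h+1}\) for \(h<j\), and of \(d\lambda\).

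We then collect coefficients one differential at a time.

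\emph{Fertility.} The term \(dm_j\) occurs only in \(dq_j\). Its coefficient is \(C_jq_j/m_j\), for \(j\in\mathcal F\).

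\emph{Progression.} The term \(db_k\) occurs in \(dq_j\) exactly when \(k\le j-1\), that is, when \(j\ge k+1\). Interchanging the order of summation gives
\[
\frac1{b_k}\sum_{j\in\mathcal F,\;j\ge k+1}C_jq_j .
\]

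\emph{Retention, through \(q_j\).} Write \(k=h+1\). The term \(dc_k\) occurs in \(dq_j\) exactly when \(k\le j\). It contributes
\[
\frac1{\lambda-c_k}\sum_{j\in\mathcal F,\;j\ge k}C_jq_j .
\]

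\emph{Retention, through \(\eta_k\).} The term \(-D_k\,dc_k/\lambda\) comes from \(D_k\,d\eta_k\), for \(k\in\mathcal R\).

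\emph{Growth.} The \(d\lambda\) terms come from two places. The first is \(-C_jq_jA_j\,d\lambda\) in each \(dq_j\). The second is \(D_i c_i\lambda^{-2}\,d\lambda\) in each \(d\eta_i\). Together they give exactly \(B\,d\lambda\) with \(B\) as in \eqref{BDefinition}.

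Adding these contributions reproduces \eqref{IntermediateDifferential}.

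The only step needing care is the index bookkeeping when the sums over \(h\) and \(j\) are interchanged. This is where the ranges \(j\ge k+1\) for \(b_k\) and \(j\ge k\) for \(c_k\) arise, with the shift \(k=h+1\). We also have to check that restricting the \(dc_k\) sum to \(k\in\mathcal R\) is legitimate. It is, because \(dc_k=0\) for \(k\notin\mathcal R\) along the perturbation, and \(D_k\) is defined only on \(\mathcal R\).

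Finally, \(d\lambda\) is deliberately left unexpanded, so the statement holds as an identity among differentials. Eliminating \(d\lambda\) through \eqref{GrowthDifferential} belongs to the next step, which produces the coefficientwise sensitivities.
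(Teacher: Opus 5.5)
Your proposal is correct and follows the paper's proof essentially verbatim: substitute \eqref{dqjFormula}--\eqref{detaFormula} into \eqref{EntropyDifferential}, collect the coefficients of $dm_k$, $db_k$, $dc_k$ and $d\lambda$, and observe that the $d\lambda$ coefficient is exactly $B$. You also spell out the index bookkeeping that the paper leaves implicit, namely the ranges $j\ge k+1$ for $b_k$ and $j\ge k$ for $c_k$ after the shift $k=h+1$, and the restriction of the $dc_k$ sum to $\mathcal R$.
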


\begin{proof}

Substitute the parameter differentials
\eqref{dqjFormula}--\eqref{detaFormula}, established in
Lemma~\ref{lem:ParameterDifferentials}, into the probability-based entropy
differential \eqref{EntropyDifferential}, established in
Lemma~\ref{lem:DifferentialRepresentation}. Collecting respectively
the fertility, progression, and retention terms gives
\[
\sum_{j\in\mathcal F}
C_jq_j\frac{dm_j}{m_j},
\]
\[
\sum_{k=1}^{d-1}
\left(
\sum_{\substack{j\in\mathcal F\\j\geq k+1}}
C_jq_j
\right)
\frac{db_k}{b_k},
\]
and
\[
\sum_{k\in\mathcal R}
\left[
\frac1{\lambda-c_k}
\sum_{\substack{j\in\mathcal F\\j\geq k}}
C_jq_j
-
\frac{D_k}{\lambda}
\right]dc_k.
\]
The remaining coefficient of \(d\lambda\) is
\[
-\sum_{j\in\mathcal F}C_jq_jA_j
+
\sum_{i\in\mathcal R}
D_i\frac{c_i}{\lambda^2}
=
B.
\]
This proves \eqref{IntermediateDifferential}, the formula stated and
interpreted in Subsubsection~\ref{subsubsec:entropy-response}.

\end{proof}


\subsubsection{Explicit entropy sensitivities}
\label{Theorem2}

This theorem establishes the coefficientwise formulas
\eqref{TotalSensitivityFormula}--\eqref{SensitivityC} in
Subsubsection~\ref{subsubsec:coefficientwise-sensitivities}.

\begin{theorem}[Explicit entropy-sensitivity formulas]
\label{thm:ExplicitSensitivity}
For every existing-link perturbation, the total entropy response
is \eqref{TotalSensitivityFormula}, and its sensitivities with respect
to positive fertility coefficients, progression coefficients, and
positive retention coefficients are respectively
\eqref{SensitivityM}, \eqref{SensitivityB}, and
\eqref{SensitivityC}.
\end{theorem}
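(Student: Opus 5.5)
The plan is to start from the assembled differential \eqref{IntermediateDifferential} of Theorem~\ref{thm:EntropyDifferential}. It already expresses \(dH\) as a linear combination of \(dm_k\) (\(k\in\mathcal F\)), \(db_k\) (\(k=1,\ldots,d-1\)) and \(dc_k\) (\(k\in\mathcal R\)), plus a single remaining term \(B\,d\lambda\). The only work left is to eliminate \(d\lambda\), so that \(dH\) becomes a linear form in the independent demographic differentials alone. The coefficients of that form are then the partial derivatives.

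To eliminate \(d\lambda\), invoke Theorem~\ref{thm:GrowthSensitivity}. For an existing-link perturbation, \(dL\) has nonzero entries only at positions \((1,k)\) with \(k\in\mathcal F\), \((k+1,k)\) for all \(k\), and \((k,k)\) with \(k\in\mathcal R\). Therefore
\[
d\lambda=v(dL)u=\sum_{k\in\mathcal F}v_1u_k\,dm_k+\sum_{k=1}^{d-1}v_{k+1}u_k\,db_k+\sum_{k\in\mathcal R}v_ku_k\,dc_k .
\]
Substituting this into \(B\,d\lambda\) and collecting coefficients parameter by parameter gives the following.
\begin{itemize}
\item The coefficient of \(dm_k\) is \(C_kq_k/m_k+Bv_1u_k\).
\item The coefficient of \(db_k\) is \(b_k^{-1}\sum_{j\in\mathcal F,\,j\ge k+1}C_jq_j+Bv_{k+1}u_k\).
\item The coefficient of \(dc_k\) is the bracket in \eqref{IntermediateDifferential} plus \(Bv_ku_k\).
\end{itemize}
This is exactly \eqref{TotalSensitivityFormula} with \eqref{SensitivityM}--\eqref{SensitivityC}.

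To identify these coefficients as genuine partial derivatives, I would argue as follows. Near the baseline, on the open set where the coefficients in \(\mathcal F\) and \(\mathcal R\) stay positive, the matrix remains irreducible with a simple dominant eigenvalue. Hence \(\lambda\), \(u\) and \(v\) are smooth (real-analytic) in the parameters, and \(\lambda-c_k>0\) there. All quantities \(q_j,\eta_i,\rho_i,Q_i,T,S\) are then smooth, since no logarithm is evaluated at \(0\) within the support. So \(H\) is differentiable, and its differential is unique. Setting all but one parameter differential to zero recovers each partial derivative.

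The main obstacle, mild as it is, is bookkeeping. First, the fertility sum over \(j\ge k+1\) arises because \(b_k\) enters \(R_j\) only for \(j\ge k+1\), while \(c_k\) enters \(R_j\) through the factor \(1/(\lambda-c_k)\) only for \(j\ge k\); this accounts for the differing index ranges. Second, the term \(-D_k/\lambda\) comes from \eqref{detaFormula}. I would check both against the index shifts in \eqref{dqjFormula}. Third, I must confirm that no \(d\lambda\)-dependence has been double counted: the \(c_i/\lambda^2\) part of \(d\eta_i\) and the \(A_j\) part of \(dq_j\) are both already absorbed into \(B\).
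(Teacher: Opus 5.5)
Your proposal is correct and is the paper's proof: it expands $d\lambda=v(dL)u$ over the existing links, substitutes the result into the $B\,d\lambda$ term of \eqref{IntermediateDifferential}, and collects the coefficients of $dm_k$, $db_k$ and $dc_k$. You also check that $H$ is smooth on the open set where the support is fixed, so the collected coefficients are genuine partial derivatives; the paper leaves this justification implicit.
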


\begin{proof}

By Theorem~\ref{thm:GrowthSensitivity}, since the dominant eigenvalue is
simple,
\begin{equation}
d\lambda=v(dL)u.
\label{PerronFormula}
\end{equation}
For an existing-link perturbation,
\[
d\lambda
=
\sum_{k\in\mathcal F}v_1u_k\,dm_k
+
\sum_{k=1}^{d-1}v_{k+1}u_k\,db_k
+
\sum_{k\in\mathcal R}v_ku_k\,dc_k.
\]
Substituting this expression into
\eqref{IntermediateDifferential} and collecting the coefficients
of \(dm_k\), \(db_k\), and \(dc_k\) gives respectively
\eqref{SensitivityM}, \eqref{SensitivityB}, and
\eqref{SensitivityC}.

\end{proof}


\subsubsection{Explicit reproductive-potential sensitivities}
\label{app:PotentialSensitivity}

The following corollary corresponds to
equations~\eqref{PotentialSensitivityM}--
\eqref{PotentialSensitivityC} in
Subsubsection~\ref{subsubsec:coefficientwise-sensitivities}.

\begin{corollary}[Explicit reproductive-potential sensitivity formulas]
\label{cor:PotentialSensitivity}
For every existing-link perturbation, the sensitivities of
reproductive potential with respect to positive fertility
coefficients, progression coefficients, and positive retention
coefficients are respectively \eqref{PotentialSensitivityM},
\eqref{PotentialSensitivityB}, and \eqref{PotentialSensitivityC}.
\end{corollary}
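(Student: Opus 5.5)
The plan is to avoid any new differential calculation and to obtain the three formulas by subtraction, using the identity \(r=H+\Phi\) of \eqref{GrowthEntropyPotentialIdentity}. Along an existing-link perturbation, the index sets \(\mathcal F\) and \(\mathcal R\) stay fixed. Both \(r=\log\lambda\) and \(H\) are then differentiable functions of the positive parameters: \(\lambda\) is a simple eigenvalue, hence smooth, and \(H=S/T\) is built from \(q_j\), \(\eta_i\), \(\rho_i\) with all logarithms taken only of strictly positive quantities. Hence \(\Phi=r-H\) is differentiable on this region as well. Differentiating the identity gives \eqref{SensitivityIdentity} in the form
\[
\frac{\partial\Phi}{\partial\theta}=\frac{\partial r}{\partial\theta}-\frac{\partial H}{\partial\theta},
\]
for every \(\theta\in\{m_k:k\in\mathcal F\}\cup\{b_1,\ldots,b_{d-1}\}\cup\{c_k:k\in\mathcal R\}\).

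Next, I will insert the growth sensitivities of Theorem~\ref{thm:GrowthSensitivity}, namely \(v_1u_k/\lambda\), \(v_{k+1}u_k/\lambda\) and \(v_ku_k/\lambda\), together with the entropy sensitivities \eqref{SensitivityM}--\eqref{SensitivityC} of Theorem~\ref{thm:ExplicitSensitivity}. I will treat each parameter class in turn.

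For \(m_k\), the difference is \(v_1u_k/\lambda-C_kq_k/m_k-Bv_1u_k\). Grouping the two eigenvector terms gives \((1/\lambda-B)v_1u_k\), which is \eqref{PotentialSensitivityM}.

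For \(b_k\), the direct term \(b_k^{-1}\sum_{j\in\mathcal F,\,j\ge k+1}C_jq_j\) changes sign, and the eigenvector terms combine to \((1/\lambda-B)v_{k+1}u_k\). This gives \eqref{PotentialSensitivityB}.

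For \(c_k\), both direct terms change sign, so \(-D_k/\lambda\) becomes \(+D_k/\lambda\), and the remaining eigenvector terms give \((1/\lambda-B)v_ku_k\). This gives \eqref{PotentialSensitivityC}.

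There is no serious obstacle, since the argument is pure bookkeeping. The one point that needs care is justifying the use of \(\Phi=r-H\) in differentiated form. This requires staying within the existing-link region, so that no \(\log 0\) term appears and \(\Phi\) is a genuine \(C^1\) function of the varied parameters. At the boundary, where a link is created, the argument fails, which is consistent with the singular behaviour \eqref{BoundarySingularSensitivity}.
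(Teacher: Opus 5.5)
Your proposal is correct and follows the paper's proof: differentiate $r=H+\Phi$ to get $\partial\Phi/\partial\theta=\partial r/\partial\theta-\partial H/\partial\theta$, then subtract \eqref{SensitivityM}--\eqref{SensitivityC} from \eqref{GrowthSensitivityFormulae} parameter by parameter. Your sign bookkeeping for $m_k$, $b_k$, and $c_k$ is right. Your remark that differentiability holds only inside the existing-link region is a sound justification that the paper leaves implicit.
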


\begin{proof}
Differentiate the identity \eqref{GrowthEntropyPotentialIdentity} to
obtain \eqref{SensitivityIdentity}. Subtract the entropy sensitivities
\eqref{SensitivityM}--\eqref{SensitivityC} from the growth
sensitivities \eqref{GrowthSensitivityFormulae}, parameter by
parameter. This gives
\eqref{PotentialSensitivityM}--\eqref{PotentialSensitivityC}.
\end{proof}


\subsection{Creation of new demographic links}
\label{app:BoundaryResults}

\subsubsection{Behaviour when a new link is created}
\label{Proof:BoundaryLinks}

This proposition establishes the small-\(\varepsilon\) formulas for
new fertility and retention links in
\eqref{BoundaryGrowthFertility}--
\eqref{BoundarySingularSensitivity}, introduced in
Subsection~\ref{subsec:new-links}.

\begin{proposition}[Creation of a new link]
\label{prop:BoundaryLinks}
Let \(L_0\) be an irreducible Lefkovitch matrix with the notation used
in Subsection~\ref{subsec:new-links}. If a zero fertility coefficient
\(m_k\) is replaced by \(\varepsilon>0\), then
\eqref{BoundaryGrowthFertility}--
\eqref{BoundaryFertilityCoefficient} hold. If a zero retention
coefficient \(c_k\) is replaced by \(\varepsilon>0\), then
\eqref{BoundaryGrowthRetention}--
\eqref{BoundaryRetentionCoefficient} hold. In both cases the
one-sided behaviour is that of
\eqref{BoundarySingularSensitivity}.
\end{proposition}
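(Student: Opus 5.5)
The strategy is to separate the smooth part of every quantity from the single singular term created by the new link. I will then obtain the \(\Phi\) expansion from \(r=H+\Phi\), so no separate analysis of \(\Phi\) is needed.

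\emph{Step 1: smoothness of the regular ingredients.} The matrix \(L_\varepsilon\) depends affinely on \(\varepsilon\), and \(\lambda_0\) is a simple eigenvalue, so \(\lambda(\varepsilon)\) is analytic near \(\varepsilon=0\) (one can extend to \(\varepsilon<0\) formally). By Theorem~\ref{thm:GrowthSensitivity}, \(\lambda(\varepsilon)=\lambda_0+v_1u_k\varepsilon+O(\varepsilon^2)\) for a fertility link and \(\lambda(\varepsilon)=\lambda_0+v_ku_k\varepsilon+O(\varepsilon^2)\) for a retention link. Taking logarithms gives \eqref{BoundaryGrowthFertility} and \eqref{BoundaryGrowthRetention}.

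Next I check that \(\lambda>c_i\) for every \(i\). This follows from the eigen-recurrence \eqref{EigenStep} together with positivity of \(u\), and it persists for small \(\varepsilon\). Hence every \(R_j\), \(q_j\), \(Q_i\), \(\eta_i\), \(\rho_i\), \(T\) and \(\pi_i\) given by \eqref{RjDefinition}--\eqref{TDefinition} is a smooth function of \(\varepsilon\) on \([0,\varepsilon_0)\), with the baseline values at \(\varepsilon=0\). In particular \(q_k(\varepsilon)=\varepsilon R_k/\lambda\) is \(\varepsilon R_k^{(0)}/\lambda_0+O(\varepsilon^2)\) in the fertility case, and \(\rho_k(\varepsilon)=\varepsilon/\lambda(\varepsilon)\) in the retention case.

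\emph{Step 2: isolating the singular term in \(H\).} I write \(H\) as in Theorem~\ref{thm:EntropyDecomposition}, that is, as \(-\sum_{i,j}\pi_iP_{ij}\log P_{ij}\). Every term is of the form \(\pi_i\,x\log x\) where \(x=P_{ij}\) is a smooth function. I split the terms into two groups.

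\emph{Regular terms.} These are the terms with \(P_{ij}\) bounded away from \(0\) at \(\varepsilon=0\). They are smooth in \(\varepsilon\), so each contributes its baseline value plus \(O(\varepsilon)\). The term \(\eta_k\log\eta_k\) also belongs here: since \(\eta_k=1-\varepsilon/\lambda\), it equals \(O(\varepsilon)\).

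\emph{The singular term.} This is the unique term with \(x(0)=0\): \(-\pi_1 q_k\log q_k\) in the fertility case, or \(-\pi_k\rho_k\log\rho_k\) in the retention case. Write \(x(\varepsilon)=a\varepsilon(1+O(\varepsilon))\) with \(a>0\). Then
\[
x\log x=a\varepsilon\log\varepsilon+O(\varepsilon),
\]
and multiplying by \(\pi_i(\varepsilon)=\pi_i^{(0)}+O(\varepsilon)\) changes this only by \(O(\varepsilon^2\log\varepsilon)\subset O(\varepsilon)\).

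The coefficients follow directly. In the fertility case, \(a=R_k^{(0)}/\lambda_0\) and \(\pi_1^{(0)}=1/T_0\), which gives \(\kappa_k^{(m)}\). In the retention case, \(a=1/\lambda_0\) and \(\pi_k^{(0)}=Q_k^{(0)}/(\eta_k^{(0)}T_0)=Q_k^{(0)}/T_0\), which gives \(\kappa_k^{(c)}\).

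\emph{Positivity of the coefficients.} \(R_k^{(0)}>0\) because every \(b_h>0\) and \(\lambda_0>c_{h+1}\). For \(Q_k^{(0)}>0\), irreducibility of the reproductive part is needed: some stage \(j\geq k\) must have \(m_j>0\). Otherwise stage \(k\) would lie in the downstream post-reproductive block discussed in the footnote. In that case I would note that the statement is to be read with \(\kappa\geq0\), or that \(k\) is assumed to lie in the reproductive part. This gives \eqref{BoundaryEntropyFertility} and \eqref{BoundaryEntropyRetention}.

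\emph{Step 3: the \(\Phi\) expansion and the one-sided derivatives.} Since \(\Phi=r-H\) by Corollary~\ref{cor:PotentialFormula}, subtracting the expansions yields \eqref{BoundaryPotentialFertility} and \eqref{BoundaryPotentialRetention}, because \(r\) contributes only \(O(\varepsilon)\).

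For \eqref{BoundarySingularSensitivity}, divide by \(\varepsilon\):
\[
\frac{H(\varepsilon)-H(0)}{\varepsilon}=-\kappa\log\varepsilon+O(1)\to+\infty,
\]
and the difference quotient of \(\Phi\) tends to \(-\infty\), while that of \(r\) stays finite.

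\emph{Main obstacle.} The step I expect to be hardest is controlling the remainder uniformly. This requires verifying smoothness of \(\pi_i\) and \(T\) up to \(\varepsilon=0\), which rests on \(\lambda(\varepsilon)>c_i\) and \(\eta_i>0\). It also requires making sure that no other probability vanishes at \(\varepsilon=0\) in a way that would generate a second logarithmic term.
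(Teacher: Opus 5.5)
Your proposal is correct and follows the paper's own proof. Both obtain \(r(\varepsilon)\) from smoothness of the simple dominant eigenvalue, read \(\kappa\) off the single new \(x\log x\) term (\(-\pi_1q_k\log q_k\) or \(-\pi_k\rho_k\log\rho_k\)), get the opposite singular term in \(\Phi\) from \(r=H+\Phi\), and divide by \(\varepsilon\). Your extra checks fill in details the paper leaves implicit: that \(\lambda>c_i\) keeps every regular quantity smooth up to \(\varepsilon=0\), and that strict positivity of \(\kappa_k^{(c)}\) needs \(Q_k^{(0)}>0\), which is automatic when \(m_d>0\) but fails for a retention link created in a downstream post-reproductive stage, where \(H\) and \(\Phi\) are in fact unchanged.
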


\begin{proof}

Because the dominant eigenvalue of \(L_0\) is simple, the dominant
eigenvalue \(\lambda(\varepsilon)\), the Malthusian parameter
\(r(\varepsilon)\), and the associated normalised eigenvectors depend
smoothly on \(\varepsilon\).

Suppose first that \(m_k=0\) is replaced by \(\varepsilon>0\).
Since all other coefficients are fixed,
\[
q_k(\varepsilon)
=
\frac{R_k^{(0)}}{\lambda_0}\,\varepsilon
+
O(\varepsilon^2).
\]
The only logarithmic entropy contribution produced by the new link is
therefore
\[
-\pi_1q_k\log q_k
=
-\frac{R_k^{(0)}}{\lambda_0T_0}
\,\varepsilon\log\varepsilon
+
O(\varepsilon),
\]
which gives \eqref{BoundaryEntropyFertility} with
\[
\kappa_k^{(m)}
=
\frac{R_k^{(0)}}{\lambda_0T_0}.
\]
Since \(r=H+\Phi\) and \(r\) is differentiable,
the singular term in \(\Phi\) has the opposite sign, giving
\eqref{BoundaryPotentialFertility}. The expansion of \(r\) follows directly
from \eqref{GrowthSensitivityFormulae}.

Now suppose that \(c_k=0\) is replaced by \(\varepsilon>0\). Then
\[
\rho_k(\varepsilon)
=
\frac{\varepsilon}{\lambda_0}
+
O(\varepsilon^2),
\]
and the only logarithmic contribution is
\[
-\pi_k
\left(
\eta_k\log\eta_k+\rho_k\log\rho_k
\right)
=
-\frac{\pi_k^{(0)}}{\lambda_0}
\,\varepsilon\log\varepsilon
+
O(\varepsilon).
\]
Hence
\[
\kappa_k^{(c)}
=
\frac{\pi_k^{(0)}}{\lambda_0}
=
\frac{Q_k^{(0)}}{\lambda_0T_0},
\]
which gives \eqref{BoundaryEntropyRetention}. Again,
\(r=H+\Phi\) gives the opposite logarithmic contribution to
\(\Phi\), and the finite expansion of \(r\) follows from
\eqref{GrowthSensitivityFormulae}.

Dividing the expansions by \(\varepsilon\) and letting
\(\varepsilon\downarrow0\), the relation
\[
-\log\varepsilon\longrightarrow+\infty
\]
gives the stated one-sided derivative behaviour.

\end{proof}


\subsection{Leslie and open-group cases and the scaling check}
\label{app:ReductionResults}

\subsubsection{Reduction to Leslie sensitivities}
\label{Corollary1}

This corollary establishes the convergence described in
Subsection~\ref{subsec:sensitivity-reductions}. Its proof identifies
the precise limits of the quantities entering
\eqref{SensitivityM}--\eqref{SensitivityC}.

\begin{corollary}[Leslie sensitivity reduction]
\label{Cor:LeslieCase}
As \(c_i\to0\), \(i=2,\ldots,d\), while the same fertility links remain
present and each \(c_i\) approaches zero through positive values, the
sensitivities with respect to fertility and progression
converge to the corresponding Leslie sensitivities. The entropy
sensitivities reduce to the Demetrius--Gundlach--Ziehe formulas, the
growth sensitivities reduce to the classical Leslie formulas, and
the potential sensitivities follow from
\eqref{SensitivityIdentity}. Creating a positive retention coefficient
from a Leslie matrix is governed by
Proposition~\ref{prop:BoundaryLinks}.
\end{corollary}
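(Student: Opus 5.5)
The plan is a continuity argument. Every ingredient of \eqref{SensitivityM}--\eqref{SensitivityB} converges as \(c\to0^+\) along existing-link directions, and the limiting expressions are then identified with the Demetrius--Gundlach--Ziehe (DGZ) formulas.

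First, by simplicity of the dominant eigenvalue, \(\lambda(c)\to\lambda_L\). Normalised eigenvectors \(u(c)\to u^L\) and \(v(c)\to v^L\) with \(v^Lu^L=1\). As in the proof of Corollary~\ref{cor:LeslieLimit}, \(R_j\), \(q_j\), \(Q_i\) and \(T\) converge to their Leslie values. Since the fertility links are unchanged, \(q_j^{(0)}>0\) for \(j\in\mathcal F\), so \(\log q_j\) converges.

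Next I would treat the retention-dependent pieces. Since \(\eta_i\to1\) and \(h_i\to0\), \(C_j\) tends to
\[
C_j^{L}=\frac{1}{T_L}\Bigl[-(1+\log q_j^{(0)})-(j-1)H_L\Bigr],
\qquad
A_j\to \frac{j}{\lambda_L}.
\]
In \(B\), the retention term is \(\sum_{i\in\mathcal R}D_ic_i/\lambda^2\). Here \(D_i=Q_i(H+\log\rho_i)/(T\eta_i^2)\) contains \(\log\rho_i\to-\infty\). However, \(D_ic_i/\lambda^2\) behaves like \(\rho_i\log\rho_i\to0\), so \(B\to B_L=-\sum_{j\in\mathcal F}C_j^Lq_j^{(0)}j/\lambda_L\). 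This cancellation is the only delicate point. It is also why the statement excludes the \(c_k\)-sensitivities: \(D_k/\lambda\) in \eqref{SensitivityC} diverges logarithmically, which is consistent with Proposition~\ref{prop:BoundaryLinks}. Hence \(\partial H/\partial m_k\) and \(\partial H/\partial b_k\) converge to the same expressions with Leslie quantities.

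The main obstacle is the identification step. I must verify that these limits coincide with the DGZ entropy sensitivities, which are expressed through \(\ell_j\), \(V_j=\ell_jm_j\) and \(p_j=V_j\lambda_L^{-j}\). My approach is to derive the Leslie entropy sensitivity directly from \(H_L=-\sum p_j\log p_j/\sum jp_j\), using the Leslie Euler--Lotka equation and implicit differentiation of \(\lambda_L\). I would use the identity \(T_L=\sum_j jq_j^{(0)}\), which follows from \(1+\sum_{i\ge2}Q_i^{(0)}=\sum_j jq_j^{(0)}\). I would also rewrite \(v_1^Lu_k^L/\lambda_L\) as \(\ell_k\lambda_L^{-k}/T_L\), using the explicit Leslie eigenvectors with \(vu=1\). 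After these rewritings the two expressions match term by term, with \(C_j^L\) producing the \(-(\log p_j+1+ (j-1)H)\)-type factors of DGZ.

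The growth case is simpler. The formulas \(v_1u_k/\lambda\) and \(v_{k+1}u_k/\lambda\) converge by eigenvector continuity to the classical Leslie formulas. The potential sensitivities then follow by subtraction via \eqref{SensitivityIdentity}. The final sentence of the statement is exactly Proposition~\ref{prop:BoundaryLinks} applied at \(L_0\) Leslie.
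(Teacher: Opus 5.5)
Your proposal is correct and follows the paper's proof essentially step for step. Both arguments use continuity of $\lambda$, $q_j$, $Q_i$, $T$ and $H$, the limits $C_j\to C_j^{(0)}$ and $A_j\to j/\lambda_L$, and the one delicate estimate $D_ic_i=\frac{Q_i\lambda}{T\eta_i^2}\bigl(\rho_iH+\rho_i\log\rho_i\bigr)\to0$, which gives $B\to B^{(0)}$.

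The paper does not carry out your separate identification step. It treats it as immediate, because \eqref{SensitivityM}--\eqref{SensitivityB} evaluated at the Leslie matrix with $\mathcal R=\emptyset$ are exactly your limit expressions. Your re-derivation from $H_L$ is therefore needed only to match the DGZ notation, and its auxiliary identities $T_L=\sum_j jq_j^{(0)}$ and $v_1^Lu_k^L/\lambda_L=\ell_k\lambda_L^{-k}/T_L$ are correct.
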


\begin{proof}

\begingroup
\color{black}
Let \(c=(c_2,\ldots,c_d)\), and let \(L(c)\) converge to the
corresponding Leslie matrix as \(c_i\to0\), \(i=2,\ldots,d\), with
the same positive fertility coefficients \(\mathcal F\) throughout
the limit.
\endgroup
By continuity of the simple dominant eigenvalue,
\[
\lambda(c)\to\lambda_L.
\]
From the limits established in Appendix~\ref{Proof:LeslieLimit},
\[
q_j(c)\to q_j^{(0)},
\qquad
Q_i(c)\to Q_i^{(0)},
\qquad
T(c)\to T_L,
\qquad
H(c)\to H_L,
\]
while
\[
\eta_i(c)\to1,
\qquad
\rho_i(c)\to0,
\qquad
h_i(c)\to0.
\]
Consequently, for \(j\in\mathcal F\),
\[
C_j(c)
\longrightarrow
C_j^{(0)}
=
\frac1{T_L}
\left[
-(1+\log q_j^{(0)})-(j-1)H_L
\right],
\qquad
A_j(c)\longrightarrow\frac{j}{\lambda_L}.
\]

\begingroup
\color{black}
It remains only to control the retention contribution to \(B\).
Since
\[
D_i(c)c_i
=
\frac{Q_i(c)c_i}
{T(c)\eta_i(c)^2}
\left(
H(c)+\log\rho_i(c)
\right)
\]
and
\[
c_i=\lambda(c)\rho_i(c),
\]
we obtain
\[
D_i(c)c_i
=
\frac{Q_i(c)\lambda(c)}
{T(c)\eta_i(c)^2}
\left[
\rho_i(c)H(c)
+
\rho_i(c)\log\rho_i(c)
\right].
\]
The prefactor remains bounded, while
\[
\rho_i(c)H(c)\longrightarrow0,
\qquad
\rho_i(c)\log\rho_i(c)\longrightarrow0.
\]
Hence
\[
D_i(c)c_i\longrightarrow0.
\]
\endgroup
Consequently
\[
B(c)\longrightarrow
B^{(0)}
=
-\sum_{j\in\mathcal F}
C_j^{(0)}q_j^{(0)}
\frac{j}{\lambda_L}.
\]

Substitution of these limits into
\eqref{SensitivityM} and \eqref{SensitivityB} gives the corresponding
Leslie fertility and progression sensitivities. Retention coefficients
are absent from the Leslie model; creating a positive retention
coefficient from \(c_i=0\) is instead described by
equations~\eqref{BoundaryGrowthRetention}--
\eqref{BoundarySingularSensitivity}.

\end{proof}


\subsubsection{Open-group Leslie sensitivities}
\label{app:OpenGroupSensitivity}

This corollary establishes the open-group sensitivity case in
Subsection~\ref{subsec:sensitivity-reductions}, in particular
equations~\eqref{OpenGroupGrowthSensitivity}--
\eqref{OpenGroupPotentialSensitivity}.

\begin{corollary}[Open-group Leslie sensitivity]
\label{cor:OpenGroupSensitivity}
If \(c_2=\cdots=c_{d-1}=0\) and \(c_d=\rho>0\), then, for
existing-link perturbations, \(\mathcal R=\{d\}\) and the general
formulas \eqref{GrowthSensitivityFormulae},
\eqref{SensitivityM}--\eqref{SensitivityC}, and
\eqref{PotentialSensitivityM}--\eqref{PotentialSensitivityC} apply
with that specialisation. In particular, the terminal-retention
sensitivities are \eqref{OpenGroupGrowthSensitivity}--
\eqref{OpenGroupPotentialSensitivity}. Creating an internal retention
link is instead governed by Proposition~\ref{prop:BoundaryLinks}.
\end{corollary}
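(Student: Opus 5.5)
The plan is to treat Corollary~\ref{cor:OpenGroupSensitivity} as a pure specialisation of the existing-link results, checking only that the hypotheses of those results hold and that the index sets collapse as claimed. First, under \(c_2=\cdots=c_{d-1}=0\) and \(c_d=\rho>0\), the definition of \(\mathcal R\) in Subsection~\ref{subsec:support-preserving} immediately gives \(\mathcal R=\{d\}\). An existing-link perturbation keeps the internal retention coefficients equal to zero, so \(dc_i=0\) for \(2\leq i<d\). Consequently \(\eta_i=1\), \(\rho_i=0\), and \(h_i=0\) for these indices throughout the perturbation. The matrix remains an irreducible Lefkovitch matrix, so Theorem~\ref{thm:GrowthSensitivity}, Theorem~\ref{thm:ExplicitSensitivity}, and Corollary~\ref{cor:PotentialSensitivity} apply verbatim with this \(\mathcal R\). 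That establishes the general part of the statement, namely that \eqref{GrowthSensitivityFormulae}, \eqref{SensitivityM}--\eqref{SensitivityC}, and \eqref{PotentialSensitivityM}--\eqref{PotentialSensitivityC} hold with the open-group specialisation.

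Second, I would read off the terminal-retention formulas by setting \(k=d\).
\begin{itemize}
\item From \eqref{GrowthSensitivityFormulae}, \(\partial r/\partial c_d=v_du_d/\lambda\), which is \eqref{OpenGroupGrowthSensitivity}.
\item In \eqref{SensitivityC}, the inner sum \(\sum_{j\in\mathcal F,\,j\geq d}C_jq_j\) reduces to the single term \(C_dq_d\). If \(d\notin\mathcal F\), then \(q_d=0\) and the term vanishes, so the formula still reads correctly with the convention \(C_dq_d=0\); I would note this explicitly. Together with \(c_d=\rho\), this yields \eqref{OpenGroupEntropySensitivity}.
\item Subtracting as in \eqref{SensitivityIdentity} gives \eqref{OpenGroupPotentialSensitivity}.
\end{itemize}
I would also remark that in \(C_j\) the sum \(\sum_{i=2}^{j}(h_i-H)/\eta_i\) simplifies to \(-(j-1)H\) for \(j<d\), with the extra \((h_d-H)/\eta_d\) term appearing only for \(j=d\). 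Likewise, \(B\) retains only the single retention term \(D_d\rho/\lambda^2\). This is not needed for the proof but confirms consistency with \eqref{OpenGroupS}--\eqref{OpenGroupT}.

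Third, for the final sentence I would observe that raising an internal \(c_i\), \(2\leq i<d\), from zero changes the support. It therefore lies outside the existing-link setting, and it is exactly the retention case of Proposition~\ref{prop:BoundaryLinks}.

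The only delicate point is the bookkeeping when \(d\notin\mathcal F\), or more generally whether the stated formula is meant with \(C_d\) undefined. I would handle this by interpreting \(C_dq_d\) as zero whenever \(q_d=0\), which is the convention already implicit in restricting the sums to \(j\in\mathcal F\). Everything else is direct substitution.
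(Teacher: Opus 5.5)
Your proposal is correct and follows the paper's proof essentially verbatim: you identify \(\mathcal R=\{d\}\), substitute \(k=d\) and \(c_d=\rho\) into \eqref{GrowthSensitivityFormulae}, \eqref{SensitivityC}, and \eqref{PotentialSensitivityC}, and observe that raising an internal \(c_i=0\) changes the support, so it falls under Proposition~\ref{prop:BoundaryLinks}. Your explicit convention \(C_dq_d=0\) when \(d\notin\mathcal F\) is a small bookkeeping refinement that the paper leaves implicit.
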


\begin{proof}
For an open-group Leslie matrix the only positive retention
coefficient is \(c_d=\rho\), so \(\mathcal R=\{d\}\). Substitution of
\(k=d\) and \(c_d=\rho\) in
\eqref{GrowthSensitivityFormulae}, \eqref{SensitivityC}, and
\eqref{PotentialSensitivityC} gives respectively
\eqref{OpenGroupGrowthSensitivity},
\eqref{OpenGroupEntropySensitivity}, and
\eqref{OpenGroupPotentialSensitivity}. An internal coefficient
\(c_i=0\), \(i<d\), is not an existing link, so its creation falls
under Proposition~\ref{prop:BoundaryLinks}.
\end{proof}


\subsubsection{Uniform-scaling identities}
\label{Proof:UniformScaling}

The following corollary establishes the scaling argument and the three
identities in
Subsection~\ref{subsec:uniform-scaling}, specifically
\eqref{UniformScalingIdentities}.

\begin{corollary}[Uniform-scaling identities]
\label{cor:UniformScaling}
Let \(\mathcal P_+\) be the indexed collection of positive demographic
parameters defined immediately before
\eqref{UniformScalingIdentities}. Then the parameter-scaled
sensitivities of \(r\), \(H\), and \(\Phi\) satisfy the three exact
identities in \eqref{UniformScalingIdentities}.
\end{corollary}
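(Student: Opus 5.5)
The plan is to prove the three identities by differentiating along a single one-parameter family: the uniform rescaling of the positive coefficients. For \(s>0\), let \(L(s)\) be obtained from \(L\) by replacing every \(\theta\in\mathcal P_+\) by \(s\theta\), with zero coefficients left at zero. Then \(L(s)=sL\) exactly, because the zero entries stay zero. The support is unchanged, so this is an existing-link perturbation, and the index sets \(\mathcal F\) and \(\mathcal R\) are constant along the family. Hence \(r\), \(H\) and \(\Phi\) are differentiable along it, and the chain rule gives, for \(X\in\{r,H,\Phi\}\),
\[
\frac{d}{ds}X(L(s))\Big|_{s=1}
=
\sum_{\theta\in\mathcal P_+}\theta\,\frac{\partial X}{\partial\theta}.
\]
Each identity therefore reduces to computing \(X(sL)\) explicitly as a function of \(s\).

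For \(r\), the dominant eigenvalue of \(sL\) is \(s\lambda\), with the same eigenvectors \(u\) and \(v\). So \(r(s)=\log\lambda+\log s\), whose derivative at \(s=1\) is \(1\). The same result can be checked directly from \eqref{GrowthSensitivityFormulae}: the sum equals \(\lambda^{-1}v(L)u=\lambda^{-1}v\lambda u=1\).

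For \(H\), the Markov matrix \eqref{MarkovMatrix} built from \(sL\) is
\[
P(s)=\frac{1}{s\lambda}U^{-1}(sL)U=P.
\]
Thus \(P\), \(\pi\), and every \(q_j\), \(\eta_i\), \(\rho_i\), \(T\) are independent of \(s\). It follows that \(H(s)=H\) is constant, and the derivative is \(0\).

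For \(\Phi\), I would use the identity \(r=H+\Phi\) from \eqref{GrowthEntropyPotentialIdentity}, which gives \(\Phi(s)=\Phi+\log s\) and a derivative of \(1\). Equivalently, \eqref{SensitivityIdentity} applied term by term gives
\[
\sum\theta\,\partial\Phi/\partial\theta=1-0.
\]
As a cross-check, \eqref{PotentialDefinition} shows the same result directly: each \(\log L_{ij}\) gains \(\log s\), and the weights \(\pi_iP_{ij}\) are unchanged and sum to one.

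The only delicate point is justifying the chain-rule step. The partial derivatives \(\partial X/\partial\theta\) are the existing-link sensitivities of Theorem~\ref{thm:ExplicitSensitivity} and Corollary~\ref{cor:PotentialSensitivity}. Those formulas come from a total differential, \eqref{TotalSensitivityFormula}, valid for any existing-link direction \(dL\). Taking the direction \(d\theta=\theta\) for each \(\theta\in\mathcal P_+\), which corresponds to \(dL=L\), gives exactly the sums above, so no separate differentiability argument is needed.

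One could also attempt a direct algebraic verification that \(\sum\theta\,\partial H/\partial\theta=0\) from \eqref{SensitivityM}–\eqref{SensitivityC}. In that check the \(C_j\) and \(D_i\) terms should cancel against \(B\lambda\), using \(\sum_{\mathcal F}C_jq_j\) and the definition of \(A_j\). I would mention this only as an optional consistency check, since the invariance of \(P\) makes it unnecessary.
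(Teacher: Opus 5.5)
Your proposal is correct and follows essentially the same route as the paper: the family \(L(a)=aL\), the invariance \(P(a)=P\) (and hence of \(\pi\) and \(H\)), the shifts \(r\mapsto r+\log a\) and \(\Phi\mapsto\Phi+\log a\) obtained from \(r=H+\Phi\), and the chain rule at \(a=1\). Your additional checks are valid supplements that the paper does not spell out: the direct evaluation \(\lambda^{-1}vLu=1\), the \(\Phi\) check from \eqref{PotentialDefinition}, and grounding the chain-rule step in the total differential \eqref{TotalSensitivityFormula}.
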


\begin{proof}
For \(a>0\), consider the one-parameter family
\[
L(a)=aL.
\]
Each positive demographic parameter
\(\theta\in\mathcal P_{+}\) is therefore transformed into
\[
\theta(a)=a\theta,
\]
whereas zero coefficients remain zero, so the same links remain
present.

The dominant eigenvalue satisfies
\[
\lambda(a)=a\lambda,
\]
and its left and right eigenvectors may be chosen independently of
\(a\).
Consequently,
\[
P(a)
=
\lambda(a)^{-1}U^{-1}L(a)U
=
(a\lambda)^{-1}U^{-1}(aL)U
=
P.
\]
Hence the associated Markov chain and its stationary
distribution are unchanged, and therefore
\[
H(L(a))=H(L).
\]
Moreover,
\[
r(L(a))
=
\log\lambda(a)
=
r(L)+\log a,
\]
and, using \(r=H+\Phi\),
\[
\Phi(L(a))
=
\Phi(L)+\log a.
\]

Differentiating the three identities
\(r(L(a))=r(L)+\log a\),
\(H(L(a))=H(L)\), and
\(\Phi(L(a))=\Phi(L)+\log a\)
with respect to the scaling parameter \(a\), and applying the chain
rule, gives at \(a=1\)
\[
\left.\frac{d}{da}X(L(a))\right|_{a=1}
=
\sum_{\theta\in\mathcal P_{+}}
\frac{\partial X}{\partial\theta}
\left.\frac{d\theta(a)}{da}\right|_{a=1}
=
\sum_{\theta\in\mathcal P_{+}}
\theta\frac{\partial X}{\partial\theta},
\]
for \(X\in\{r,H,\Phi\}\).
Since
\[
\left.
\frac{d}{da}r(L(a))
\right|_{a=1}=1,
\qquad
\left.
\frac{d}{da}H(L(a))
\right|_{a=1}=0,
\qquad
\left.
\frac{d}{da}\Phi(L(a))
\right|_{a=1}=1,
\]
equation~\eqref{UniformScalingIdentities} follows.
\end{proof}



\section{Computational implementation}

The formulas of Sections 2--4 were implemented in Wolfram Mathematica.
For matrices containing only a downstream post-reproductive block of the
type described in Section 2, the full matrix was retained; the zero
long-run probability of the downstream block makes its contribution to
the long-run quantities and to changes within that block vanish.
Otherwise the matrix was assumed to be irreducible.

For each Lefkovitch matrix \(L\), the dominant eigenvalue \(\lambda\)
and its normalised left and right eigenvectors were first computed,
with \(vu=1\), and \(r=\log\lambda\). The demographic coefficients
\(m_j\), \(b_j\), and \(c_j\), together with the index sets
\(\mathcal F\) and \(\mathcal R\) for positive coefficients, were then
extracted from
\eqref{eq:Lefkovitch}.

The quantities
\[
R_j,\quad q_j,\quad Q_i,\quad
\eta_i,\quad \rho_i,\quad T,\quad H,\quad \Phi
\]
and the transition--retention decompositions of \(H\) and \(\Phi\)
were evaluated directly from the formulas of Sections~2 and~3.
The identity
\[
r=H+\Phi
\]
was used as an internal numerical check.

For changes to existing links, the quantities \(C_j\), \(D_i\),
\(A_j\), and \(B\) were evaluated for the positive coefficients,
and the sensitivities of \(r\), \(H\), and \(\Phi\) were computed from
\eqref{GrowthSensitivityFormulae},
\eqref{SensitivityM}--\eqref{SensitivityC}, and
\[
\frac{\partial\Phi}{\partial\theta}
=
\frac{\partial r}{\partial\theta}
-
\frac{\partial H}{\partial\theta}.
\]
Zero fertility or retention coefficients were not inserted into the
existing-link sensitivity formulas. When a new demographic link was
created
from a zero coefficient, the one-sided asymptotic expressions of
equations~\eqref{BoundaryGrowthFertility}--
\eqref{BoundarySingularSensitivity} were used instead.

\begingroup
\color{black}
For representative new-link calculations, these asymptotic laws were
also checked numerically by one-sided changes, verifying that
\[
\frac{H(\varepsilon)-H(0)}{-\varepsilon\log\varepsilon}\to\kappa,
\qquad
\frac{\Phi(\varepsilon)-\Phi(0)}{\varepsilon\log\varepsilon}\to\kappa,
\]
where \(\kappa\) is the appropriate coefficient
\(\kappa_k^{(m)}\) or \(\kappa_k^{(c)}\) defined above for fertility
or retention link creation, respectively, while
$\bigl(r(\varepsilon)-r(0)\bigr)/\varepsilon$ converges to the
finite coefficient in \eqref{BoundaryGrowthFertility} or
\eqref{BoundaryGrowthRetention}.
\endgroup

For positive demographic coefficients, the reported parameter-scaled
sensitivities were
\[
e_\theta^{(X)}
=
\theta\frac{\partial X}{\partial\theta},
\qquad
X\in\{r,H,\Phi\}.
\]
For \(X=r\), this is also the conventional elasticity of the dominant
eigenvalue. Equation~\eqref{UniformScalingIdentities} was used as an
additional exact check on the computed
parameter-scaled sensitivities. Finally, all analytical sensitivities
for existing links used in the empirical examples were
independently checked against finite differences, with agreement to
the numerical precision reported in the tables and figures.

\endgroup

\bibliographystyle{abbrv}
\bibliography{BibloH2}

\end{document}